\documentclass[11pt]{article}

\usepackage{chao}
\usepackage{graphicx}
\usepackage{caption}
\usepackage{subcaption}
\newcommand{\sumdeg}{\text{\sf sum-deg}}

\newcommand{\cD}{\mathcal{D}}

\newcommand{\mypara}[1]{\medskip \noindent {\bf #1}}

\begin{document}

\title{Computing minimum cuts in hypergraphs\thanks{Department of 
Computer Science, University of Illinois, Urbana, IL 61801. 
\texttt{\{chekuri,chaoxu3\}@illinois.edu}. Work on this paper supported in part by NSF 
grant CCF-1526799.}}
\author{Chandra Chekuri \and Chao Xu}
\date{}

\maketitle

\begin{abstract}
  We study algorithmic and structural aspects of connectivity in
  hypergraphs. Given a hypergraph $H=(V,E)$ with $n = |V|$, $m = |E|$
  and $p = \sum_{e \in E} |e|$ the fastest known algorithm to compute a
  global minimum cut in $H$ runs in $O(np)$ time for the uncapacitated
  case, and in $O(np + n^2 \log n)$ time for the capacitated case. We show
  the following new results.
  \begin{itemize}
  \item Given an uncapacitated hypergraph $H$ and an integer $k$ we
    describe an algorithm that runs in $O(p)$ time to find a subhypergraph
    $H'$ with sum of degrees $O(kn)$ that preserves all
    edge-connectivities up to $k$ (a $k$-sparsifier). This generalizes
    the corresponding result of Nagamochi and Ibaraki from graphs to
    hypergraphs. Using this sparsification we obtain an $O(p +
    \lambda n^2)$ time algorithm for computing a global minimum cut of
    $H$ where $\lambda$ is the minimum cut value.
  \item We generalize Matula's argument for graphs to hypergraphs and
    obtain a $(2+\e)$-approximation to the global minimum cut in a
    capacitated hypergraph in $O(\frac{1}{\e} (p \log n + n \log^2 n))$
    time, and in in $O(p/\e)$ time for uncapacitated hypergraphs.
  \item We show that a hypercactus representation of {\em all} the
    global minimum cuts of a capacitated hypergraph can be computed in
    $O(np + n^2 \log n)$ time and $O(p)$ space.
  \end{itemize}
  Our results build upon properties of vertex orderings that were
  inspired by the maximum adjacency ordering for graphs due to
  Nagamochi and Ibaraki.  Unlike graphs we observe that there are
  several different orderings for hypergraphs which yield different
  insights.
\end{abstract}


\section{Introduction}
\label{sec:intro}
We consider algorithmic and structural aspects of connectivity in
hypergraphs. A hypergraph $H=(V,E)$ consists of a finite vertex set
$V$ and a set of hyperedges $E$ where each edge $e$ is a subset of
vertices. Undirected loopless graphs are a special case of hypergraphs
where all edges are sets of size two. For the most part we use $n$ to
denote the number of vertices $|V|$, $m$ to denote the number of edges
$|E|$, and $p$ to denote $\sum_{e \in E} |e|$. Note that $p = \sum_{v
  \in V} \mbox{deg}(v)$ where $\mbox{deg}(v)$ is the degree of $v$
(the number of hyperedges that contain $v$).  We observe that $p$ is
the natural representation size of a connected hypergraph, and $p$ is the number of
edges in the standard representation of $H$ as a bipartite graph $G_H
= (V \cup E, F)$ with $F = \{ (v,e) | v \in V, e \in E, v \in e\}$.  A
number of results on hypergraphs assume that the maximum edge size,
often called the {\em rank}, is a fixed constant $r$. In this paper
our focus is on general hypergraphs without assumptions on $r$.

Hypergraphs arise in a number of settings in both theory and practice.
Some of the most basic algorithmic questions regarding hypergraphs
have to do with connectivity and cuts.  Given a hypergraph $H=(V,E)$,
let $\delta_H(S)$ denote the set of edges that intersect both $S$ and
$V \setminus S$.  It is well-known that the set function
$|\delta_H(S)|$ defines a symmetric submodular set function over the
ground set $V$.  The connectivity (or the global minimum cut value) of
a hypergraph $H$, denoted by $\lambda(H)$, is defined as
$\min_{\emptyset \subsetneq S \subsetneq V} |\delta_H(S)|$;
equivalently, it is the minimum number of edges that need to be
removed such that $H$ is disconnected.  For distinct nodes $s,t \in V$
we denote by $\lambda_H(s,t)$ (or some times by $\lambda(s,t;H)$) the
edge-connectivity between $s$ and $t$ in $H$ which is defined as
$\min_{S \subset V, |S \cap \{s,t\}| = 1} |\delta_H(S)|$.  These
definitions readily generalize to capacitated hypergraphs where each
edge $e \in E$ has a non-negative capacity $c(e)$ associated with it.
In this paper we study algorithmic and structural questions that arise
in computing $\lambda(H)$. In the sequel we use the term mincut to
refer to the global mincut.

Algorithms for mincuts and $s$-$t$ mincuts in graphs have been
extensively studied. Traditional algorithms for mincut were based on
computing a sequence of $(n-1)$ $s$-$t$ mincuts; $s$-$t$ mincuts are
most efficiently computed via network flow although one can also
compute them via submodular function minimization. The first algorithm
for finding a mincut in an {\em undirected} graph that avoided the use
of flows was due to Nagamochi and Ibaraki \cite{Nagamochi1992}.  They
devised a surprising and influential algorithm based on
maximum-adjacency orderings (MA-ordering) which is an ordering of the
vertices based on a simple greedy rule. An MA-ordering can be computed
in $O(m)$ time for uncapacitated graphs and in $O(m + n\log n)$ time
for capacitated graphs. It has the following interesting property: if
$s$ and $t$ are the last two vertices in the ordering then $\{t\}$ is
an $s$-$t$ mincut. This yields a simple $O(mn + n^2 \log n)$ time
algorithm \cite{StoerW} for computing a mincut in a capacitated graph
and is currently the asymptotically fastest {\em deterministic}
algorithm. MA-orderings have other important structural properties
which lead to several algorithmic and structural results --- many of
these are outlined in \cite{NagamochiI-book}.  Karger devised another
highly influential technique based on random
contractions~\cite{Karger1995} which led to a randomized $O(n^2 \log^3
n)$-time Monte Carlo algorithm for computing a mincut in capacitated
graphs \cite{KargerS96}. Subsequently, using sampling techniques for
cut-sparsification and tree packings, Karger devised a randomized $O(m
\log^3 n)$ time Monte Carlo algorithm \cite{Karger00}. More recently
Kawarabayashi and Thorup \cite{KawarabayashiT15} devised a
deterministic $O(m \log^{12} n)$ time algorithm for {\em simple}
uncapacitated graphs.

What about hypergraphs? A simple and well-known reduction shows that
$\lambda_H(s,t)$ can be computed via $s$-$t$ network flow in the
vertex capacitated bipartite graph $G_H$ associated with $H$. Thus,
using $(n-1)$ network flows one can compute $\lambda(H)$. However,
Queyranne~\cite{Queyranne} showed that the Nagamochi-Ibaraki ordering
approach generalizes to find the mincut of an arbitrary symmetric
submodular function\footnote{For a submodular function $f:2^V
  \rightarrow \mathbb{R}$ the mincut is defined naturally as
  $\min_{\emptyset \subsetneq S \subsetneq V} f(S)$.}. A
specialization of the approach of Queyranne with a careful
implementation leads to a deterministic $O(np + n^2 \log n)$-time
algorithm for capacitated hypergraphs and an $O(np)$-time algorithm
for uncapacitated hypergraphs.  Two other algorithms achieving the
same run-time were obtained by Klimmek and Wagner \cite{Klimmek1996}
and Mak and Wong \cite{MakW00}. Both these algorithms are based on the
Nagamochi and Ibaraki ordering approach. Surprisingly, the orderings
used by these three algorithms can be different for the same
hypergraph even though they are identical for graphs\footnote{This
  observation does not appear to have been explicitly noted in the
  literature.}! We will later show that we can exploit their different
properties in our algorithms.

Apart from the above mentioned results, very little else is known in
the algorithms literature on mincut and related problems in
hypergraphs despite several applications, connections, and theoretical
interest. Recent work has addressed streaming and sketching algorithms
when the rank is small \cite{KoganK15,GuhaMT15}. Our initial
motivation to address these algorithmic questions came from the study
of algorithms for element-connectivity and related problems which are
closely related to hypergraphs --- we refer the reader to the recent
survey \cite{Chekuri-ec-survey15}. In this paper the two main
questions we address are the following.

\begin{itemize}
\item Are there faster (approximation) algorithms for mincut computation
in hypergraphs? 
\item How many distinct mincuts can there be? Can a compact representation
  called the hypercactus that is known to exist \cite{Cheng,FleinerJ} 
  be computed fast?  For graphs it is well-known that there are at most
  $n \choose 2$ mincuts and that there exists a compact $O(n)$-sized data
  structure called the {\em cactus} to represent all of
  them. 
\end{itemize}

\subsection{Overview of Results} 

In this paper we address the preceding questions and obtain several 
new results that we outline below.

\mypara{Sparsification and fast algorithm for small mincuts:} A
$k$-sparsifier of a graph $G=(V,E)$ is a sparse subgraph $G'=(V,E')$
of $G$ that preserves all local connectivities in $G$ up to $k$; that
is $\lambda_{G'}(s,t) \geq \min\{k, \lambda_G(s,t)\}$ for all $s,t \in
V$. Nagamochi and Ibaraki \cite{Nagamochi1992} showed, via
MA-ordering, that a $k$-sparsifier with $O(kn)$ edges can be found in
linear time. In the hypergraph setting, a $k$-sparsifier is a
subhypergraph preserving local connectivity up to $k$. A
$k$-sparsifier with $O(kn)$ edges exists by greedy spanning hypergraph
packing \cite{GuhaMT15}. However, the sum of degrees in the sparsifier
can be $O(kn^2)$. Indeed, any $k$-sparsifier through edge deletion
alone cannot avoid the $n^2$ factor. We consider a more general
operation where we allow {\em trimming} of hyperedges; that is, a
vertex $v \in e$ can be removed from $e$ without $e$ itself being
deleted. Trimming has been used for various connectivity results on
hypergraphs. For example, in studying $k$-partition-connected
hypergraphs, or in extending Edmonds' arborescence packing theorem to
directed hypergraphs \cite{FrankKK2003} (see \cite[Section 7.4.1,
Section 9.4]{Frank2011} for a exposition of the results using the
trimming terminology).

We show that for any hypergraph $H$ on $n$ nodes there is a
$k$-sparsifier $H'$ that preserves all the local connectivities up to
$k$ such that the size of $H'$ in terms of the sum of degrees is
$O(kn)$. In fact the sparsifier has the stronger property that all
cuts are preserved up to $k$: formally, for any $A \subset V$,
$|\delta_{H'}(A)| \ge \min\{k, |\delta_H(A)|\}$.  Moreover such a
sparsifier can be constructed in $O(p)$ time.  This leads to an $O(p +
\lambda n^2)$ time for computing the mincut in an uncapacitated
hypergraph, substantially improving the $O(np)$ time when $\lambda$ is
small and $p$ is large compared to $n$.  Sparsification is of
independent interest and can be used to speed up algorithms for
other cut problems.

\mypara{$(2+\e)$ approximation for global mincut:} Matula
\cite{Matula1993}, building on the properties of MA-ordering, showed
that a $(2+\e)$ approximation for the global mincut of an
uncapacitated graph can be computed in deterministic $O(m/\e)$ time.
The algorithm generalizes to capacitated graphs and runs in
$O(\frac{1}{\e}(m \log n +n\log^2 n))$ time (as mentioned by Karger
\cite{Karger1995}). Although the approximation is less interesting in
light of the randomized $\tilde{O}(m)$ algorithm of Karger
\cite{Karger00}, it is a useful building block that allows one to
deterministically estimate the value of a mincut. For hypergraphs
there was no such approximation known. In fact, the survey
\cite{Chekuri-ec-survey15} observed that a near-linear time randomized
$O(\log n)$-approximation follows from tree packing results, and
raised the question of whether Matula's algorithm can be generalized
to hypergraphs\footnote{We
  use near-linear-time to refer to algorithms that run in time $O(p
  \log^c n)$ for some fixed constant $c$.}.

In this paper we answer the question in the affirmative and obtain
a $(2+\e)$-approximation algorithm for the mincut of a capacitated hypergraph
that runs in near-linear time --- more formally in 
$O(\frac{1}{\e} (p \log n + n \log^2 n))$ time. For a uncapacitated 
hypergraph, the algorithm runs in $O(p/\e)$ time.

\mypara{All mincuts and hypercactus:} Our most technical
contribution is for the problem of finding all the mincuts
in a hypergraph. For any capacitated graph $G$ on $n$ vertices, it is
well-known, originally from the work of Dinitz, Karzanov and Lomonosov
\cite{DinitzKL76}, that there is a compact $O(n)$ sized data
structure, namely a cactus graph, that represents all the mincuts of
$G$.  A cactus is a connected graph in which each edge is in at most
one cycle (can be interpreted as a tree of cycles).  As a byproduct
one also obtains the fact that there are at most $n \choose 2$
distinct mincuts in a graph; Karger's contraction algorithm gives a
very different proof. After a sequence of improvements, there exist
deterministic algorithms to compute a cactus representation of the
mincuts of a graph in $O(mn+n^2\log n)$ time \cite{Nagamochi2003} or
in $O(nm\log (n^2/m))$-time \cite{Fleischer,Gabow2016}.  For
uncapacitated graphs, there is an $O(m+\lambda^2n \log(n/\lambda))$-time
algorithm \cite{Gabow2016}. There is also a
Monte Carlo algorithm that runs in $\tilde{O}(m)$ time \cite{KargerP}
building on the randomized near-linear time algorithm of
Karger~\cite{Karger00}. In effect, the time to compute the cactus
representation is the same as the time to compute the global mincut!
We note, however, that all the algorithms are fairly complicated, in
particular the deterministic algorithms.

The situation for hypergraphs is not as straight forward. First, how
many distinct mincuts can there be? Consider the example of a
hypergraph $H=(V,E)$ with $n$ nodes and a single hyperedge containing
all the nodes.  Then it is clear that every $S \subset V$ with $1 \le
|S| < |V|$ defines a mincut and hence there are exponentially
many. However, all of them correspond to the same edge-set. A natural
question that arises is whether the number of distinct mincuts in
terms of the edge-sets is small.  Indeed, one can show that it is at
most $n \choose 2$. However, this fact does not seem to have been
explicitly mentioned in the literature although it was known to some
experts. It is a relatively simple consequence of fundamental
decomposition theorems of Cunningham and Edmonds \cite{CunninghamE},
Fujishige \cite{Fujishige}, and Cunningham \cite{Cunningham} on
submodular functions from the early 1980s. Cheng, building on
Cunningham's work \cite{Cunningham}, explicitly showed that the
mincuts of a hypergraph admit a compact hypercactus structure.  Later
Fleiner and Jordan \cite{FleinerJ} showed that such a structure exists
for any symmetric submodular function defined over crossing families.
However, these papers were not concerned with algorithmic
considerations.

In this paper we show that the hypercactus representation of the
mincuts of a hypergraph, a compact $O(n)$ sized data structure, can be
computed in $O(np + n^2 \log n)$ time and $O(p)$ space. This matches
the time to compute a single mincut. The known algorithms for cactus
construction on graphs are quite involved and directly construct the
cactus.  We take a different approach.  We use the structural theory
developed in \cite{Cunningham,Cheng} to build the canonical
decomposition of a hypergraph which then allows one to build a
hypercactus easily. The algorithmic step needed for constructing the
canonical decomposition is conceptually simpler and relies on an
efficient algorithm for finding a non-trivial mincut (one in which
both sides have at least two nodes) in a hypergraph $H$ if there is
one. Our main technical contribution is to show that there is an
algorithm for finding a slight weakening of this problem that runs in
$O(p + n \log n)$ time. Interestingly, we establish this via the
ordering from the paper of \cite{MakW00}.  Our algorithm yields a
conceptually simple algorithm for graphs as well and serves to
highlight the power of the decomposition theory for graphs and submodular
functions~\cite{CunninghamE,Fujishige,Cunningham}. 

\subsection{Other Related Work} In a recent work Kogan and
Krauthgamer~\cite{KoganK15} examined the properties of random
contraction algorithm of Karger when applied to hypergraphs. They
showed that if the rank of the hypergraph is $r$ then the number of
$\alpha$-mincuts for $\alpha \ge 1$ is at most $O(2^{\alpha r}
n^{2\alpha})$ which is a substantial improvement over a naive analysis
that would give a bound of $O(n^{r \alpha})$. The exponential
dependence on $r$ is necessary. They also showed cut-sparsification
results ala Benczur and Karger's result for graphs
\cite{BenczurK15}. In particular, given a $n$-vertex capacitated hypergraph
$H=(V,E)$ of rank $r$ they show that there is a
capacitated hypergraph $H'=(V,E')$ with $O(\frac{n}{\e^2}(r + \log
n))$ edges such that every cut capacity in $H$ is preserved to within
a $(1\pm \e)$ factor in $H'$. Aissi et al.\ \cite{AissiMMQ15}
considered parametric mincuts in graphs and hypergraphs of fixed rank
and obtained polynomial bounds on the number of distinct mincuts.

Hypergraph cuts have also been studied in the context of $k$-way cuts.
Here the goal is to partition the vertex set $V$ into $k$ non-empty
sets so as to minimize the number of hyperedges crossing the
partition. For $k\le 3$ a polynomial time algorithm is known
\cite{Xiao10} while the complexity is unknown for fixed $k \ge 4$. The
problem is NP-Complete when $k$ is part of the input even for graphs
\cite{GoldschmidtH94}. Fukunaga~\cite{Fukunaga13} obtained a
polynomial-time algorithm for $k$-way cut when $k$ and the rank $r$
are fixed; this generalizes the result the polynomial-time algorithm
for graphs \cite{GoldschmidtH94,Thorup08}.  Karger's contraction
algorithm also yields a randomized algorithm when $k$ and the rank $r$
are fixed.  When $k$ is part of the input, $k$-way cut in graphs
admits a $2(1-1/k)$-approximation \cite{SaranV95}. This immediately
yields a $2r(1-1/k)$-approximation for hypergraphs. If $r$ is not
fixed and $k$ is part of the input, it was recently shown
\cite{ChekuriL15} that the approximability of the $k$-way cut problem
is related to that of the $k$-densest subgraph problem.

Hypergraph cuts arise in several other contexts with terminals such as
the $s$-$t$ cut problem or its generalizations such as multi-terminal
cut problem and the multicut problem. In some of these problems one
can reduce the hypergraph cut problem to a node-capacitated undirected
graph cut problem and vice-versa.

\section{Preliminaries}
\label{sec:prelim}
A hypergraph $H=(V,E)$ is capacitated if there is a non-negative edge
capacity function $c:E\to \R_+$ associated with it. If all capacities
are $1$ we call the hypergraph uncapacitated; we allow multiple copies
of an edge in the uncapacitated case.  A cut $(S,V-S)$ is a
bipartition of the vertices, where $S$ and $V-S$ are both
non-empty. We will abuse the notation and call a set $S$ a cut to mean
the cut $(S,V-S)$.  Two sets $A$ and $B$ \emph{cross} if $A\cap B$,
$A\setminus B$ and $B\setminus A$ are all non-empty. For $S \subseteq
V$, $\delta_H(S)$ is defined to be the set of all edges in $E(H)$ that
have an endpoint in both $S$ and $V-S$.  We will drop $H$ from the
notation if the hypergraph is clear from the context. The capacity of
the cut $S$, denoted by $c(S)$, is defined to be $\sum_{ e\in
  \delta(S)} c(e)$. $\lambda(H)$ is the connectivity of $H$ and is
defined as $\min_{\emptyset \subsetneq S \subsetneq V} c(S)$.  A cut $S$
is a mincut of $H$ if $c(S) = \lambda(H)$.  A set of edges $F$ is an
edge-cut-set if $F=\delta(S)$ for some cut $S$. A set of edges is a
min edge-cut-set if $F=\delta(S)$, where $S$ is a mincut. For distinct
vertices $s,t \in V(H)$, an $s$-$t$ cut is a cut $S$ such that $|S
\cap \{s,t\}| = 1$.  $\lambda(s,t;H)$ is the value of the $s$-$t$
mincut(minimum $s$-$t$ cut) in $H$. $\sumdeg(H)$ is defined as
$\sum_{e\in E} |e|$ in the uncapacitated case and as
$\sum_{e\in E} |e| c(e)$ in the capacitated case.

For pairwise disjoint vertex subsets $A_1,\ldots,A_k$,
$E(A_1,\ldots,A_k;H) = \{e \mid e \cap A_i \neq \emptyset, 1 \le i \le
k\}$ is the set of edges that have an end point in each of the sets
$A_1,\ldots,A_k$.  $d(A_1,\ldots,A_k;H) = \sum_{e\in
  E(A_1,\ldots,A_k;H)} c(e)$ denotes the total capacity of the edges
in $E(A_1,\ldots,A_k;H)$. A related quantity for two disjoint sets $A$
and $B$ is $d'(A,B;H) = \sum_{e\in E(A,B;H), e\subset A\cup B} c(e)$
where only edges completely contained in $A \cup B$ are considered.
As before, if $H$ is clear from the context we drop it from the notation.

Removing a vertex $v \in e$ from $e$ is called {\em trimming} $e$~\cite{Frank2011}. A hypergraph $H'=(V',E')$ is a
\emph{subhypergraph} of $H=(V,E)$ if $V'\subset V$ and there is a
injection $\phi: E\to E'$ where $\phi(e) \subseteq e$. Thus a
subhypergraph of $H$ is obtained be deleting vertices and edges and
trimming edges\footnote{The standard definition of
    subhypergraph does not allow trimming, but our definition is
    natural for sparsification.}.

For simplicity, given hypergraph $H=(V,E)$, we use $n$ as the number
of vertices, $m$ as the number of edges, and $p = \sum_{e\in E} |e|$
as the sum of degrees. 

\mypara{Equivalent digraph:} $s$-$t$ mincut in a hypergraph $H$ can be
computed via an $s$-$t$ maximum flow in an associated capacitated digraph
(see \cite{Lawler73}) $\vec{H}=(\vec{V},\vec{E})$ that we call the
equivalent digraph. $\vec{H}=(\vec{V},\vec{E})$ is defined as follows:

\begin{enumerate}
\item $\vec{V}=V\cup E^+\cup E^-$, where $E^+=\{e^+|e\in E\}$ and $E^-=\{e^-|e\in E\}$.
\item If $v\in e$ for $v\in V$ and $e\in E$ then $(v,e^-)$ and $(e^+,v)$ are in $\vec{E}$ with infinite capacity.
\item For each $e\in E$, $(e^-,e^+)\in \vec{E}$ has capacity equal to $c(e)$. 
\end{enumerate}

For any pair $s,t \in V(H)$, there is bijection between the finite capacity
$s$-$t$ cuts in $\vec{H}$ and $s$-$t$ cuts in $H$. We omit further details of
this simple fact.

\mypara{Cactus and hypercactus:} A \emph{cactus} is a graph in which
every edge is in at most one cycle. A \emph{hypercactus} is a
hypergraph obtained by a sequence of \emph{hyperedge insertions}
starting from a cactus. A hyperedge insertion is defined as follows. A
vertex $v$ in a hypergraph with degree at least $3$ and only incident
to edges of rank $2$, say $vv_1,\ldots,vv_k$ is called a $v$-star. A
hyperedge insertion replaces a $v$-star by deleting $v$, adding new
vertices $x_1,x_2,\ldots, x_k$, adding new edges
$\set{x_1,v_1},\set{x_2,v_2},\ldots,\set{x_k,v_k}$ and a new hyperedge
$\set{x_1,x_2,\ldots,x_k}$. See \autoref{fig:cactus} for examples.

\begin{figure}
\begin{center}
\begin{subfigure}[b]{0.4\textwidth}
        \includegraphics[width=\textwidth]{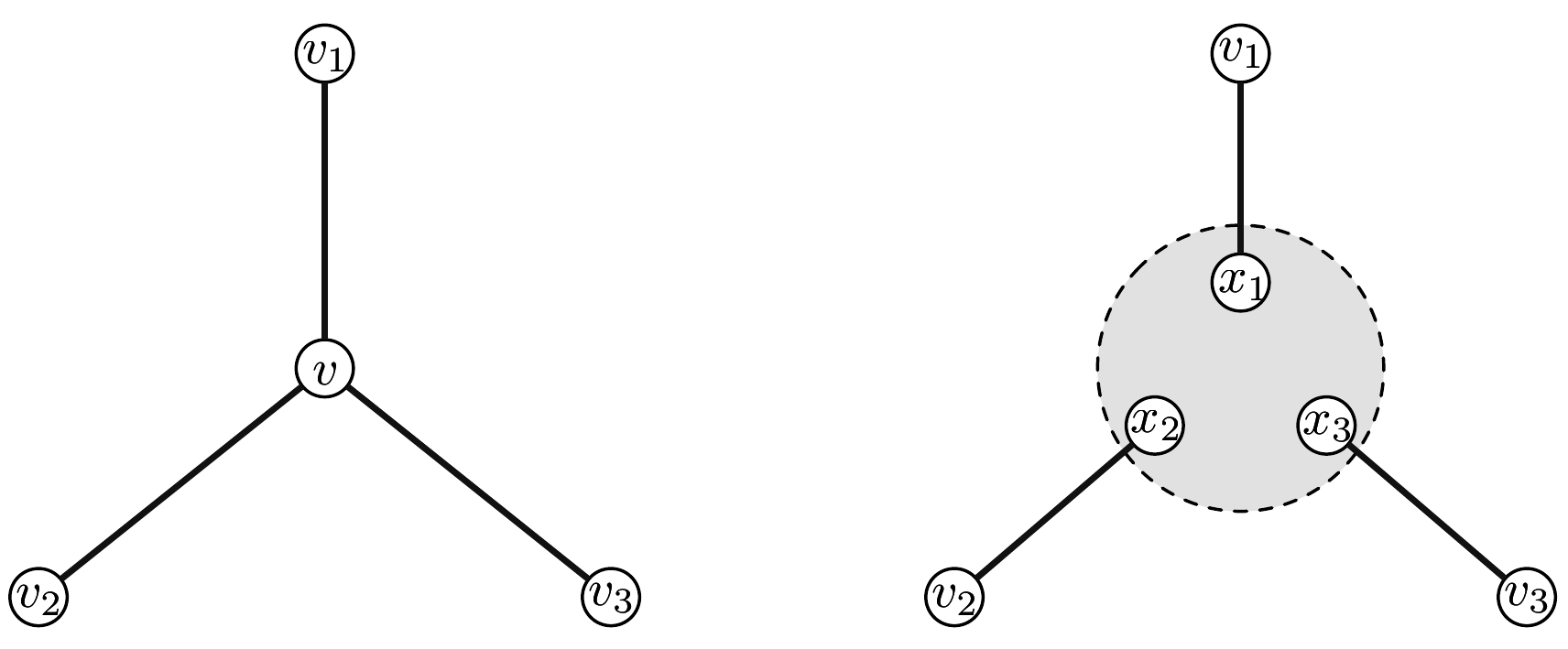}
        \caption{Example of hyperedge insertion operation on vertex $v$}
        \label{fig:hyperedgeinsert}
    \end{subfigure}
    ~ 
    \begin{subfigure}[b]{0.25\textwidth}
        \includegraphics[width=\textwidth]{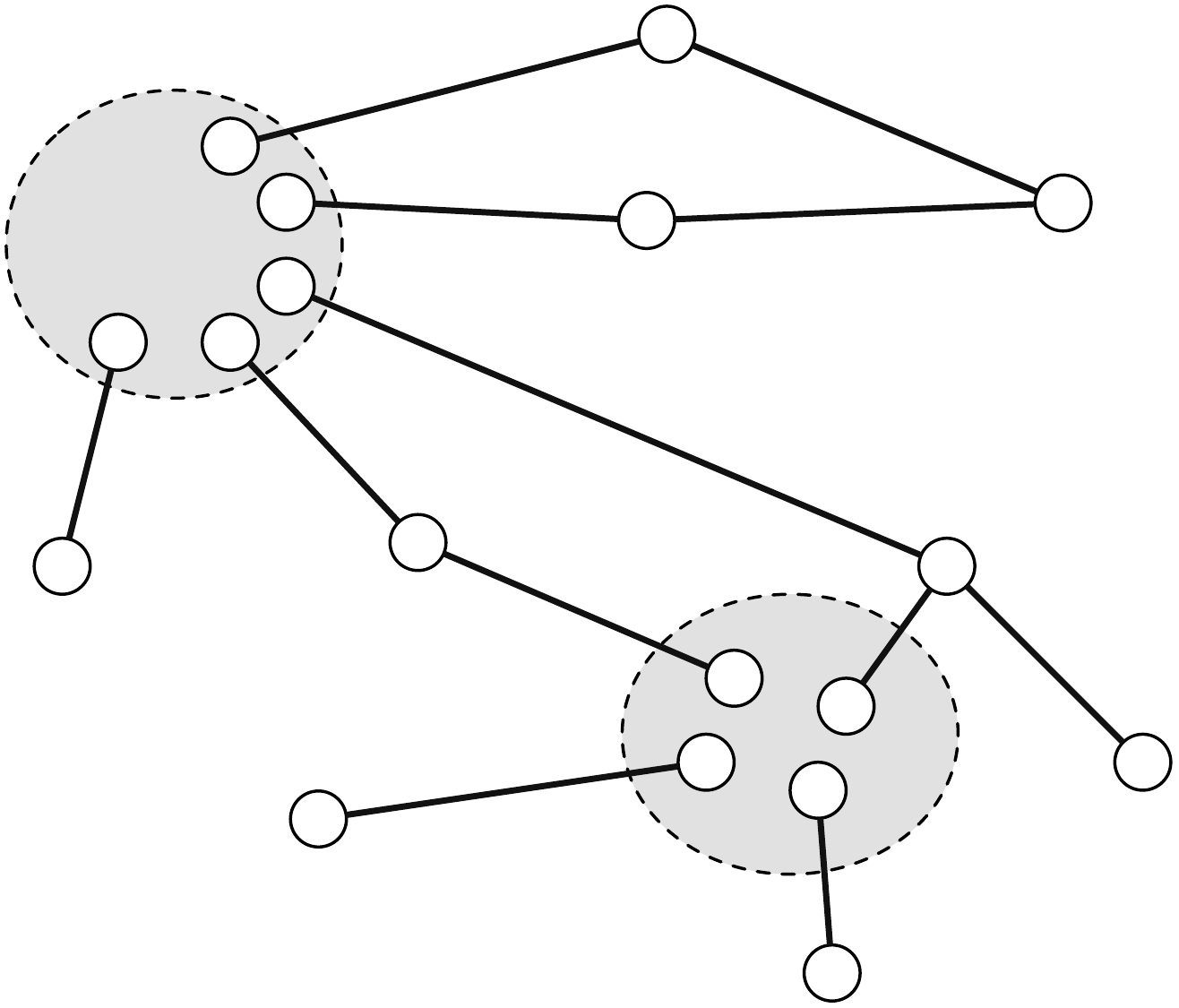}
        \caption{A hypercactus}
        \label{fig:hypercactus}
    \end{subfigure}
\end{center}

\caption{
    Examples of hypercactus. Grey regions are hyperedges. 
}
\label{fig:cactus}
\end{figure}

\subsection{Vertex orderings for  hypergraphs}
We work with several vertex orderings defined for hypergraphs.
Given a hypergraph $H=(V,E)$ and an ordering of the vertices
$v_1,\ldots,v_n$, several other orderings and quantities are induced
by such an ordering. The \emph{head} of an edge $h(e)$, defined as
$v_{\min \set{j|v_j \in e}}$, is the first vertex of $e$ in the
ordering. An ordering of the edges $e_1,\ldots,e_m$ is called \emph{head
  ordering}, if $\min \set{j |v_j\in e_i\} \leq \min \{j |v_j\in
  e_{i+1}}$. An edge $e$ is a \emph{backward edge} of $v$ if $v\in e$ and
$h(e) \neq v$. The head of a backward edge incident to $v$ comes
before $v$ in the vertex order. $V_i=\{v_1,\ldots,v_i\}$ are the first
$i$ vertices in the ordering.

A pair of vertices $(s,t)$ is called a \emph{pendant pair} if
$\set{t}$ is a minimum $s$-$t$ cut. There are three algorithms for
computing a hypergraph mincut following the Nagamochi-Ibaraki approach
of finding a pendant pair, contracting the pair, and recursing. All
three algorithms find a pendant pair by computing an ordering of the
vertices and showing that the last two vertices form a pendant pair.
We describe these orderings below.

\begin{definition}
An ordering of vertices $v_1,\ldots,v_n$ is called 
\begin{enumerate}
\item a \emph{maximum adjacency ordering} or \emph{MA-ordering} if for all $1\leq i < j\leq n$, $d(V_{i-1},v_i) \geq d(V_{i-1},v_j)$.
\item a \emph{tight ordering} if for all $1\leq i < j\leq n$,
$d'(V_{i-1},v_i) \geq d'(V_{i-1},v_j)$.
\item a \emph{Queyranne ordering} if for all $1\leq i < j\leq n$,
$d(V_{i-1},v_i)+d'(V_{i-1},v_i) \geq d(V_{i-1},v_j)+d'(V_{i-1},v_j)$.
\end{enumerate}
\end{definition}

In graphs the three orderings coincide if the starting vertex is the
same and ties are broken in the same way. However, they can be
different in hypergraphs.  As an example, consider a hypergraph with
vertices $a,x,y,z$ and four edges with capacities as follows:
$c(\set{a,x})=4, c(\set{a,y})=3, c(\set{a,x,z})=4$ and
$c(\set{a,y,z})=8$. Capacities can be avoided by creating multiple
copies of an edge.  Consider orderings starting with $a$. It can be
verified that the second vertex has to be $x,y$ and $z$ for tight,
Queyranne, and MA-ordering respectively which shows that they have to
be distinct.

Klimmek and Wagner used the MA-ordering \cite{Klimmek1996}.  Mak and
Wong used the tight ordering\cite{MakW00}. Queyranne defined an
ordering for symmetric submodular functions \cite{Queyranne} which
when specialized to cut functions of hypergraphs is the one we define; we
omit a formal proof of this observation. All three orderings can
be computed in $O(p+n\log n)$ time for capacitated hypergraphs, and in
$O(p)$ time for uncapacitated hypergraphs.  We do not use the
Queyranne ordering in this paper but rely on the other two. We state 
a lemma that summarizes the crucial property that the three orderings
share regarding the mincut between the last two vertices in the ordering.

\begin{lemma}
\label{lem:tightorderlasttwo}
Let $v_1,\ldots,v_n$ be a MA-ordering, a tight ordering or a Queyranne
ordering of a hypergraph, then $\{v_n\}$ is a $v_{n-1}$-$v_n$ mincut
and $\lambda(v_{n-1},v_n) = d'(V_{n-1},v_n) = d(V_{n-1},v_n)$.
\end{lemma}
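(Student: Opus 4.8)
The plan is to handle the easy parts directly and reduce the lemma to a single inequality, and then prove that inequality by a Nagamochi--Ibaraki/Stoer--Wagner style induction adapted to hyperedges. First note that $V_{n-1}=V\setminus\{v_n\}$, so every edge is contained in $V_{n-1}\cup\{v_n\}$ and the containment condition in the definition of $d'$ is vacuous; hence $d'(V_{n-1},v_n)=d(V_{n-1},v_n)$. Also an edge containing $v_n$ lies in $\delta(\{v_n\})$ exactly when it has a second endpoint, and then it automatically meets $V_{n-1}$, so $c(\{v_n\})=d(V_{n-1},v_n)$ as well. Since $\{v_n\}$ is a $v_{n-1}$-$v_n$ cut, this already gives $\lambda(v_{n-1},v_n)\le d(V_{n-1},v_n)=d'(V_{n-1},v_n)$, and the remaining content is the reverse inequality $c(C)\ge c(\{v_n\})$ for every $v_{n-1}$-$v_n$ cut $C$, i.e.\ that $\{v_n\}$ is a minimum $v_{n-1}$-$v_n$ cut. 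The three orderings must be treated separately at intermediate stages because they are governed by different quantities.

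I would carry out the induction in detail for the tight ordering. Fix a $v_{n-1}$-$v_n$ cut $C$ and call $v_j$ \emph{active} if $j\ge 2$ and $v_{j-1},v_j$ lie on opposite sides of $C$; the vertex $v_n$ is active since it is separated from $v_{n-1}$. For $1\le j\le n$ let $w_j(C)=\sum_{e\subseteq V_j,\,e\in\delta(C)}c(e)$ be the capacity of edges lying entirely inside $V_j$ that cross $C$, so that $w_n(C)=c(C)$. The claim, proved by induction along the active vertices in increasing index order, is that $w_j(C)\ge d'(V_{j-1},v_j)$ for every active $v_j$; taking $j=n$ then yields $c(C)=w_n(C)\ge d'(V_{n-1},v_n)=c(\{v_n\})$. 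In the base case, where $v_j$ is the first active vertex, $v_1,\dots,v_{j-1}$ all lie on one side of $C$ and $v_j$ on the other, so an edge inside $V_j$ crosses $C$ iff it contains $v_j$ and meets $V_{j-1}$; hence $w_j(C)=d'(V_{j-1},v_j)$. In the inductive step, let $v_i$ be the active vertex immediately preceding an active $v_j$; then $v_i,v_{i+1},\dots,v_{j-1}$ lie on one side $X$ and $v_j$ on the other side $Y$. Split the edges counted by $d'(V_{j-1},v_j)$ into those that cross $C$ and those that do not. Each crossing one contains $v_j$, is contained in $V_j$, crosses $C$, and since $v_j\notin V_i$ is not counted by $w_i$, so their total capacity is at most $w_j(C)-w_i(C)$. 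Each non-crossing one $e$ has $e\subseteq Y$, hence $e\subseteq (Y\cap V_{i-1})\cup\{v_j\}$ because $v_i,v_{i+1},\dots,v_{j-1}\in X$, with $v_j\in e$ and $e\cap V_{i-1}\ne\emptyset$, so $e$ is counted by $d'(V_{i-1},v_j)$; their total capacity is therefore at most $d'(V_{i-1},v_j)\le d'(V_{i-1},v_i)\le w_i(C)$, using the tight-ordering property and then the induction hypothesis. Adding the two bounds gives $d'(V_{j-1},v_j)\le w_j(C)$, completing the induction.

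For the Queyranne ordering the statement is exactly Queyranne's fact that the last two vertices of his ordering form a pendant pair of the symmetric submodular function $S\mapsto c(S)$, so it follows from \cite{Queyranne}. The MA-ordering case is where I expect the real work to lie: one would like to run the same kind of induction, but unlike $d'$ the quantity $d(V_{i-1},v_i)$ counts backward hyperedges at $v_i$ that may reach outside $V_i$, so there is no restriction of the cut value that simultaneously equals $c(C)$ at the top level and telescopes cleanly between consecutive active vertices --- for instance, counting only crossing edges whose head lies in $V_j$ is natural, but a backward edge at $v_j$ need not cross $C$ and may already have been charged at an earlier prefix. The careful accounting of hyperedges that straddle many prefixes $V_1\subsetneq V_2\subsetneq\cdots\subsetneq V_n$, together with invoking the MA inequality $d(V_{i-1},v_i)\ge d(V_{i-1},v_j)$ at exactly the right moment, is the crux and the place where the hypergraph argument genuinely departs from the graph one; failing a clean unified treatment, this case can be cited from \cite{Klimmek1996}. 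All of the above is written with capacities, so the uncapacitated statement is the special case $c\equiv 1$.
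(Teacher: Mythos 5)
The paper does not prove this lemma; it is stated as a summary of facts imported from \cite{Klimmek1996} (MA-ordering), \cite{MakW00} (tight ordering) and \cite{Queyranne} (Queyranne ordering), with the Queyranne specialization explicitly left unproved. Your proposal therefore supplies a proof where the paper relies on citations, and the parts you do in detail are correct. The opening reduction --- that $d'(V_{n-1},v_n)=d(V_{n-1},v_n)=c(\{v_n\})$ since every edge lies in $V_n$, so the lemma reduces to $\{v_n\}$ being a minimum $v_{n-1}$-$v_n$ cut --- is exactly right. Your Nagamochi--Ibaraki/Stoer--Wagner induction for the tight ordering is sound: the active-vertex bookkeeping, the split of the edges counted by $d'(V_{j-1},v_j)$ into crossing and non-crossing, the bound of the crossing part by $w_j(C)-w_i(C)$ (disjointness from $w_i$ holds because such edges contain $v_j\notin V_i$), and the bound of the non-crossing part via $d'(V_{i-1},v_j)\le d'(V_{i-1},v_i)\le w_i(C)$ all check out. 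Your caution about the MA-ordering case is also well placed and not a defect: the prefix statistic $w_j(C)$ (capacity of edges wholly inside $V_j$ that cross $C$) can be strictly smaller than $d(V_{j-1},v_j)$, so an analogous invariant $w_j(C)\ge d(V_{j-1},v_j)$ would already fail at the base case. For instance take $V=\{v_1,v_2,v_3\}$ with unit-capacity edges $\{v_1,v_2,v_3\}$ and $\{v_1,v_2\}$; this is an MA-ordering, and for the $v_2$-$v_3$ cut $C=\{v_2\}$ one has $w_2(C)=1<2=d(V_1,v_2)$. The repair is to replace $w_j$ by the cut value of $C$ in the induced subhypergraph $H_j$ on $V_j$ whose edges are the restrictions $e\cap V_j$ (the object appearing in the paper's \autoref{lem:connectivityMAorder}); in the base case this quantity equals $d(V_{j-1},v_j)$, it telescopes across consecutive active vertices, and the MA inequality $d(V_{i-1},v_i)\ge d(V_{i-1},v_j)$ plays the role your tight-ordering inequality plays. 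Failing that, citing \cite{Klimmek1996} as you suggest is exactly what the paper itself does.
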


A stronger property holds for the MA-ordering.
\begin{lemma}
\label{lem:connectivityMAorder}
Let $v_1,\ldots,v_n$ be an MA-ordering of hypergraph $H$, then $\lambda(v_{i-1},v_i)\geq d(V_{i-1},v_i)$ for all $1<i\leq n$. 
\end{lemma}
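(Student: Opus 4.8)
The plan is to deduce Lemma~\ref{lem:connectivityMAorder} from Lemma~\ref{lem:tightorderlasttwo} by applying the latter to every prefix of the ordering; this is exactly the route Nagamochi and Ibaraki take for graphs, and the only new issue is that hypergraph restriction involves trimming. Fix $i$ with $1 < i \le n$ and let $H_i$ be the hypergraph on vertex set $V_i$ obtained from $H$ by replacing each edge $e$ with $e \cap V_i$ (adding the capacities of edges that collapse to the same trimmed set) and discarding edges of size at most $1$.

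The first step is to check that $v_1,\ldots,v_i$ is an MA-ordering of $H_i$. The key observation is that for any index $k \le i$ and any vertex $v_\ell$ with $\ell \le i$ we have $d(V_{k-1},v_\ell;H_i) = d(V_{k-1},v_\ell;H)$: since $V_{k-1}\cup\{v_\ell\}\subseteq V_i$, an edge $e$ of $H$ meets $V_{k-1}$ and contains $v_\ell$ if and only if its trimming $e\cap V_i$ does, and capacities are just repackaged by the collapsing. Hence the defining inequalities $d(V_{k-1},v_k;H_i)\ge d(V_{k-1},v_\ell;H_i)$ for $k<\ell\le i$ are precisely the corresponding inequalities in $H$, which hold by assumption.

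Now $v_{i-1}$ and $v_i$ are the last two vertices of this MA-ordering of $H_i$, so Lemma~\ref{lem:tightorderlasttwo} gives $\lambda(v_{i-1},v_i;H_i) = d(V_{i-1},v_i;H_i) = d(V_{i-1},v_i;H)$, the last equality by the first step. It remains to show $\lambda(v_{i-1},v_i;H)\ge \lambda(v_{i-1},v_i;H_i)$, i.e. that deleting $v_{i+1},\ldots,v_n$ and trimming cannot increase this connectivity. Take any $v_{i-1}$-$v_i$ cut $S$ of $H$, say with $v_{i-1}\in S$ and $v_i\notin S$. Then $S\cap V_i$ is a $v_{i-1}$-$v_i$ cut of $H_i$, and every trimmed edge of $H_i$ crossing $S\cap V_i$ arises from some original edge of $H$ crossing $S$; summing capacities gives $c_{H_i}(S\cap V_i)\le c_H(S)$. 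Minimizing over all $v_{i-1}$-$v_i$ cuts $S$ of $H$ yields the inequality, and chaining the three relations gives $\lambda(v_{i-1},v_i;H)\ge d(V_{i-1},v_i;H)$.

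The only real obstacle is the bookkeeping around trimming: one must confirm simultaneously that passing to $H_i$ leaves the local quantities $d(V_{k-1},v_\ell)$ for indices up to $i$ unchanged (so the MA property survives) while it can only shrink cut capacities (so connectivity does not increase). Both facts follow immediately once one tracks which vertices and edge fragments stay inside $V_i$, so the argument is short; all the real content is in Lemma~\ref{lem:tightorderlasttwo}.
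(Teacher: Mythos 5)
Your proof is correct and follows exactly the paper's argument: pass to the restriction $H_i$ of $H$ onto $V_i$, observe that $v_1,\ldots,v_i$ remains an MA-ordering there, apply Lemma~\ref{lem:tightorderlasttwo} to conclude $\lambda(v_{i-1},v_i;H_i)=d(V_{i-1},v_i;H_i)=d(V_{i-1},v_i;H)$, and note that trimming/deletion can only decrease connectivity. You merely spell out the two bookkeeping checks (MA-property survives; cut capacities only shrink) that the paper dispatches with "one can check."
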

\begin{proof}
Let $H_i=(V_i,E_i)$, where $E_i = \set{e\cap V_i |e\in E}$. One can check
that $v_1,\ldots,v_i$ is an MA-ordering of $H_i$, and hence,
\[
\lambda(v_{i-1},v_i;H) \geq \lambda(v_{i-1},v_i;H_i) = d(V_{i-1},v_i;H_i) = d(V_{i-1},v_i;H).
\]
\qed\end{proof}

\section{Sparsification and faster mincut algorithm for small $\lambda$}
\label{sec:sparsification}
This section shows that a well-known sparsification result for graphs
can be generalized to hypergraphs. The hypergraphs in this section are
uncapacitated.

Given an uncapacitated hypergraph $H$ and a non-negative integer $k$,
the goal of sparsification is to find a sparse subhypergraph $H'$ of
$H$ such that $\lambda_{H'}(s,t) \geq \min(k,\lambda_{H}(s,t))$ for
all $s,t\in V$. We call such a subhypergraph a $k$-sparsifier.  It is
known that there exists a subhypergraph with $O(kn)$ edges through
edge deletions~\cite{GuhaMT15}. The fact that such a sparsifier exists
is not hard to prove. One can generalize the forest decomposition
technique for graphs \cite{Nagamochi1992} in a straight-forward
way. However, the sum of degrees of the resulting sparsifier could
still be large. Indeed, there might not exist any $k$-sparsifier with
sum of degree $O(kn)$ through edge deletion alone. Consider the
following example. Let $H=(V,E)$ be a hypergraph on $n$ vertices and
$n/2-1$ edges (assume $n$ is even) where $E=
\set{e_1,\ldots,e_{n/2-1}}$ and $e_i = \set{v_i, v_{n/2},\ldots,v_n}$
for $1 \le i<n/2$. Any connected subhypergraph of $H$ has to contain
all the edges and thus, even for $k=1$, the sum of degrees is
$\Omega(n^2)$.

However, if trimming of edges is allowed, we can prove the following
stronger result.
\begin{theorem}
  \label{thm:sparsification}
  Let $H=(V,E)$ be a hypergraph on $n$ vertices and $m$ edges with
  $\sumdeg(H) = p$. There is a data structure that can be
    created in $O(p)$ time, such that for any given non-negative
    integer $k$, it can return a $k$-sparsifier $H'$ of $H$ in
    $O(\sumdeg(H'))$ time with the property that $\sumdeg(H')=O(kn)$.
\end{theorem}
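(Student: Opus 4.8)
The plan is to follow the Nagamochi--Ibaraki template, replacing the ``spanning-forest decomposition'' of the maximum-adjacency ordering by a layered ``trimming'' that uses \autoref{lem:connectivityMAorder}.

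\textbf{The data structure and the sparsifier.} First I would compute an MA-ordering $v_1,\dots,v_n$ of $H$ in $O(p)$ time, and simultaneously record, for every vertex $v_i$, the list of its backward edges in head order together with their \emph{ranks}, where the rank of a backward edge $e$ at $v_i$ is its position among the backward edges of $v_i$ in head order (ties broken consistently). Concretely, when a vertex $u$ is scanned I activate each not-yet-activated edge $e\ni u$ (so $h(e)=u$), and for each remaining $w\in e$ I increment $w$'s MA-key and declare $e$ to be the $(\text{new key})$-th backward edge of $w$; this touches each incidence once, costs $O(\sum_e|e|)=O(p)$, and is exactly the order in which an ordinary MA-ordering computation already visits these incidences. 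The stored structure is: the ordering, the per-vertex arrays of backward edges in head order, and $h(e)$ for each $e$. Given $k$, the sparsifier $H'$ keeps, for each edge $e$, the vertex $h(e)$ together with every $v\in e$ at which $e$ has rank $\le k$, discarding any edge left with at most one vertex. Since for each $v_i$ only the first $\min(k,d_i)$ entries of its array are read (where $d_i$ is the number of backward edges of $v_i$), $H'$ can be emitted in $O(\sumdeg(H'))$ time by grouping the surviving incidences by edge and appending heads.

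\textbf{The size bound (routine).} Each vertex keeps at most $\min(k,d_i)\le k$ backward incidences, so at most $kn$ backward incidences of $H$ survive in $H'$. Every surviving edge contains at least one surviving backward incidence, hence at most $kn$ edges survive, contributing at most $kn$ head-incidences. Since the trimmed copy $e'$ of a surviving edge satisfies $|e'| = 1 + (\text{surviving backward incidences of }e)$ and $h(e')=h(e)$, we get $\sumdeg(H') = \#\{\text{surviving edges}\} + \#\{\text{surviving backward incidences}\} \le 2kn = O(kn)$.

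\textbf{Correctness (the main obstacle).} I would prove the stronger statement that $|\delta_{H'}(A)|\ge\min(k,|\delta_H(A)|)$ for every $\emptyset\subsetneq A\subsetneq V$. For $e\in\delta_H(A)$ the trimmed copy $e'$ always contains $h(e)$, so $e'\in\delta_{H'}(A)$ iff $e'$ also contains some vertex of $e$ on the side of $A$ opposite to $h(e)$; letting $\rho(e)$ be the least rank of such a far-side vertex, we have $e\in\delta_{H'}(A)\iff\rho(e)\le k$. Thus it suffices to show that $\{\rho(e):e\in\delta_H(A)\}$ is an initial segment $\{1,2,\dots,M\}$ of the positive integers: then if $M\le k$ all of $\delta_H(A)$ survives, and if $M>k$ the values $1,\dots,k$ all occur, so at least $k$ edges survive. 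This ``no internal gaps'' property is the hypergraph analogue of the graph fact that a spanning forest crosses every cut the whole graph crosses, now spread over the ranks $1,2,\dots$. I would prove it by induction on $n$: writing $A=S$ with (say) $v_n\in V\setminus S$, split $\delta_H(S)$ into the edges whose far side still meets $V_{n-1}$ --- handled by applying the inductive hypothesis to the MA-ordering $v_1,\dots,v_{n-1}$ of the induced subhypergraph $H_{n-1}$ (as in the proof of \autoref{lem:connectivityMAorder}), using that ranks are unchanged under restriction to a prefix and that $\rho$ can only increase under this restriction --- and the edges whose only far-side vertex is $v_n$, whose $\rho$ is just the rank at $v_n$, which I would control against the MA-key $d(V_{n-1},v_n)$, bounded below by $|\delta_H(S)|$ via \autoref{lem:tightorderlasttwo}/\autoref{lem:connectivityMAorder} whenever $S$ separates $v_{n-1},v_n$. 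The case in which $v_{n-1}$ and $v_n$ lie on the same side of $S$ is the one I expect to be hardest, since contracting the last two vertices does not preserve an MA-ordering, so one must argue directly from the local structure at $v_n$ (or peel off $v_1$ instead). In all cases the crux is this no-gaps property, which does not transfer verbatim from graphs precisely because a single hyperedge's incidences occupy many ranks and its vertices can straddle the cut.
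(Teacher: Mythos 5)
Your construction is exactly the one the paper uses (MA-ordering, head ordering, keep the first $k$ backward edges per vertex), and your size bound $\sumdeg(H')\le 2kn$ and the $O(p)$-preprocessing / $O(kn)$-retrieval data structure are the same as in the paper. The divergence is entirely in the correctness argument, which is where, as you say, ``the main obstacle'' lies, and there I think there is a genuine gap.

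First, your ``no internal gaps'' claim --- that $\{\rho(e):e\in\delta_H(A)\}$ is an initial segment $\{1,\dots,M\}$ --- is \emph{strictly stronger} than what the theorem needs ($|\delta_{H_k}(A)|\ge\min(k,|\delta_H(A)|)$ for all $k$). For instance the multiset of $\rho$-values $\{1,1,3\}$ satisfies $|\{e:\rho(e)\le k\}|\ge\min(k,3)$ for every $k$ yet has a gap at $2$; so a correct proof of the theorem does not automatically yield your claim. This means you have taken on a harder burden than necessary, and you haven't discharged it. Second, the induction on $n$ you sketch does not close even modulo the hard case you flag. When you restrict to $H_{n-1}$, you correctly observe that per-vertex ranks are preserved and that $\rho$ can only increase (so $\rho_H(e)\le\rho_{H_{n-1}}(e\cap V_{n-1})$). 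But the inductive hypothesis gives you that $\{\rho_{H_{n-1}}(\cdot)\}$ over the Type~I edges is an initial segment; since $\rho_H$ can be \emph{strictly smaller} than $\rho_{H_{n-1}}$ whenever $v_n$ achieves the minimum rank, the multiset of $\rho_H$-values over Type~I edges is only coordinatewise dominated by an initial segment, which does not imply it is one (e.g.\ $\{1,2,3\}\mapsto\{1,1,3\}$). So the reduction to the prefix loses exactly the structure you are trying to propagate, independently of the Type~II bookkeeping and of the same-side case you already flagged as hard. And as an aside, your ``bounded below by $|\delta_H(S)|$'' for $d(V_{n-1},v_n)$ has the inequality reversed: \autoref{lem:tightorderlasttwo} gives $|\delta_H(S)|\ge\lambda(v_{n-1},v_n)=d(V_{n-1},v_n)$ when $S$ separates $v_{n-1}$ from $v_n$.

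For contrast, the paper does not prove any initial-segment property. It first shows (\autoref{lem:sparseMAordering}) that $v_1,\dots,v_n$ remains an MA-ordering of $H_k$, then proves the key inequality (\autoref{lem:maorderbound}) $|\gamma(V_{i-1})\cap\delta(A)|\ge d(V_{i-1},v_j)$ for $v_i\in A$, $v_j\notin A$, $i<j$, by a \emph{lexicographic} induction on $(i,j)$ (peeling off $v_{i-1}$, not $v_n$, and using a small set-theoretic helper, \autoref{lem:setcancel}). Finally, \autoref{thm:sparsification-cuts} is proved by induction on $k$: assuming $|\delta_{H_k}(A)|<k\le|\delta_H(A)|$, it locates an edge trimmed at some $v_j$, deduces $d(V_{j-1},v_j;H_k)=k$ with all of $D_k(v_j)$ having heads at or before $h(e)$, and applies \autoref{lem:maorderbound} inside $H_k$ to reach a contradiction. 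So the load-bearing induction in the paper is on the pair $(i,j)$ and then on $k$, not on $n$, and it targets a weaker, directly useful inequality rather than a gap-free rank spectrum. If you want to salvage your route, you would either need to prove the initial-segment claim outright (which I do not see how to do, and which your own restriction step undermines), or drop back to the weaker $\min(k,\cdot)$ inequality --- at which point you are essentially reconstructing the paper's argument and should adopt its $(i,j)$-lexicographic induction rather than peeling $v_n$.
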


Our proof is an adaptation of that of Frank, Ibaraki and Nagamochi
\cite{Frank1993} for the graph version of the sparsification theorem.

Given a hypergraph $H=(V,E)$ consider an MA-ordering $v_1,\ldots,v_n$
and let $e_1,\ldots,e_m$ be the induced head ordering of the
edges. Let $D_k(v)$ to be the first $k$ backward edges of $v$ in the
head ordering, or all the backward edges of $v$ if there are fewer
than $k$ backward edges. For each vertex $v$ and backward edge $e$ of
$v$, we remove $v$ from $e$ if $e\not\in D_k(v)$. The new hypergraph
from this operation is $H_k$. Formally, given $H$ and $k$,
$H_k=(V,E_k)$ is defined as follows. For an edge $e \in E$ let $e'$
denote the edge $\set{v \mid v\in e\in D_k(v) \text{ or } v=h(e)}$;
$E_k$ is defined to be the edge set $\set{e' \mid e \in E, |e'|\geq
  2}$.  It is easy to see that if $j\leq k$, $H_j$ is a subhypergraph
of $H_k$.

We observe that $\sumdeg(H_k) \le 2kn$. Each vertex $v$ is in at most
$k$ backward edges in $H_k$ for a total contribution of at most $kn$
to the sum of degrees, and the remaining contribution of at most $kn$
comes from head of each edge which can be charged to the backward
edges. 

We sketch a data structure that can be created from
  hypergraph $H$ in $O(p)$ time, such that for all $k$, the data
  structure can retrieve $H_k$ in $O(kn)$ time.  First, we compute the
  MA-ordering, which takes $O(p)$ time. Using the MA-ordering, we
  obtain the induced head ordering of the edges, and the head for each
  edge, again in $O(p)$ time; we omit the simple details for this step. For
  each vertex $v$, we can sort all the backedges of $v$ use the head
  ordering in $O(p)$ time as follows: we maintain a queue $Q_v$ for
  each vertex $v$, and inspect the edges one by one in the head
  ordering. When $e$ is inspected, we push $e$ into queue $Q_v$ if $v\in
  e$ and $v$ is not the head of $e$. This completes the preprocessing
  phase for the data structure. To retrieve $H_k$, we maintain a set
  of queues that eventually represent the set of edges $E_k$. For each
  vertex $v$, find the edges in $D_k(v)$, which is exactly the first
  $k$ edges (or all edges if there are fewer than $k$ edges) in
  $Q_v$. For each edge $e\in D_k(v)$, we push $v$ into a queue $Q_e$ (if
  $Q_e$ was not already created, we first create $Q_e$ and push the head
  vertex of $e$ into $Q_e$). At the end of the process, each queue $Q_e$
  contains the vertices of an edge in $E_k$. The running time is
  $O(\sumdeg(H_k)) = O(kn)$ since we only process $D_k(v)$ for each
  $v$.

It remains to show that $H_k$ is a $k$-sparsifier of $H$. In fact, we
will show $H_k$ preserves more than just connectivity, it also
preserves all cuts up to value $k$.  Namely $|\delta_{H_k}(A)|\geq
\min(k, |\delta_H(A)|)$ for all $A\subset V$.

\begin{lemma}
\label{lem:sparseMAordering}
If $v_1,\ldots,v_n$ is an MA-ordering of $H$, and $H_k$ is a
hypergraph obtained from $H$ via the ordering, then
$v_1,\ldots,v_n$ is an MA-ordering of $H_k$.
\end{lemma}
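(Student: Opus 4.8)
The plan is to show that the defining inequality of an MA-ordering is preserved when we pass from $H$ to $H_k$. Recall that $v_1,\ldots,v_n$ is an MA-ordering of $H$ means $d(V_{i-1},v_i;H)\ge d(V_{i-1},v_j;H)$ for all $i<j$. We want the same with $H$ replaced by $H_k$. The key observation I would establish is a \emph{monotonicity under prefixes}: for every index $i$ and every vertex $v_j$ with $j\ge i$, the quantity $d(V_{i-1},v_j;H_k)$ equals $d(V_{i-1},v_j;H)$, or at least the two are close enough in a way that preserves the inequality. More precisely, I would argue that the trimming operation, as defined via the head ordering, never removes $v_j$ from an edge $e$ whose head $h(e)$ lies in $V_{i-1}$ \emph{and} which currently contributes to $d(V_{i-1},v_j)$ — unless that edge is "far down" the head ordering, in which case the symmetry of the MA-rule lets us compare safely.

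Here is the structure I would use. First, I would note that $d(V_{i-1},v_j;H)$ counts exactly the backward edges of $v_j$ (in the head ordering) whose head is among $v_1,\ldots,v_{i-1}$; call this set $B_{i-1}(v_j)$. Because the head ordering is induced by the MA-ordering, \emph{every} edge in $B_{i-1}(v_j)$ precedes in the head ordering \emph{every} edge $e$ with $h(e)=v_\ell$ for $\ell\ge i$. Second, I would observe that $d(V_{i-1},v_i;H)=|B_{i-1}(v_i)|\ge |B_{i-1}(v_j)|=d(V_{i-1},v_j;H)$ by the MA-property. Now for $v_i$: its backward edges counted here are all heads-at-$V_{i-1}$, and there are at most $d(V_{i-1},v_i;H)$ of them; I would need to check these are among the first $k$ backward edges of $v_i$ whenever $d(V_{i-1},v_i;H)\le k$ — but if $d(V_{i-1},v_i;H)>k$ the claim is easier because we only need $d(V_{i-1},v_i;H_k)\ge k\ge d(V_{i-1},v_j;H_k)$, and the latter is at most $d(V_{i-1},v_j;H)\le d(V_{i-1},v_i;H)$ is not directly what I want, so I would instead show $d(V_{i-1},v_i;H_k)=\min(k,d(V_{i-1},v_i;H))$ and $d(V_{i-1},v_j;H_k)\le\min(k,d(V_{i-1},v_j;H))$, and then conclude using $\min(k,a)\ge\min(k,b)$ whenever $a\ge b$.

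So the two facts to nail down are: (1) $d(V_{i-1},v_j;H_k)\le d(V_{i-1},v_j;H)$ for all $j\ge i$, which is immediate since trimming only removes vertices from edges, so it can only destroy the property that $e$ meets both $V_{i-1}$ and $\{v_j\}$; and (2) $d(V_{i-1},v_i;H_k)\ge\min(k,d(V_{i-1},v_i;H))$. For (2), the backward edges of $v_i$ with head in $V_{i-1}$ are, by the head ordering, the \emph{earliest} backward edges of $v_i$ overall (any edge containing $v_i$ as a non-head vertex but with head $v_\ell$, $\ell\ge i$, comes later in the head ordering than any edge with head in $V_{i-1}$); hence if there are at most $k$ of them they all lie in $D_k(v_i)$ and survive in $H_k$, and if there are more than $k$ then at least the first $k$ of them survive. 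Either way $d(V_{i-1},v_i;H_k)\ge\min(k,|B_{i-1}(v_i)|)=\min(k,d(V_{i-1},v_i;H))$. Combining (1), (2), and the MA-property of $H$ gives, for $i<j$,
\[
d(V_{i-1},v_i;H_k)\ge\min(k,d(V_{i-1},v_i;H))\ge\min(k,d(V_{i-1},v_j;H))\ge d(V_{i-1},v_j;H_k),
\]
which is exactly what we need.

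The main obstacle I anticipate is the careful handling of the head ordering: specifically verifying that the backward edges of $v_i$ whose heads lie in $V_{i-1}$ really do occupy the earliest slots among all backward edges of $v_i$, so that they are guaranteed to survive the "keep the first $k$" trimming. This hinges on the compatibility between the vertex order and the induced head ordering — an edge with head $v_\ell$ appears in the head ordering before an edge with head $v_{\ell'}$ whenever $\ell<\ell'$ — and on the tie-breaking among edges sharing a head. I would make this precise with a short lemma that the backward edges of $v$ in $D_k(v)$ are a prefix (in the head ordering) of all backward edges of $v$, and that this prefix includes all backward edges with head in any fixed $V_{i-1}$ whenever their count is at most $k$. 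Once that bookkeeping is in place, the rest is the three-line inequality chain above.
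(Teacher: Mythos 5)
Your proposal is correct and follows essentially the same route as the paper's proof: both arguments hinge on observing that the backward edges of a vertex with head in a prefix $V_{i-1}$ occupy the earliest positions in the head ordering and hence survive the keep-the-first-$k$ trimming, which yields $d(V_{i-1},v_j;H_k)=\min(k,d(V_{i-1},v_j;H))$ for $i\le j$, from which the MA-inequality for $H_k$ follows by monotonicity of $\min(k,\cdot)$. One small slip in the write-up: the ``fact (1)'' you formally state is the weaker bound $d(V_{i-1},v_j;H_k)\le d(V_{i-1},v_j;H)$, but your final chain needs the stronger $d(V_{i-1},v_j;H_k)\le\min(k,d(V_{i-1},v_j;H))$ (the extra $\le k$ is immediate since every vertex has at most $k$ backward edges in $H_k$), which you do correctly announce a paragraph earlier.
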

\begin{proof}
  By construction, $d(V_i,v_j;H_k)\leq \min(k,d(V_i,v_j;H))$ for all
  $i<j$.  Consider the first $\min\{k,d(V_i,v_j;H)\}$ edges incident to
  $v_j$ in the head ordering; $v_j$ is not trimmed from them. Hence
  $d(V_i,v_j;H_k)\geq \min\{k,d(V_i,v_j;H)\}$.

  For all $i\leq j$,
  \[
  d(V_{i-1},v_i;H_k) \geq \min\{ k,d(V_{i-1},v_i;H)\} \geq \min\{ k,d(V_{i-1},v_j;H)\} \geq d(V_{i-1},v_j;H_k).
  \]
This establishes that $v_1,\ldots,v_n$ is an MA-ordering for $H_k$.
\qed\end{proof}

For $X \subseteq V$ we define $\gamma(X) = \{e \mid e \cap X
  \neq \emptyset\}$ to be the set of edges that contain at least one
  vertex from $X$. We need a helper lemma below.
\begin{lemma}
\label{lem:setcancel}
Let $H=(V,E)$ be a hypergraph and $A, B \subseteq V$.
For $u\in B$ and $v\in V$, if $E(u,v)\subset \delta(A)\cap \gamma(B)$, then
\[
|\gamma(B)\cap \delta(A)| \geq |\gamma(B-u)\cap \delta(A)| + |E(u,v)\setminus E(B-u,u,v)|.
\]
\end{lemma}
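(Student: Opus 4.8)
The plan is to unpack both sides of the claimed inequality in terms of the edge multiset $E$ and count carefully. Fix $A,B\subseteq V$, $u\in B$, $v\in V$, and assume $E(u,v)\subseteq \delta(A)\cap\gamma(B)$. The key observation is that $\gamma(B)\cap\delta(A)$ and $\gamma(B-u)\cap\delta(A)$ differ only on edges that ``need'' $u$ to be in $\gamma(B)$ — that is, edges $e$ with $e\cap B=\{u\}$. So I would write
\[
|\gamma(B)\cap\delta(A)| - |\gamma(B-u)\cap\delta(A)| = |\{e\in\delta(A)\ :\ e\cap B=\{u\}\}|,
\]
since every edge counted on the left but not on the left-with-$(B-u)$ is exactly an edge of $\delta(A)$ meeting $B$ only in $u$, and conversely. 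Call this quantity $N$. The goal then reduces to showing $N\ge |E(u,v)\setminus E(B-u,u,v)|$.

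**The counting step.** Now examine $E(u,v)\setminus E(B-u,u,v)$: these are edges $e$ containing both $u$ and $v$ but containing no vertex of $B-u$, i.e. edges with $e\cap B=\{u\}$ and $v\in e$. Every such edge lies in $\delta(A)$ by the hypothesis $E(u,v)\subseteq\delta(A)$, and it has $e\cap B=\{u\}$, so it is counted in $N$. Hence $E(u,v)\setminus E(B-u,u,v)$ is a sub(multi)set of $\{e\in\delta(A):e\cap B=\{u\}\}$, giving $N\ge |E(u,v)\setminus E(B-u,u,v)|$ directly. Combining with the displayed identity yields
\[
|\gamma(B)\cap\delta(A)| \ge |\gamma(B-u)\cap\delta(A)| + |E(u,v)\setminus E(B-u,u,v)|,
\]
which is the lemma. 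In the uncapacitated-with-multiplicities model one must be slightly careful that all these are multiset counts, but the inclusion of multisets is what the argument establishes, so this is fine.

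**Where the care is needed.** I do not expect a genuine obstacle here — the lemma is a bookkeeping statement — but the one place to be precise is the identity for $N$: one must check that passing from $B$ to $B-u$ does not lose any edge of $\delta(A)$ other than those with $e\cap B=\{u\}$ (edges meeting $B$ in $u$ and at least one other vertex stay in $\gamma(B-u)$), and does not gain any (since $\gamma$ is monotone, $\gamma(B-u)\subseteq\gamma(B)$). Both directions are immediate from the definition $\gamma(X)=\{e:e\cap X\neq\emptyset\}$. The hypothesis $E(u,v)\subseteq\delta(A)\cap\gamma(B)$ is used only through its consequence $E(u,v)\subseteq\delta(A)$; the $\gamma(B)$ part of the hypothesis is automatic since $u\in B$ forces every edge of $E(u,v)$ into $\gamma(B)$. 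So the proof is: (i) establish the identity for $N$; (ii) observe $E(u,v)\setminus E(B-u,u,v)$ is contained in the set $N$ counts; (iii) conclude.
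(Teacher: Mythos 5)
Your proof is correct and is essentially the paper's argument, reorganized. The paper shows directly that $E(u,v)\setminus E(B-u,u,v)$ is disjoint from $\gamma(B-u)$ (hence from $\gamma(B-u)\cap\delta(A)$) and that both sets are contained in $\gamma(B)\cap\delta(A)$, from which the inequality follows; you instead compute the exact difference $|\gamma(B)\cap\delta(A)|-|\gamma(B-u)\cap\delta(A)|=|\{e\in\delta(A):e\cap B=\{u\}\}|$ and then show $E(u,v)\setminus E(B-u,u,v)$ is contained in the set being counted. These are the same disjointness/containment facts presented in a different order, so there is no substantive difference. Your side observation that the $\gamma(B)$ half of the hypothesis is redundant (since $u\in B$ forces $E(u,v)\subseteq\gamma(B)$) is accurate and a small clarification over the paper's statement.
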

\begin{proof}
  Consider an edge $e\in E(u,v) \setminus E(B-u,u,v)$. We claim
  that $e\not\in \gamma(B-u)$. Indeed, if $e\in \gamma(B-u)$, then $e$ is
  an edge that intersects $B-u, \set{u\}$ and $\{v}$, but then $e\in
  E(B-u,u,v)$. This shows $E(u,v)\setminus E(B-u,u,v)$ is disjoint
  from $\gamma(B-u)$, and therefore disjoint from 
  $\gamma(B-u)\cap \delta(A)$.

We have (i) $\gamma(B-u) \cap \delta(A)\subset \gamma(B) \cap \delta(A)$ since $\gamma(B-u) \subseteq \gamma(B)$, and (ii) $E(u,v) \setminus E(B-u,u,v) \subset \gamma(B) \cap \delta(A)$ by assumption. Since we have argued that
$\gamma(B-u)$ and $E(u,v) \setminus E(B-u,u,v)$ are disjoint,
we have the desired inequality
\[
|\gamma(B)\cap \delta(A)| \geq |\gamma(B-u)\cap \delta(A)| + |E(u,v)\setminus E(B-u,u,v)|.
\]
\qed\end{proof}

\begin{lemma}
\label{lem:maorderbound}
Let $v_1,\ldots,v_n$ be an MA-ordering for $H=(V,E)$.
Then, for all $i<j$ and $A\subset V$ such that $v_i\in A$ and $v_j\not \in A$. $|\gamma(V_{i-1})\cap \delta(A)|\geq d(V_{i-1},v_j)$. 
\end{lemma}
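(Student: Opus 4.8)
The plan is to prove the inequality by induction on $i$, peeling the vertex $v_{i-1}$ off of $V_{i-1}$ with Lemma~\ref{lem:setcancel} and closing the final gap using the greedy rule defining the MA-ordering. The base case $i=1$ is immediate since $V_0=\emptyset$ makes both sides $0$. For the inductive step ($i\ge 2$) I would split into two cases according to which side of $A$ the vertex $v_{i-1}$ lies on; in each case $\gamma(V_{i-1})$ decomposes into $\gamma(V_{i-2})$ together with the edges whose head is $v_{i-1}$, and Lemma~\ref{lem:setcancel} is exactly the tool that quantifies how many of those new edges land in $\delta(A)$.

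In the easy case $v_{i-1}\in A$: every edge containing both $v_{i-1}$ and $v_j$ lies in $\delta(A)$ (the two vertices are on opposite sides) and in $\gamma(V_{i-1})$ (it meets $V_{i-1}$ at $v_{i-1}$), so $E(v_{i-1},v_j)\subseteq\delta(A)\cap\gamma(V_{i-1})$ and Lemma~\ref{lem:setcancel} with $B=V_{i-1}$, $u=v_{i-1}$, $v=v_j$ gives
\[
|\gamma(V_{i-1})\cap\delta(A)| \ \ge\ |\gamma(V_{i-2})\cap\delta(A)| + |E(v_{i-1},v_j)\setminus E(V_{i-2},v_{i-1},v_j)|.
\]
The leftover term counts exactly the edges whose head is $v_{i-1}$ and which contain $v_j$, which equals $d(V_{i-1},v_j)-d(V_{i-2},v_j)$. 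Since $v_{i-1}\in A$, $v_j\notin A$ and $i-1<j$, the inductive hypothesis applies and gives $|\gamma(V_{i-2})\cap\delta(A)|\ge d(V_{i-2},v_j)$; adding the two bounds finishes this case.

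The harder case is $v_{i-1}\notin A$, where the inductive hypothesis cannot be invoked directly for the cut $A$ (wrong side of $v_{i-1}$) and $v_j$ is useless for the peeling step since it lies on the same side as $v_{i-1}$. The fix I would use is to peel with $v_i$ instead of $v_j$ and to recurse on the complement cut: edges through $v_{i-1}$ and $v_i$ are separated by $A$ and meet $V_{i-1}$, so $E(v_{i-1},v_i)\subseteq\delta(A)\cap\gamma(V_{i-1})$, and Lemma~\ref{lem:setcancel} with $u=v_{i-1}$, $v=v_i$ yields
\[
|\gamma(V_{i-1})\cap\delta(A)| \ \ge\ |\gamma(V_{i-2})\cap\delta(A)| + \bigl(d(V_{i-1},v_i)-d(V_{i-2},v_i)\bigr).
\]
Applying the inductive hypothesis at index $i-1$ to the cut $V\setminus A$ with target vertex $v_i$ (now $v_{i-1}\in V\setminus A$, $v_i\in A$, and $i-1<i$, while $\delta(V\setminus A)=\delta(A)$) gives $|\gamma(V_{i-2})\cap\delta(A)|\ge d(V_{i-2},v_i)$, hence $|\gamma(V_{i-1})\cap\delta(A)|\ge d(V_{i-1},v_i)$. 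Finally, the defining property of an MA-ordering applied to the pair $i<j$ gives $d(V_{i-1},v_i)\ge d(V_{i-1},v_j)$, which completes the induction.

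I expect the only real obstacle to be spotting the two-fold trick in the second case — simultaneously passing to the complement cut and replacing the target vertex $v_j$ by $v_i$ — and then recognizing that the switch from $v_i$ back to $v_j$ is precisely what the greedy MA-ordering inequality buys at the last step. The first case, by contrast, is just a straightforward telescoping of Lemma~\ref{lem:setcancel} along the vertex ordering.
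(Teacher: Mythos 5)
Your proof is correct and follows essentially the same approach as the paper: both cases peel $v_{i-1}$ off $V_{i-1}$ via Lemma~\ref{lem:setcancel}, with case 2 handled by passing to the complement cut and the target vertex $v_i$, then invoking the MA-ordering inequality $d(V_{i-1},v_i)\ge d(V_{i-1},v_j)$ to finish. The only cosmetic difference is that you induct on $i$ alone, whereas the paper uses a lexicographic order on the pair; since both your inductive calls strictly decrease $i$, the simpler induction suffices.
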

\begin{proof}
  Proof by induction on $(i,j)$ ordered lexicographically.  For the
  base case consider $i=1$ and $j > 1$. Indeed, in this case both
  sides of the inequality are $0$ and the desired inequality holds
  trivially. Assume lemma is true for all $(i',j')$ where $1 \le i'<j'$,
  such that $j'<j$ or $j'=j$ and $i'<i$. We consider two cases.

\noindent
  {\bf Case 1:} $v_{i-1}\in A$. Then because $v_j\not \in A$,
  $E(v_{i-1},v_j) \subset \delta(A)\cap \gamma(V_{i-1})$. We apply
   \autoref{lem:setcancel} with $B = V_{i-1}$ and $u = v_{i-1}$ and
   $v = v_j$ to obtain:
\begin{align*}
~|\gamma(V_{i-1})\cap \delta(A)| \geq&~ |\gamma(V_{i-2})\cap \delta(A)| + |E(v_{i-1},v_j)\setminus E(V_{i-2},v_{i-1},v_j)|\\
                                \geq&~ d(V_{i-2},v_j)+|E(v_{i-1},v_j)\setminus E(V_{i-2},v_{i-1},v_j)| \quad~ \mbox{(apply induction on $(i-1,j)$, $A$)}\\
                                =&~d(V_{i-1},v_j).
\end{align*}

\noindent {\bf Case 2:} $v_{i-1}\not\in A$. Note that $i \ge 2$.
Consider $A' = V \setminus A$. We have $v_{i-1} \in A'$ and $v_i \not
\in A'$; therefore, $E(v_{i-1},v_i) \subseteq \delta(A') \cap
\gamma(V_{i-1})$. Applying \autoref{lem:setcancel} with $B = V_{i-1}$, $u = v_{i-1}$ and $v = v_i$,
\begin{align*}
|\gamma(V_{i-1})\cap \delta(A')| \geq &~|\gamma(V_{i-2}) \cap \delta(A')| + |E(v_{i-1},v_i) \setminus E(V_{i-2},v_{i-1},v_i)|\\
  \geq &~ d(V_{i-2},v_i) + |E(v_{i-1},v_i) \setminus E(V_{i-2},v_{i-1},v_i)| \quad \mbox{(apply induction on $(i-1,i), A'$)}\\
 = &~ d(V_{i-1},v_i) \\
                               \geq&~d(V_{i-1},v_j) \quad \mbox{(from MA-ordering)}.
\end{align*}
Since $\delta(A') = \delta(V\setminus A) = \delta(A)$, 
$|\gamma(V_{i-1})\cap \delta(A')| = |\gamma(V_{i-1})\cap \delta(A)|$.

This finishes the proof.
\qed\end{proof}

Using the preceding lemma we finish the proof that $H_k$ is a $k$-sparsifier. 

\begin{theorem}
  \label{thm:sparsification-cuts}
  For every $A\subset V$, $|\delta_{H_k}(A)| \geq \min(k,|\delta_{H}(A)|)$.
\end{theorem}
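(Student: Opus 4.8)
The plan is to fix an arbitrary nontrivial cut $A \subset V$ and show that $H_k$ keeps at least $\min(k, |\delta_H(A)|)$ of its crossing edges. If $|\delta_H(A)| \le k$ we will in fact show every edge of $\delta_H(A)$ survives in $H_k$; if $|\delta_H(A)| > k$ it suffices to exhibit $k$ surviving crossing edges. The natural pivot is the first vertex in the MA-ordering that lies in $A$ and the first that lies outside $A$. Let $v_i$ be the vertex of smallest index with, without loss of generality, $v_i \in A$ while $V_{i-1} \cap A = \emptyset$, and let $v_j$ (with $j > i$, since $A \neq V$) be the first vertex not in $A$. Every edge $e$ counted by $d(V_{i-1}, v_j; H)$ meets $V_{i-1} \subseteq V \setminus A$ and meets $v_j \in V\setminus A$; but more to the point, I want to find crossing edges that are \emph{not} trimmed, i.e.\ edges $e \in \delta_H(A)$ such that the vertex of $e$ that would get trimmed is among the first $k$ backward edges at that vertex.

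The cleaner route is to reduce to \autoref{lem:maorderbound}. Apply that lemma with the ordering restricted appropriately: it gives $|\gamma(V_{i-1}) \cap \delta(A)| \ge d(V_{i-1}, v_j; H)$, where $i$ is the first index with $v_i \in A$ (after possibly swapping $A$ and $V\setminus A$, which does not change $\delta(A)$) and $v_j$ the first vertex on the opposite side. Now I distinguish cases on $d(V_{i-1},v_j;H)$ versus $k$. If $d(V_{i-1}, v_j; H) \ge k$, then in $H_k$ the vertex $v_j$ still has at least $\min\{k, d(V_{i-1},v_j;H)\} = k$ backward edges into $V_{i-1}$ by the argument in the proof of \autoref{lem:sparseMAordering}; each such edge meets $v_j \notin A$ and meets some vertex of $V_{i-1}$. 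But wait — $V_{i-1}$ need not be entirely on one side of $A$ in general; this is exactly why I chose $i$ to be the \emph{first} index entering $A$, so $V_{i-1} \subseteq V \setminus A$ (or $\subseteq A$ after the swap). Hence every one of those $k$ surviving backward edges of $v_j$ crosses $A$, giving $|\delta_{H_k}(A)| \ge k \ge \min(k, |\delta_H(A)|)$.

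The remaining case is $d(V_{i-1}, v_j; H) < k$. Here \autoref{lem:maorderbound} alone does not immediately give $|\delta_H(A)|$ many surviving edges, so I would instead argue directly that \emph{no} crossing edge is trimmed in a way that destroys its crossing status. For an edge $e \in \delta_H(A)$, the only vertex of $e$ kept unconditionally is the head $h(e)$; a vertex $v \in e$ with $v \neq h(e)$ is kept iff $e \in D_k(v)$, i.e.\ $e$ is among the first $k$ backward edges of $v$ in the head ordering. So $e$ loses its crossing status in $H_k$ only if all of its vertices on one of the two sides of $A$ get trimmed. I plan to show this cannot happen by a counting/induction argument in the same spirit as \autoref{lem:maorderbound}: I would induct on the cut, tracking, for each prefix $V_\ell$, the quantity $|\gamma(V_\ell) \cap \delta(A)|$ and comparing it against $\min\{k, d(V_\ell, v_j)\}$ or the analogous MA-ordering quantity, and use \autoref{lem:setcancel} exactly as in Cases 1 and 2 of that lemma to push the induction through while only ever charging to edges that are retained.

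I expect the main obstacle to be the bookkeeping in this last case: making precise the claim that an edge survives as a crossing edge of $A$, in terms of the head ordering and the sets $D_k(v)$, and then threading this through the \autoref{lem:setcancel}-style induction without double-counting. Concretely, the subtle point is that the ``first $k$ backward edges of $v_j$'' and ``the crossing edges of $A$'' are indexed by different orderings, so I must be careful that the $\min\{k, d(V_{i-1}, v_j)\}$ edges guaranteed to survive are genuinely a subset of $\delta_H(A)$ — which is why the choice of $i$ as the first index on the $A$-side is essential, and why swapping $A \leftrightarrow V\setminus A$ (harmless since $\delta(A) = \delta(V\setminus A)$) is used to guarantee $V_{i-1}$ sits on a single side. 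Once that alignment is set up, the inequality $|\delta_{H_k}(A)| \ge |\gamma(V_{i-1}) \cap \delta_{H_k}(A)| \ge \min\{k, d(V_{i-1}, v_j; H)\} \ge \min\{k, |\delta_H(A)|\}$ should follow, with the last step again invoking \autoref{lem:maorderbound} to get $d(V_{i-1}, v_j; H) \ge \min\{\text{stuff}\}$ when $d(V_{i-1},v_j;H) < k$ forces $|\delta_H(A)| $ to also be controlled — here I would need the slightly stronger bookkeeping that when $d(V_{i-1},v_j;H) < k$, in fact $d(V_{i-1}, v_j; H) \ge |\delta_H(A)|$ is \emph{not} what I want; rather I want all $|\delta_H(A)|$ crossing edges retained, which the direct survival argument provides.
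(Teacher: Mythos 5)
There is a genuine gap, and it stems from the choice of pivot indices. You key $i$ and $j$ to the cut $A$ (first vertex in $A$, first vertex after it in $V\setminus A$), but the argument that works keys them to a \emph{trimmed crossing edge}. The paper's proof of the hard regime $|\delta_H(A)|\ge k$ assumes for contradiction that $|\delta_{H_k}(A)|<k$, so some $e\in\delta_H(A)$ was trimmed until it no longer crosses $A$. It then sets $v_i=h(e)$ (WLOG $v_i\in A$) and lets $v_j$ be a vertex of $e\cap(V\setminus A)$ that was trimmed from $e$. That specific choice is the engine: since $e$ is a backward edge of $v_j$ absent from $D_k(v_j)$, the set $D_k(v_j)$ has exactly $k$ edges, and every $f\in D_k(v_j)$ has head index $\le i$, forcing $d(V_{j-1},v_j;H_k)=d(V_i,v_j;H_k)=k$. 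Unpacked by inclusion--exclusion, $d(V_{i-1},v_j;H_k)=k-d(v_i,v_j)+d(V_{i-1},v_i,v_j)$, which is exactly what Lemma~\ref{lem:maorderbound} (applied in $H_k$ via Lemma~\ref{lem:sparseMAordering}) and Lemma~\ref{lem:setcancel} need to collapse to $k>|\delta_{H_k}(A)|\ge k$. With your $(i,j)$ there is no reason $d(V_{i-1},v_j)$ should be comparable to $k$ or to $|\delta_H(A)|$ --- it can be zero --- so the last step of your chain, $\min\{k,d(V_{i-1},v_j;H)\}\ge\min\{k,|\delta_H(A)|\}$, has no justification, and the proof does not close.

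There is also a self-contained error in the direct-counting variant of your first case: with $V_{i-1}\subseteq V\setminus A$ and $v_j\notin A$, a surviving backward edge of $v_j$ into $V_{i-1}$ meets two pieces of $V\setminus A$ and need not touch $A$ at all, so the assertion that ``every one of those $k$ surviving backward edges of $v_j$ crosses $A$'' is false; aligning $V_{i-1}$ on one side of $A$ gives you the opposite of what crossing requires. (Note that Lemma~\ref{lem:maorderbound} does not need $V_{i-1}$ to sit on one side of $A$; this is part of why the paper is free to choose $i,j$ from the trimmed edge rather than from where $A$ first appears in the order.) Finally, your case $d(V_{i-1},v_j;H)<k$ is an acknowledged sketch with no worked argument; the paper sidesteps it entirely by an outer induction on $k$: for $k'=|\delta_H(A)|<k$ the inductive hypothesis plus $|\delta_{H_{k'}}(A)|\le|\delta_H(A)|$ gives $|\delta_{H_{k'}}(A)|=k'$, and monotonicity $H_{k'}\subseteq H_k$ finishes, which is considerably lighter than trying to show directly that every crossing edge survives as crossing.
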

\begin{proof}
  By induction on $k$. The statement is clearly true for
  $k=0$. $|\delta_{H_i}(A)|\leq |\delta_{H_k}(A)|$ for all $i\leq k$,
  because $H_i$ is a subhypergraph of $H_k$.

  Consider any $k>0$. If $|\delta_{H}(A)| = k'<k$, then by induction,
  \[
  k'=|\delta_{H_{k'}}(A)|\leq |\delta_{H_{k}}(A)|\leq |\delta_{H}(A)| = k'.
  \]
  
  Therefore, it suffices to only consider $A$ such that
  $|\delta_{H}(A)| \geq k$. We will derive a contradiction 
  assuming that $|\delta_{H_k}(A)| < k$. Since $|\delta_H(A)| \ge k$, there exists
  an edge $e \in E(H)$ such that $e \in \delta_H(A)$ but was either
  trimmed to $e'\in E(H_k)$ such that $e' \not \in \delta_{H_k}(A)$ or
  $e$ is removed completely because it is trimmed to be a
  singleton $\{h(e)\}$. Let $v_i = h(e)$ and without loss of generality
  we can assume $v_i \in A$ (otherwise we can consider $\bar{A}$).
  Since $e'$ does not cross $A$ in $H_k$, there is a $v_j \in
  e \cap \bar{A}$ with $j > i$ (since $v_i$ is the head of $e$) and
  $v_j$ was trimmed from $e$ during the sparsification.

  $D_k(v_j)$ has exactly $k$ edges because backward edge $e$ of $v_j$ is
  not in $D_k(v_j)$. For each $f\in D_k(v_j)$, the trimmed $f'\in
  E(H_k)$ contains both $h(f)=v_\ell$ and $v_j$; we claim that for
  each such $f$, $\ell \le i$ for otherwise $e$ would be ahead of $f$
  in the head order and $v_j$ would be trimmed from $f$ before it is
  trimmed from $e$. From this we obtain that $E(V_{j-1},v_j;H_k) =
  E(V_i,v_j;H_k)$ and hence $d(V_{j-1},v_j;H_k) = d(V_i,v_j;H_k) = k$.
  
   For the remainder of the the proof, we only work with $H_k$ and
   the quantities $d,\delta,E,\gamma$ are with respect to this hypergraph and not $H$.
   We have 

   \[
   k = d(V_{j-1},v_j)= d(V_i,v_j)= d(V_{i-1},v_j) + d(v_i,v_j) - d(V_{i-1},v_i,v_j).
   \]

Hence $d(V_{i-1},v_j)= k-d(v_i,v_j)+d(V_{i-1},v_i,v_j)$. 

From \autoref{lem:sparseMAordering} $v_1,\ldots,v_n$ is an
MA-ordering of $H_k$ as well. Applying \autoref{lem:maorderbound} to
$H_k$, $v_i$ and $v_j$ and $A$, $|\gamma(V_{i-1})\cap \delta(A)| \geq d(V_{i-1},v_j)$. Combining this inequality with the preceding one,

\begin{equation}
  \label{eq:sparse-helper}
|\gamma(V_{i-1})\cap \delta(A)| \geq d(V_{i-1},v_j) =k-d(v_i,v_j)+d(V_{i-1},v_i,v_j).  
\end{equation}

We also observe that $E(V_{i-1},v_i,v_j)\subset E(v_i,v_j) \subset
\delta(A)$ because $v_i\in A$ and $v_j\not\in A$. We obtain a contradiction
by the following set of inequalities:

\begin{align*}
k           &> |\delta(A)| \quad \mbox{(by assumption)}\\
            &\geq |\gamma(V_i)\cap \delta(A)|\\
            & =~|\gamma(V_{i-1})\cap \delta(A)| + |E(v_i,v_j)\setminus E(V_{i-1},v_i,v_j)|\\
           &  =~|\gamma(V_{i-1})\cap \delta(A)| + d(v_i,v_j) - d(V_{i-1},v_i,v_j) \\
            &\geq \left(k-d(v_i,v_j)+d(V_{i-1},v_i,v_j)\right) + d(v_i,v_j) - d(V_{i-1},v_i,v_j) \quad \mbox{(from (\ref{eq:sparse-helper}))}\\
            &= k
\end{align*}
This finishes the proof.
\qed\end{proof}

There are applications where one want a $k$-sparsifier with
$O(kn)$ edges using only deletion. This can be done easily by first
compute $H_k$, and for each edge in $H_k$, replace it with the 
original edge in $H$. 

One can ask whether tight ordering or Queyranne ordering would also
lead to sparsification. We observe that they do not work if the only
modification is the input ordering, and $H_k$ is constructed the same
way through the head ordering of the edges.

For tight ordering, consider $H=(\set{0,1,2,3},E)$, where $E=
\set{\set{0, 1, 2},\set{0, 2, 3}, \set{1, 2}}$. $0,1,2,3$ is a tight
ordering. $H_2$ is $H$ with edge $\set{1,2}$
removed. $\lambda_{H_2}(1,2)=1<2=\lambda_H(1,2)$.

For Queyranne ordering, consider $H=(\set{0,\ldots,4},E)$, where
\[
E= \set{\set{0,1,2},\set{0,1,2,3},\set{0,1,3,4},\set{1,3,4},\set{2,3}}.
\]
$0,1,2,3,4$ is a Queyranne ordering. $H_3$ is all the edges except the
edge $\set{2,3}$.  We have $\lambda_{H_3}(2,3)=2<3=\lambda_{H}(2,3)$.

There may be other ways to obtain a sparsifier using these
orderings but we have not explored this.

\mypara{Algorithmic applications:} Computing connectivity in uncapacitated
hypergraphs can be sped up by first sparsifying the given hypergraph and then
running a standard algorithm on the sparsifier. This is especially useful
when one is interested in small values of connectivity. For global
mincut we obtain the following theorem.

\begin{theorem}
  The global mincut of a uncapacitated hypergraph $H$ with $n$ vertices
  and $\sumdeg(H) = p$ can be computed in $O(p+\lambda n^2)$ time,
  where $\lambda$ is the value of the mincut of $H$.
\end{theorem}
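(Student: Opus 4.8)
The plan is to guess $\lambda$ by repeated doubling and, for each guess $k$, run the known $O(n\sumdeg(\cdot))$-time uncapacitated mincut algorithm on the $k$-sparsifier supplied by \autoref{thm:sparsification} instead of on $H$ itself.

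Concretely, I would first build the \autoref{thm:sparsification} data structure in $O(p)$ time and, as a preprocessing step, test in $O(p)$ time whether $H$ is connected (e.g.\ by a search on the incidence bipartite graph $G_H$); if $H$ is disconnected, return $0$. Assuming henceforth that $\lambda \ge 1$, iterate over $k = 1, 2, 4, 8, \dots$; at step $k$, retrieve the $k$-sparsifier $H'$ in $O(\sumdeg(H')) = O(kn)$ time (using $\sumdeg(H_k) \le 2kn$), and run the uncapacitated mincut algorithm on $H'$, which costs $O(n\cdot\sumdeg(H')) = O(kn^2)$. If the computed value $\lambda(H')$ is strictly less than $k$, stop and return $\lambda(H')$; otherwise double $k$ and repeat.

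For correctness I would argue as follows. Since $H'$ is a subhypergraph of $H$ obtained only by deleting and trimming edges, $|\delta_{H'}(A)| \le |\delta_H(A)|$ for every nontrivial $A$, so $\lambda(H') \le \lambda$. By \autoref{thm:sparsification-cuts}, $|\delta_{H'}(A)| \ge \min(k,|\delta_H(A)|) \ge \min(k,\lambda)$ for every nontrivial $A$, so $\lambda(H') \ge \min(k,\lambda)$. Hence if $k > \lambda$ then $\lambda(H') = \lambda < k$ and the algorithm halts with the correct value, while if $k \le \lambda$ then $\lambda(H') \ge k$ and the algorithm correctly continues. Therefore the loop halts exactly at the first power of two exceeding $\lambda$, so at termination $k \le 2\lambda$, and the returned value is $\lambda$.

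For the running time I would sum a geometric series: the iterations use $k = 1, 2, \dots, O(\lambda)$, and iteration $k$ costs $O(kn^2)$ (the $O(kn)$ retrieval cost is dominated), so the total over all iterations is $O(\lambda n^2)$; together with the $O(p)$ preprocessing and connectivity check this gives $O(p + \lambda n^2)$. I do not expect a genuine obstacle here: the only point requiring a little care is that $\lambda$ is unknown in advance, which is precisely what the doubling handles, and that the disconnected case must be peeled off first so that the bound is $O(p)$, rather than $O(p+n^2)$, when $\lambda = 0$.
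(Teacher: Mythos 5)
Your proposal matches the paper's proof essentially line for line: build the \autoref{thm:sparsification} data structure in $O(p)$ time, then do an exponential (doubling) search over $k$, running the known $O(n\cdot\sumdeg)$ mincut algorithm on the retrieved $k$-sparsifier and stopping once the returned value falls strictly below $k$, with a geometric sum giving $O(\lambda n^2)$. The only addition is your explicit handling of the disconnected ($\lambda = 0$) case, a small but legitimate tidying that the paper leaves implicit.
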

\begin{proof}
  Using \autoref{thm:sparsification}, we first compute a
  data structure in $O(p)$ time that allows us to retrieve $H_k$ 
  in $O(kn)$ time for any given $k$.
  Suppose we knew a number $k$ such that $k \ge \lambda$ and $k = O(\lambda)$.
  We can compute the $k$-sparsifier $H_k$ in 
  $O(kn)$ time and compute $\lambda(H_k) = \lambda(H)$ in $O(\lambda n^2)$ time
  using one of the known algorithms since $\sumdeg(H_k) = O(\lambda n)$.
  To find $k$ we apply exponential search for the
  smallest $i$ such that $2^i>\lambda$. Each search computes
  hypergraph mincut on $H_{2^i}$, which takes $O(2^i n^2)$ time. For
  any $k> \lambda$, the value of the mincut on the $k$-sparsifier
  equals to $\lambda$. Therefore, the search stops when the value of
  mincut of $H_{2^i}$ is strictly smaller than
  $2^i$. The total time for the computation is
  $O(p+\sum_{i=1}^{1+\ceil{\log \lambda}} 2^i n^2) =  O(p+\lambda n^2)$.
\qed\end{proof}

A similar idea can be used to compute the edge-connectivity in $H$
between some given pair $s,t$. An algorithm for $s$-$t$ connectivity
that runs in time $T(n,m,p)$ on a hypergraph with $n$ nodes, $m$ edges
and sum of degrees $p$ can be sped up to $T(n,m,\lambda(s,t) n)$.
Sparsification can also help in computing $\alpha$-approximate mincuts
for $\alpha > 1$ via the stronger property guaranteed by
\autoref{thm:sparsification-cuts}.

\section{Canonical decomposition and Hypercactus Representation}
\label{sec:decomposition}
In this section, we are interested in finding a canonical
decomposition of a capacitated hypergraph which captures, in a compact
way, information on all the mincuts. Cunningham~\cite{Cunningham}
proved that such decomposition exists for an arbitrary non-negative
submodular function, following previous work by Cunningham and
Edmonds~\cite{CunninghamE} and Fujishige \cite{Fujishige}.  Cheng
\cite{Cheng} showed that the canonical decomposition can be used to
efficiently and relatively easily build a hypercactus representation,
and later Fleiner and Jordan \cite{FleinerJ} showed a similar result
for arbitrary symmetric submodular functions.  In a sense, one can
view the canonical decomposition as a more fundamental object since it
has uniqueness properties while cactus and hypercactus representations
are not necessarily unique.

As noted already by Cunningham, the key tool needed to build a
canonical decomposition is an algorithm to find a non-trivial
mincut. Here we show an efficient algorithm for finding such a mincut
in a hypergraph, and then use it to build a canonical
decomposition. We can then construct a hypercactus from the canonical
decomposition as shown in \cite{Cheng}. We believe that this approach
is easier to understand and conceptually simpler than the existing
deterministic cactus construction algorithms for graphs that build the
cactus directly.

A cut is \emph{trivial} if one side of the cut has exactly one vertex.
A \emph{split} is a non-trivial mincut. An $s$-$t$ split is a split
that separates $s$ and $t$.

\subsection{An efficient split oracle for hypergraphs}
Given a hypergraph $H$ we would like to find a split if one exists.
It is not hard to come up with a polynomial-time algorithm for this
task but here we wish to design a faster algorithm. We accomplish
this by considering a weaker guarantee which suffices for our purposes.
The algorithm, given $H$ and the mincut value $\lambda$, outputs
either a split in $H$ or a pair of vertices $\{s,t\}$ such that
there is no $s$-$t$ split in $H$. We call such an algorithm a 
split oracle. We describe a near-linear-time split oracle. 

We first show how to use a maximum $s$-$t$ flow in $\vec{H}$ to help
decide whether there is an $s$-$t$ split, and compute one if there is.

\begin{lemma}
  Given a maximum $s$-$t$ flow in the equivalent digraph of $H$, and the
  value of mincut $\lambda$ in $H$, there is an algorithm that in
  $O(p)$ time either finds a $s$-$t$ split, or certifies that 
  no $s$-$t$ split exists in $H$.
\end{lemma}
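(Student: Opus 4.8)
The plan is to post-process the given maximum $s$-$t$ flow $f$ by a constant number of linear-time searches in its residual graph $\vec{H}_f$, which has $O(p)$ vertices and arcs. First read off the value $|f| = \lambda_H(s,t)$. An $s$-$t$ split is in particular an $s$-$t$ mincut, so if $|f| > \lambda$ there is no $s$-$t$ cut of value $\lambda$ at all and we may immediately report that no $s$-$t$ split exists. Otherwise $|f| = \lambda$, every $s$-$t$ mincut of $H$ is a global mincut, and it remains only to decide whether some $s$-$t$ mincut is non-trivial and, if so, to produce one.

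I would exploit the standard min-cut lattice of $\vec{H}$. Call $X \subseteq \vec{V}$ \emph{closed} if no residual arc leaves $X$. Since the incidence arcs $(v,e^-)$ and $(e^+,v)$ have infinite capacity and are never saturated, none of them can leave a closed set; hence every closed $X$ with $s \in X$, $t \notin X$ has finite capacity in $\vec{H}$, is a minimum $s$-$t$ cut there, and so — via the bijection recalled in the preliminaries — $X \cap V$ is the source side of an $s$-$t$ mincut of $H$. Conversely, lifting any $s$-$t$ mincut source side $S$ of $H$ to $S \cup \{e^- : e\cap S \neq \emptyset\} \cup \{e^+ : e \subseteq S\}$ gives such a closed set. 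The minimal closed set containing $s$ is $R$, the set of vertices reachable from $s$ in $\vec{H}_f$, and the maximal closed set avoiding $t$ is $\vec{V}\setminus R'$, where $R'$ is the set of vertices that reach $t$; one search each computes them. Setting $S_{\min} = R \cap V$ and $S_{\max} = V \setminus (R' \cap V)$, every $s$-$t$ mincut source side $S$ satisfies $S_{\min} \subseteq S \subseteq S_{\max}$.

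Now the algorithm branches. If $S_{\min}$ or $S_{\max}$ is non-trivial, output it. Otherwise both are trivial; since $s \in S_{\min} \subseteq S_{\max}$ and $t \notin S_{\max}$, each of $S_{\min}, S_{\max}$ is $\{s\}$ or $V\setminus\{t\}$. If they coincide, that is the unique $s$-$t$ mincut and it is trivial, so report no split. The only remaining case is $S_{\min} = \{s\} \subsetneq V\setminus\{t\} = S_{\max}$; if $n \le 3$ no non-trivial cut exists, so assume $n \ge 4$. Here $S_{\min}=\{s\}$ says $s$ reaches no other vertex of $V$ in $\vec{H}_f$, and $S_{\max}=V\setminus\{t\}$ says no vertex of $V$ other than $t$ reaches $t$, so every $w \in V\setminus\{s,t\}$ is a ``middle'' vertex (unreachable from $s$, not reaching $t$). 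Compute the strongly connected components of $\vec{H}_f$. If all $n-2 \ge 2$ middle vertices of $V$ lie in one SCC $C^\ast$, then any closed set meeting a middle vertex contains $C^\ast$ and hence all of them, so every $s$-$t$ mincut source side is $\{s\}$ or $V\setminus\{t\}$ and we report no split. Otherwise pick middle vertices $w_1, w_2$ in distinct SCCs; with at most two further searches we may assume $w_1$ does not reach $w_2$ in $\vec{H}_f$, and then $X := R \cup \{u : u \text{ reachable from } w_1 \text{ in } \vec{H}_f\}$ is closed, contains $s$ and $w_1$, avoids $t$ (neither $R$ nor the reach-set of $w_1$ contains $t$) and avoids $w_2$, so $X \cap V$ is an $s$-$t$ mincut with $2 \le |X \cap V| \le n-2$, i.e.\ an $s$-$t$ split, which we output.

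Every step after receiving $f$ is a bounded number of graph searches and one SCC computation on $\vec{H}_f$, for a total of $O(p)$ time. I expect the last case to be the main obstacle: probing each of the $\Theta(n)$ candidate middle vertices with its own search would cost $O(np)$, so the crux is the argument that a single SCC decomposition plus at most a couple of extra searches both exhibits a split when one exists and produces the ``no $s$-$t$ split'' certificate exactly when all middle vertices share one SCC.
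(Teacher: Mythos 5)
Your proof is correct, and it takes a genuinely different route from the paper's. The paper dispatches the second case with a single appeal to the Provan--Shier min-cut enumeration algorithm: given the max $s$-$t$ flow, enumerate $\min(3,k)$ distinct min $s$-$t$ cuts in $O(p)$ time; since only $\{s\}$ and $V\setminus\{t\}$ can be trivial $s$-$t$ cuts, either one of the (at most three) returned cuts is non-trivial, or all returned cuts are trivial and there were at most two min cuts in total, certifying no split. You instead work directly with the lattice structure of min cuts in the residual digraph: you extract the minimal and maximal min-cut source sides $S_{\min}, S_{\max}$ by two reachability searches, and when both turn out trivial you decide the question with one SCC computation, showing that a non-trivial min cut exists iff two vertices of $V\setminus\{s,t\}$ lie in distinct strongly connected components of $\vec{H}_f$ (and in that case constructing one explicitly by closing the reach-set of a suitable middle vertex). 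Both arguments run in $O(p)$ time. The paper's version is shorter because it treats the enumeration algorithm as a black box; yours is self-contained modulo the standard ``closed sets of the residual graph $\leftrightarrow$ minimum $s$-$t$ cuts'' correspondence, and in fact gives slightly more structural information (the min/max cuts and the SCC witness), at the cost of a longer case analysis. One small presentational nit: you should say a word about why, when $w_1,w_2$ lie in distinct SCCs, at most two further searches suffice to orient them --- the point is simply that if each reached the other they would be in the same SCC, so testing reachability from $w_1$ and (if needed) from $w_2$ finds the non-reaching direction.
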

\begin{proof}
  If the value of the maximum $s$-$t$ flow is greater than $\lambda$,
  there is no $s$-$t$ split. Otherwise, there is an $s$-$t$ split iff
  there is a non-trivial min-$s$-$t$ cut in $H$.

  Suppose a directed graph $G$ has $k$ minimum $u$-$v$-cuts for some vertex
  pair $(u,v)$. Given a maximum $u$-$v$ flow in $G$ and an integer $\ell$,
  there is an enumeration algorithm \cite{Provan1996} that outputs
  $\min(\ell, k)$ distinct min-$u$-$v$-cuts in $O(\ell m)$ time where
  $m$ is the number of edges in $G$.

  We run the enumeration algorithm with $\ell = 3$ on $\vec{H}$ for
  the pair $(s,t)$. Every min-$s$-$t$ cut in $\vec{H}$ corresponds to
  a min-$s$-$t$ cut in $H$.  If the algorithm returns at most two cuts
  and both are trivial then there is no $s$-$t$ split. Otherwise one
  of the output cuts is an $s$-$t$ split. The running time is $O(p)$
  since the number of edges in $\vec{H}$ is $O(p)$.
\qed\end{proof}

One can find a maximum $s$-$t$ flow in $\vec{H}$ using standard flow
algorithms but that would not lead to a near-linear time algorithm. In
graphs, Arikati and Mehlhorn \cite{Arikati1999} devised a linear-time
algorithm that computes the maximum flow between the last two vertices
of an MA-ordering. Thus, we have a near-linear-time split oracle for
graphs.  Recall that in hypergraphs there are three orderings which
all yield a pendant pair. We generalized Arikati and Mehlhorn's
algorithm to a linear-time algorithm that tries to find a maximum flow
between the last two vertices of an MA-ordering of a hypergraph (the
flow is in the equivalent digraph). Even though it appears to
correctly compute a maximum flow in all the experiments we ran, we
could not prove its correctness. Instead we found a different method
based on the tight ordering, that we describe below.

Let $v_1,v_2, \ldots, v_n$ be a tight ordering for a hypergraph
$H=(V,E)$. We define a tight graph $G=(V,E')$ with respect to $H$
and the given tight ordering as follows. For each edge $e\in E$, 
we add an edge $e'$ to $E'$, where $e'$ consists of the last $2$ vertices of 
$e$ under the tight ordering. The key observation is the following.

\begin{lemma}
  Suppose $H=(V,E)$ is a hypergraph and $v_1,\ldots,v_n$ is a tight
  ordering for $H$, and $G=(V,E')$ is the corresponding tight
  graph. Then, for $1 \le i<j \le n$, $d'(V_i,v_j;G) = d'(V_i,v_j;H)$.
\end{lemma}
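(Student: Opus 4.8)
The plan is to exhibit a capacity-preserving bijection between the edges of $H$ and the edges of $G$ — namely $e \mapsto e'$, where by construction $c(e') = c(e)$ — and to show that an edge $e$ is counted in $d'(V_i,v_j;H)$ if and only if its image $e'$ is counted in $d'(V_i,v_j;G)$, with the same contribution. Summing over all $e\in E$ then gives the claim.

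First I would unwind the definitions. An edge $e$ contributes $c(e)$ to $d'(V_i,v_j;H)$ exactly when $v_j\in e$, $e\subseteq V_i\cup\{v_j\}$, and $e\cap V_i\neq\emptyset$. Since the vertices of such an $e$ all lie in $V_i\cup\{v_j\}$ and $j>i$, the vertex $v_j$ is automatically the last vertex of $e$ in the tight ordering; thus the condition is equivalent to: $v_j$ is the last vertex of $e$, and the second-to-last vertex of $e$, call it $v_a$, satisfies $a\leq i$. (Edges of size at most one contribute to no $d'$ and may be discarded; after that every $e$ has a well-defined second-to-last vertex.) Likewise, because $e'$ has exactly two vertices, $e'$ contributes $c(e')=c(e)$ to $d'(V_i,v_j;G)$ exactly when one endpoint of $e'$ is $v_j$ and the other endpoint $v_b$ satisfies $b\leq i$; and since $j>i\geq b$, this is equivalent to: $v_j$ is the larger-indexed endpoint of $e'$ and the smaller-indexed endpoint has index $\leq i$.

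The key step is the observation that $e'$, being the set of the last two vertices of $e$, has the same last vertex and the same second-to-last vertex as $e$. Hence ``$v_j$ is the last vertex of $e$ and the second-to-last has index $\leq i$'' holds if and only if ``$v_j$ is the larger endpoint of $e'$ and the smaller has index $\leq i$'' holds. Combining this with the two reformulations above, $e$ is counted in $d'(V_i,v_j;H)$ if and only if $e'$ is counted in $d'(V_i,v_j;G)$, and when counted the contribution is $c(e)=c(e')$ in both cases. Summing over $e\in E$ yields $d'(V_i,v_j;G)=d'(V_i,v_j;H)$.

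I do not expect a genuine obstacle here; the only points requiring care are the role of the hypothesis $i<j$ — it is what guarantees that whenever $v_j$ occurs in $e$ (resp. $e'$) together with a vertex of $V_i$, $v_j$ is forced to be the later of the two, so the two descriptions of ``being counted'' really coincide — and the harmless degenerate case of edges of size at most one, which can be removed at the outset without affecting any $d'$ value.
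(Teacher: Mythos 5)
Your proof is correct and essentially matches the paper's argument: both rest on the observation that $e$ is counted in $d'(V_i,v_j;H)$ precisely when $v_j$ is the last vertex of $e$ and its second-to-last vertex lies in $V_i$, which coincides with the condition for $e' $ (the last two vertices of $e$) to be counted in $d'(V_i,v_j;G)$. You phrase it as a single capacity-preserving bijection with an if-and-only-if characterization, whereas the paper splits the same observation into two separate inequalities, but the substance is identical.
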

\begin{proof}
  Consider any edge $e$ counted in $d'(V_i,v_j;H)$. $e\subset
  V_i\cup\set{v_j}$, and $e$ contains $v_j$. $e'$ contains $v_j$, and
  the other end of $e'$ is in $V_i$. Therefore $e'$ is counted
  in $d'(V_i,v_j;G)$. This shows that $d'(V_i,v_j;H)\leq d'(V_i,v_j;G)$.

  To see the other direction, consider an $e' \in E$ corresponding to
  an edge $e \in E$. If $e'$ is counted in $d'(V_i,v_j;G)$, it must be
  that $v_j$ is the last vertex in $e$ and the second last vertex of
  $e$ is in $V_i$. This implies that $e\subset V_i\cup \set{v_j}$, and
  therefore counted in $d'(V_i,v_j;H)$, and completes the direction
  $d'(V_i,v_j;H)\geq d'(V_i,v_j;G)$.
\qed\end{proof}

The preceding lemma implies that the tight ordering for $H$ 
is a tight ordering for $G$. From \autoref{lem:tightorderlasttwo},
\[
\lambda(v_{n-1},v_n;G) =d'(V_{n-1},v_n;G) =d'(V_{n-1},v_n;H)=\lambda(v_{n-1},v_n;H)
\]

Letting $s = v_{n-1}$ and $t= v_n$, we see that $\lambda(s,t;G) =
\lambda(s,t;H)$. Moreover, an $s$-$t$ flow in $G$ can be easily lifted
to an $s$-$t$ flow in $\vec{H}$. Thus, we can compute an $s$-$t$
max flow in $G$ in linear-time using the algorithm of
\cite{Arikati1999} and this can be converted, in linear time, into an
$s$-$t$ max-flow in $\vec{H}$.

This gives the following theorem.
\begin{theorem}
  \label{thm:split-oracle}
  The split oracle can be implemented in $O(p + n\log n)$ time for
  capacitated hypergraphs, and in $O(p)$ time for uncapacitated
  hypergraphs.
\end{theorem}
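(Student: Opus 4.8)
The plan is to assemble \autoref{thm:split-oracle} from the pieces already developed in this subsection, so the proof is essentially a routing of inputs and outputs through the stated lemmas together with the known running times for the three vertex orderings and for the linear-time flow algorithm of Arikati and Mehlhorn. First I would recall from the preliminaries that a tight ordering $v_1,\ldots,v_n$ of $H$ can be computed in $O(p + n\log n)$ time for capacitated hypergraphs and in $O(p)$ time for the uncapacitated case. Next I would build the tight graph $G=(V,E')$ with respect to this ordering by replacing each hyperedge $e$ with the edge $e'$ on its last two vertices in the ordering; this is an $O(p)$ pass over the edges, and $G$ has $m$ edges and $n$ vertices. By the first lemma of this subsection, $d'(V_i,v_j;G)=d'(V_i,v_j;H)$ for all $i<j$, so $v_1,\ldots,v_n$ is also a tight ordering for $G$, and by \autoref{lem:tightorderlasttwo} applied to $G$ we get $\lambda(s,t;G)=\lambda(s,t;H)$ where $s=v_{n-1}$, $t=v_n$.

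Then I would invoke the linear-time algorithm of \cite{Arikati1999} to compute a maximum $s$-$t$ flow in $G$ in $O(m) = O(p)$ time; because $G$ is a graph on $n$ vertices and $m$ edges this is genuinely linear. The key observation I would spell out briefly is that a maximum $s$-$t$ flow in $G$ lifts, in $O(p)$ time, to a maximum $s$-$t$ flow in the equivalent digraph $\vec{H}$: each edge $e'=\{u,v_j\}$ of $G$ carrying flow corresponds to routing the same amount through the path $u \to e^- \to e^+ \to v_j$ (or its reverse orientation) in $\vec{H}$, and the infinite-capacity arcs cause no obstruction while the $(e^-,e^+)$ arc has capacity exactly $c(e)$; since $\lambda(s,t;G)=\lambda(s,t;H)$ the lifted flow is maximum in $\vec{H}$. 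Having a maximum $s$-$t$ flow in $\vec{H}$ in hand, I would feed it, together with the mincut value $\lambda$, into the first lemma of this subsection (the one using the enumeration algorithm of \cite{Provan1996} with $\ell = 3$), which in $O(p)$ time either returns an $s$-$t$ split or certifies that none exists. If it returns a split we are done; if it certifies no $s$-$t$ split exists, we output the pair $\{s,t\}$, which is exactly the required output of a split oracle.

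Finally I would total the running times: computing the tight ordering is $O(p+n\log n)$ for capacitated and $O(p)$ for uncapacitated hypergraphs, and every other step — building $G$, running the Arikati–Mehlhorn flow, lifting the flow to $\vec{H}$, and running the enumeration-based decision procedure — costs $O(p)$. Hence the split oracle runs in $O(p+n\log n)$ time for capacitated hypergraphs and $O(p)$ time for uncapacitated ones, as claimed.

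I do not expect any serious obstacle here, since the statement is a corollary of machinery already in place; the one point that merits a careful sentence rather than a one-line dismissal is the lifting of the maximum flow from $G$ to $\vec{H}$, where one must check that maximality is preserved. This is immediate from the equality $\lambda(s,t;G)=\lambda(s,t;H)$ (so a feasible flow of value $\lambda(s,t;G)$ in $\vec{H}$ is automatically maximum there), but it is the only step where the two lemmas of this subsection genuinely interact, so I would make the correspondence between $G$-edges and the three-arc gadgets in $\vec{H}$ explicit.
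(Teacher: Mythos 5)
Your proposal matches the paper's proof essentially step for step: compute a tight ordering, form the tight graph $G$, use the lemma $d'(V_i,v_j;G)=d'(V_i,v_j;H)$ together with \autoref{lem:tightorderlasttwo} to conclude $\lambda(v_{n-1},v_n;G)=\lambda(v_{n-1},v_n;H)$, run the Arikati--Mehlhorn linear-time flow algorithm on $G$ (valid because for graphs the tight ordering coincides with the MA-ordering, so $v_{n-1},v_n$ are the last two vertices of an MA-ordering of $G$), lift the flow to $\vec{H}$ through the three-arc gadgets, and feed the max-flow and $\lambda$ into the enumeration-based lemma to decide the split. The reasoning and running-time accounting are correct and this is the same route the paper takes.
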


\subsection{Decompositions, Canonical and Prime}
We define the notion of decompositions to state the relevant theorem
on the existence of a canonical decomposition.  In later subsections
we describe the computational aspects.

A hypergraph $H$ is \emph{prime} if it does not contain any split; in
other words all mincuts of $H$ are trivial. A capacitated hypergraph
is called a \emph{solid polygon} if it consists of a cycle where each edge
has the same capacity $a$ and a hyperedge containing all vertices with
capacity $b$.  If $a=0$, it is called  \emph{brittle}, otherwise it is
called \emph{semi-brittle}. A solid polygon is \emph{not} prime if
it has at least $4$ vertices.  For a semi-brittle hypergraph with at
least $4$ vertices, every split consists of two edges on the cycle and
the hyperedge covering all vertices. For a brittle hypergraph with at
least $4$ vertices, any non-trivial cut is a split.

Given a hypergraph $H=(V,E)$ and a set $U$, a function $\phi:V\to U$
defines a new hypergraph through a sequence of contraction operations as
follows: for each element $u\in U$, contract $\phi^{-1}(u)$ into
$u$. The resulting hypergraph is the $\phi$-contraction of $H$. A
hypergraph obtained from $H=(V,E)$ by contracting $S \subset V$ into a
single vertex is denoted by $H/S$.

$\set{H_1,H_2}$ is a \emph{simple refinement} of $H$ if $H_1$ and
$H_2$ are hypergraphs obtained through a split $(V_1,V_2)$ of $H$ and
a new \emph{marker} vertex $x$ as follows.
\begin{enumerate}
\item $H_1$ is $H/V_2$, such that $V_2$ gets contracted to $x$.
\item $H_2$ is $H/V_1$, such that $V_1$ gets contracted to $x$.
\end{enumerate}

If $\set{H_1,H_2}$ is a simple refinement of $H$, then mincut value of
$H_1,H_2$ and $H$ are all equal.

A set of hypergraphs $\cD = \{H_1,H_2,\ldots,H_k\}$ is called a
\emph{decomposition} of a hypergraph $H$ if it is obtained from
$\set{H}$ by a sequence of operations each of which consists of
replacing one of the hypergraphs in the set by its simple refinement;
here we assume that each operation uses new marker vertices.  A
decomposition $\cD$ is a simple refinement of decomposition $\cD'$ if
$\cD$ is obtained through replacing one of the hypergraph in $\cD'$ by
its simple refinement. A decomposition $\cD'$ is a \emph{refinement}
of $\cD$ if $\cD'$ is obtained through a sequence of simple refinement
operations from $\cD$. If the sequence is non-empty, $\cD'$ is called a
\emph{strict refinement}. Two decompositions are \emph{equivalent} if they
are the same up to relabeling of the marker vertices. A decomposition
is \emph{minimal} with property $\mathcal{P}$ if it is not a strict
refinement of some other decomposition with the same property
$\mathcal{P}$. A \emph{prime} decomposition is one in which all
members are prime. A decomposition is \emph{standard} if every
element is either prime or a solid polygon.

Every element in the decomposition is obtained from a sequence of
contractions from $H$. Hence we can associate each element $H_i$ in
the decomposition with a function $\phi_{H_i}:V\to V(H_i)$, such that
$H_i$ is the $\phi_{H_i}$-contraction of $H$. Every decomposition
$\cD$ has an associated \emph{decomposition tree} obtained by having a node
for each hypergraph in the decomposition and an edge connecting two
hypergraphs if they share a marker vertex.

The important theorem below is due to \cite{Cunningham}, and stated
again in \cite{Cheng} specifically for hypergraphs.

\begin{theorem}[\cite{Cunningham}]
  Every hypergraph $H$ has a unique (up to equivalence) minimal standard
  decomposition. That is, any two minimal standard decompositions of $H$ differ
  only in the labels of the marker vertices.
\end{theorem}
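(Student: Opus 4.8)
The plan is to prove existence and uniqueness separately, with both parts resting on the submodularity of the hypergraph cut function.

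\textbf{Existence.} First I would observe that the simple-refinement relation on decompositions is well-founded. If a member $H_i$ with $t=|V(H_i)|$ vertices is replaced by the two pieces of a split $(V_1,V_2)$, each piece has $|V_1|+1$ (resp. $|V_2|+1$) vertices; since both sides of a split have at least $2$ vertices, each piece has between $3$ and $t-1$ vertices, so $t\ge 4$, and the two pieces have vertex counts summing to $t+2$. Hence the potential $\sum_{H_i\in\cD}(|V(H_i)|-3)$ drops by exactly one at every simple refinement while remaining non-negative, so every chain of refinements from $\{H\}$ is finite. Refining a member whenever it admits any split eventually yields a prime decomposition, so standard decompositions exist; the set of standard decompositions of $H$ is then a nonempty finite poset under refinement and therefore has a minimal (coarsest) element.

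\textbf{Uniqueness.} Let $c$ be the capacity function; $c$ is symmetric ($c(A)=c(V\setminus A)$) and submodular, hence also posimodular, i.e.\ $c(A)+c(B)\ge c(A\setminus B)+c(B\setminus A)$. Consequently, if $A$ and $B$ are crossing mincuts, then each of the nonempty sets among $A\cap B$, $A\cup B$, $A\setminus B$, $B\setminus A$ is again a mincut: each is a nonempty proper subset of $V$, hence has value at least $\lambda$, and the submodular/posimodular inequality is then tight. This uncrossing property is the engine of the argument. Using it, I would analyze the family $\mathcal{S}$ of all splits of $H$ (viewed as vertex bipartitions up to complementation) through its crossing graph, and show that each connected component of this graph has one of exactly two shapes: an isolated split (crossed by no other split), or a \emph{circular family}, one isomorphic to the family of cuts arising from cyclically arranging a set of super-vertices --- precisely the cut family of a solid polygon. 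This is the hypergraph analog of the laminar-part/cycle-part dichotomy for graph mincuts, and it is where the solid polygons come from: the essential sublemma is that three pairwise-crossing splits force a cyclic order on the atoms they separate, and every further split in the same component refines this cyclic order consistently.

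From this structure one extracts a canonical object: a partition of $V$ into atoms, a tree on the atoms whose edges correspond to the non-crossing splits and whose ``thick'' nodes correspond to maximal circular families; contracting each side of each tree edge produces prime hypergraphs at the thin nodes and solid polygons at the thick nodes, giving a standard decomposition. To finish I would show that any \emph{minimal} standard decomposition $\cD$ must coincide with this canonical one. Standardness forces every member to be prime or a solid polygon. Minimality (coarsest among standard) forces that no two members of $\cD$ sharing a marker can be merged back into a single standard member: a prime member cannot absorb a neighbor, since the merge would create an internal split; two solid polygons sharing a marker can merge into a larger solid polygon, so each maximal circular family of splits must already be packaged as a single solid-polygon member. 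Matching these two constraints against the canonical structure pins $\cD$ down up to relabeling of the marker vertices, which is exactly the asserted uniqueness.

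\textbf{Main obstacle.} The crux is the structural dichotomy for crossing components of splits --- proving via iterated uncrossing that a set of pairwise-crossing mincuts is forced into a single circular family realizable as a solid polygon, and that adjacent or nested circular families interact cleanly with the laminar part. This is precisely the content of Cunningham's decomposition theory \cite{Cunningham} (specialized to hypergraphs in \cite{Cheng}); the remaining ingredients --- termination, existence of a minimal standard decomposition, and the bookkeeping showing that minimality together with standardness forces the canonical shape --- are comparatively routine.
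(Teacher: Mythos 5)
The paper does not prove this statement; it is stated as a citation to Cunningham \cite{Cunningham}, with the note that Cheng \cite{Cheng} restates it specifically for hypergraphs, so there is no in-paper proof to compare against. Your existence argument is correct and clean: the potential $\sum_{H_i\in\cD}(|V(H_i)|-3)$ is non-negative and decreases by exactly one under every simple refinement, so chains terminate and a minimal standard decomposition exists; and the overall shape of your uniqueness step (uncross splits, analyze the crossing graph of the split family, package each component into a single member, argue that minimality plus standardness forces this packaging) is the right one and is how Cunningham's theory proceeds.

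The gap is in the structural sublemma you single out as the engine: ``three pairwise-crossing splits force a cyclic order on the atoms they separate,'' hence every crossing component is a circular family. That is the \emph{graph} version of the dichotomy, and it is false for hypergraphs. In a graph, tight submodularity on crossing mincuts $A,B$ forces $d(A\setminus B,B\setminus A)=0$ and tight posimodularity forces $d(A\cap B,V\setminus(A\cup B))=0$, and these two vanishing-diagonal conditions are exactly what pin down a cyclic arrangement of crossing mincuts. In a hypergraph neither conclusion follows, because a single hyperedge can meet all four quadrants of $A$ and $B$ without contributing any submodular slack. The paper's own definitions make the consequence explicit: in a \emph{brittle} solid polygon (a cycle of capacity $0$ together with one hyperedge containing every vertex) \emph{every} non-trivial bipartition is a split, and that family of splits carries no cyclic order whatsoever. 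Your dichotomy (isolated split / circular family) therefore misses the brittle case, which is precisely where hypergraph decomposition diverges from the graph cactus story; what you need is the trichotomy prime / semi-brittle (circular) / brittle (full set system), and that is exactly what Cunningham's structure theorem supplies. The surrounding bookkeeping in your sketch is otherwise sound.
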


The unique minimal standard decomposition is called the
\emph{canonical decomposition}. As a consequence, \emph{every}
standard decomposition is a refinement of the canonical decomposition.
We remark that minimality is important here. It captures all the
mincut information in $H$ as stated below.

\begin{theorem}[\cite{Cheng,Cunningham}]
\label{thm:cutmap}
Let $\cD=\set{H_1,\ldots, H_k}$ be a canonical decomposition of $H$.
\begin{enumerate}
\item For each mincut $S$ of $H$, there is a unique $i$, such that $\phi_{H_i}(S)$ is a mincut of $H_i$.
\item For each mincut $S$ of $H_i$, $\phi_{H_i}^{-1}(S)$ is a mincut of $H$.
\end{enumerate}
\end{theorem}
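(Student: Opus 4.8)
The plan is to reduce the theorem to two elementary observations and then obtain part~(2) directly and part~(1) by tracking one mincut along a sequence of simple refinements that produces $\cD$ from $\set{H}$. The first observation is that \emph{every member of $\cD$ has mincut value $\lambda:=\lambda(H)$}, which is immediate by induction on the number of simple refinements together with the fact noted before the statement that a simple refinement preserves the mincut value. The second is a \emph{contraction lemma}: if $\psi\colon V(H)\to V(H')$ is the map of one contraction, then $c_{H'}(S')=c_H(\psi^{-1}(S'))$ for every cut $S'$ of $H'$, and $c_{H'}(\psi(S))=c_H(S)$ for every cut $S$ of $H$ that is a union of fibers of $\psi$; this holds because an edge of $H$ crosses $\psi^{-1}(S')$ iff its image crosses $S'$ (an edge that collapses to a single vertex crosses neither), so $\delta_H(\psi^{-1}(S'))$ and $\delta_{H'}(S')$ are in capacity-preserving bijection, and composing contractions it applies to every $\phi_{H_i}$.

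Part~(2) is then immediate: if $S$ is a mincut of $H_i$, then $\phi_{H_i}^{-1}(S)$ is a cut of $H$ of capacity $c_{H_i}(S)=\lambda(H_i)=\lambda$, hence a mincut. For part~(1), say $S$ is \emph{clean at} $H_i$ if $S$ is a union of fibers of $\phi_{H_i}$, equivalently $\phi_{H_i}^{-1}(\phi_{H_i}(S))=S$; this is the reading under which ``unique $i$'' is meant. If $S$ is clean at $H_i$ then $\phi_{H_i}(S)\neq V(H_i)$ (else $S=V$), so by the contraction lemma $\phi_{H_i}(S)$ is a mincut of $H_i$; hence part~(1) reduces to showing every mincut $S$ of $H$ is clean at exactly one piece, up to the boundary case below.

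For existence I would run the refinement from $\set{H}$ to $\cD$ adaptively, keeping the invariant that $S$ is clean at some current piece $P$. A step refining a piece other than $P$ leaves this intact and introduces no second clean piece, since each child's $\phi$ factors through $\phi_P$. If $P$ itself is refined it is neither prime nor --- using minimality of $\cD$ --- a solid polygon, so it has a split; writing $S_P=\phi_P(S)$, a mincut of $P$: if $S_P$ is nontrivial in $P$ we split $P$ along $(S_P,\overline{S_P})$, after which $S$ is clean at the child $P/S_P$ with image a single marker vertex; if $S_P$ is trivial in $P$, then under any split $S_P$ lies on one side and $S$ stays clean at the child obtained by contracting the other side. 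Since any such sequence terminates at $\cD$, $S$ ends up clean at a piece of $\cD$.

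For uniqueness, $S$ is clean at $H_i$ iff every blob $\phi_{H_i}^{-1}(y)$ ($y$ a marker of $H_i$) lies entirely in $S$ or in $\overline S$. If $S$ were clean at distinct $H_i,H_j$, take the tree path between them and the marker $x$ on its first edge; cleanness makes the $H_i$-blob of $x$ (which contains $H_j$'s whole territory) monochromatic, and symmetrically at the far end, and a short sign-case check --- using that the two blobs of the shared marker of an adjacent pair partition $V(H)$ --- forces $H_i,H_j$ to be adjacent and $S$ to be one of the two sides $A_x,B_x$ of $x$'s split. Conversely $A_x,B_x$ are clean at exactly the two pieces incident to $x$ (at one of them only as the trivial cut $\set{x}$), so cleanness holds at exactly one piece except on marker splits, where it holds at exactly two --- the precise sense of ``unique''. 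The only genuinely nontrivial ingredient here is the order-independence invoked in the existence step, namely that any terminating sequence of simple refinements reaches \emph{the} canonical decomposition; that is exactly Cunningham's uniqueness theorem quoted above, so I would cite \cite{Cunningham,Cheng} for it rather than reprove it, and the rest of the argument is its assembly with the two observations.
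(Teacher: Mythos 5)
This theorem is cited from Cunningham and Cheng rather than proved in the paper, so there is no in-paper argument to compare against; evaluating your sketch on its own terms, the contraction lemma and part~(2) are fine, and the ``cleanness'' framing for part~(1) is the right instinct, but the argument has two genuine gaps.

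The existence step asserts ``any such sequence terminates at $\cD$,'' which is false as stated: an adaptive refinement that stops once every piece is prime or a solid polygon yields \emph{some} standard decomposition $\cD'$, not necessarily the minimal one, because nothing in your process restricts you to good (uncrossed) splits. The gap is fixable --- every standard decomposition refines $\cD$ by Cunningham's minimality theorem, and the fibers of a refined piece $H'$ descending from $H_i\in\cD$ are unions of $\phi_{H_i}$-fibers, so cleanness at $H'$ propagates back to cleanness at $H_i$ --- but this step is missing. A cleaner route avoids the adaptive choice entirely: track $S$ along \emph{any} fixed sequence of good splits producing $\cD$; a mincut $S$ never crosses a good split (good splits are uncrossed by all splits, and trivial mincuts cross nothing), so the problematic case where $S_P$ crosses the refining split simply cannot arise.

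The deeper gap is that ``$\phi_{H_i}(S)$ is a mincut of $H_i$'' is not equivalent to ``$S$ is clean at $H_i$.'' If $S$ is not clean at $H_i$, then $\phi_{H_i}^{-1}(\phi_{H_i}(S))$ is a strict superset of $S$, and by submodular uncrossing of the mincut $S$ with the $\phi_{H_i}$-fibers this superset can still be a mincut of $H$, which would make $\phi_{H_i}(S)$ a mincut of $H_i$ and defeat uniqueness. Your proof establishes uniqueness of the \emph{clean} home but never rules out non-clean $j\neq i$ also having $\phi_{H_j}(S)$ a mincut. The marker-split boundary case you flagged (both pieces adjacent to a shared marker see a mincut) is a symptom of this: as stated here the theorem's ``unique $i$'' is already imprecise for such $S$, and your proposed resolution --- redefining ``unique'' to mean ``clean at one piece, or two for marker splits'' --- changes the claim rather than proving it. A proof of the statement as the paper writes it needs to confront the non-clean alternatives and either exclude them by an uncrossing argument or sharpen the statement to match what Cunningham and Cheng actually assert.
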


Note that each hypergraph in a canonical decomposition is either prime
or a solid polygon and hence it is easy to find all the mincuts in
each of them.  We observe that any decomposition $\cD$ of $H$ can be
compactly represented in $O(n)$ space by simply storing the vertex
sets of the hypergraph in $\cD$.

Recall that a set of edges $E' \subset E$ is called a min edge-cut-set
if $E' = \delta(S)$ for some mincut $S$. As a corollary of the preceding
theorem, one can easily prove that there are at most $n \choose 2$ distinct
min edge-cut-sets in a hypergraph. This fact is not explicitly stated 
in \cite{Cunningham,Cheng} or elsewhere in the literature but 
was known to those familiar with the decomposition theorem.
\begin{corollary}
A hypergraph with $n$ vertices has at most ${n\choose 2}$ distinct
min edge-cut-sets. 
\end{corollary}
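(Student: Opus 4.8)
The plan is to bound the number of distinct min edge-cut-sets by charging each one to a mincut in one of the hypergraphs of the canonical decomposition, and then to bound the total number of such ``local'' mincuts by $\binom{n}{2}$. First I would invoke \autoref{thm:cutmap}: it gives a map from mincuts of $H$ to mincuts of the members $H_1,\ldots,H_k$ of the canonical decomposition. The key point, though, is that we care about distinct \emph{edge-cut-sets}, not distinct vertex bipartitions, so I want to argue that two mincuts $S,S'$ of $H$ with $\delta_H(S)=\delta_H(S')$ get mapped (by the map in part (1) of \autoref{thm:cutmap}) into the \emph{same} $H_i$ and moreover induce the same edge-cut-set there. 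Conversely, distinct edge-cut-sets of $H$ must land in distinct mincuts of the $H_i$'s (possibly in the same $H_i$). So it suffices to bound, summed over $i$, the number of distinct min edge-cut-sets of each $H_i$.

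Next I would handle the per-hypergraph count. Each $H_i$ in the canonical decomposition is either prime or a solid polygon. If $H_i$ is prime, all its mincuts are trivial, so there are at most $|V(H_i)|$ of them, hence at most $|V(H_i)|$ min edge-cut-sets. If $H_i$ is a solid polygon on $n_i$ vertices, its min edge-cut-sets are determined by choosing a pair of cycle edges (together with the spanning hyperedge, in the semi-brittle case; in the brittle case any nontrivial cut works but the edge-cut-set is still determined by the ``gaps'' it creates on the cycle), so there are $O(n_i^2)$ of them, and in fact at most $\binom{n_i}{2}$ once one counts carefully. The main arithmetic fact I need is that the decomposition tree has the telescoping property $\sum_i (|V(H_i)| - 1) = n - 1$ when we discount marker vertices appropriately --- each simple refinement replaces one hypergraph by two that share a single marker vertex, so it adds exactly one to $\sum_i |V(H_i)|$ while adding one hypergraph, keeping $\sum_i(|V(H_i)|-1)$ invariant and equal to its value for the trivial decomposition $\{H\}$, namely $n-1$.

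Putting these together: if every member is prime we get at most $\sum_i |V(H_i)|$ cut-sets, and a short convexity/induction argument turns $\sum_i(|V(H_i)|-1) = n-1$ into $\sum_i\binom{|V(H_i)|}{2}\le\binom{n}{2}$ (this is just the standard fact that splitting a path-sum only decreases the sum of ``choose 2''), which also dominates the linear bound from the prime pieces and the quadratic bound from the solid-polygon pieces. Hence the total number of distinct min edge-cut-sets of $H$ is at most $\binom{n}{2}$.

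I expect the main obstacle to be the bookkeeping in the first paragraph: one must be careful that the correspondence in \autoref{thm:cutmap} is injective \emph{at the level of edge-cut-sets}, i.e.\ that no two distinct edge-cut-sets of $H$ collapse to the same mincut of some $H_i$, and that the spanning-hyperedge / marker-edge contributions to $\delta$ are tracked correctly under the contractions $\phi_{H_i}$ (edges of $H$ that are entirely contracted away in $H_i$ cannot be part of a cut-set attributed to $H_i$, and conversely a mincut of $H_i$ pulls back via $\phi_{H_i}^{-1}$ to a cut-set of $H$ whose edges are exactly the preimages of the cut edges in $H_i$). Once that injectivity is nailed down, the counting is routine.
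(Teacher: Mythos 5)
Your overall strategy is the same as the paper's: use the canonical decomposition, bound the min edge-cut-sets in each piece (prime or solid polygon), and aggregate. The paper phrases this as an induction over one simple refinement at a time, while you attempt a global accounting over the whole decomposition; morally the same. The genuine flaw is your ``telescoping'' invariant $\sum_i(|V(H_i)|-1) = n-1$: it is false. A simple refinement of a hypergraph on $n$ vertices produces two pieces with $a$ and $b$ vertices where $a+b = n+2$, since the new marker vertex appears in \emph{both} pieces; so $\sum_i |V(H_i)|$ increases by $2$ per refinement, not $1$, and $\sum_i(|V(H_i)|-1) = n+k-2$ where $k$ is the number of pieces. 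This grows with every refinement, so the hypothesis of the ``standard fact'' you invoke ($\sum(n_i-1)=n-1 \Rightarrow \sum\binom{n_i}{2}\le\binom{n}{2}$) never holds.

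The conclusion $\sum_i\binom{|V(H_i)|}{2}\le\binom{n}{2}$ is nevertheless true, but it needs the ingredient your proposal omits: a split is \emph{non-trivial}, so each side has at least two vertices and hence $a,b\ge 3$. With $a+b=n+2$ and $3\le a,b\le n-1$, convexity gives $\binom{a}{2}+\binom{b}{2}\le\binom{3}{2}+\binom{n-1}{2}\le\binom{n}{2}$ for $n\ge 4$, and iterating this one-step bound down the decomposition tree (exactly the paper's induction) finishes the count. Without the constraint $a,b\ge 3$ the inequality fails outright, e.g.\ $\binom{2}{2}+\binom{n}{2}>\binom{n}{2}$, so the non-triviality of splits is the load-bearing fact. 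Your first-paragraph worry about injectivity of the edge-cut-set correspondence under the contractions $\phi_{H_i}$ is reasonable and the paper is also terse there, but that is a matter of spelling out \autoref{thm:cutmap} carefully; the broken arithmetic lemma is the real gap.
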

\begin{proof}
  Let $H$ be a hypergraph on $n$ vertices.  If $H$ is prime, then
  there are at most $n$ min edge-cut-sets.  If $H$ is a solid-polygon,
  then there are at most ${n \choose 2}$ min edge-cut-sets.  Let $\cD$
  be a canonical decomposition of $H$. $\cD$ is obtained via a simple
  refinement $\set{H_1,H_2}$ of $H$ with size $a$ and $b$, followed by
  further refinement. Then, by induction, there are at most ${a\choose
    2}+{b\choose 2}$ min edge-cut-sets in $H_1$ and $H_2$.  Here
  $a+b=n+2$ and $a,b\leq n-1$. Therefore ${a\choose 2}+{b\choose
    2}\leq {3\choose 2}+{n-1\choose 2}\leq {n\choose 2}$ when $n\geq
  4$.
\qed\end{proof}

\subsection{Computing a canonical decomposition}
In this section we describe an efficient algorithm for computing the
canonical decomposition of a hypergraph $H$. 

We say that two distinct splits $(A, \bar{A})$ and $(B, \bar{B})$
cross if $A$ and $B$ cross, otherwise they do not cross.  One can
easily show that every decomposition is equivalently characterized by
the set of non-crossing splits induced by the marker vertices.
Viewing a decomposition as a collection of non-crossing splits
is convenient since it does not impose an order in which the splits
are processed to arrive at the decomposition --- any ordering of 
processing the non-crossing splits will generate the same decomposition.

Call a split \emph{good} if it is a split that is not crossed by any
other split; otherwise the split is called bad.  A canonical
decomposition corresponds to the collection of all good splits.  The
canonical decomposition can be obtained through the set of of good
splits \cite[Theorem 3]{CunninghamE} via the following simple
algorithm.  If $H$ is prime or solid polygon return $\{H\}$
itself. Otherwise find a good split $(A,\bar{A})$ and the simple
refinement $\{H_1,H_2\}$ of $H$ induced by the split and return the
union of the canonical decompositions of $H_1$ and $H_2$ computed
recursively. Unfortunately, finding a good split directly is
computationally intensive.

On the other hand finding a prime decomposition can be done via a
split oracle by a simple recursive algorithm, as we shall see in
Section~\ref{sec:prime-decomposition}.  Note that a prime
decomposition is not necessarily unique.  We will build a canonical
decomposition through a prime decomposition.  This was hinted in
\cite{Cunningham}, but without details and analysis. Here we formally
describe such an algorithm. 

One can go from a prime decomposition to a canonical decomposition by
removing some splits. Removing a split corresponds to gluing two
hypergraphs with the same marker vertex into another hypergraph
resulting in a new decomposition.  We formally define the operation as
follows.  Suppose $\cD$ is a decomposition of $H$ with a marker vertex
$x$ contained in $H_1$ and $H_2$. We define a new contraction of $H$
obtained by \emph{gluing} $H_1$ and $H_2$.  Let $\phi_{H_1}$ and
$\phi_{H_2}$ be the contractions of $H$, respectively.  Define
function $\phi':V\to (V(H_1)\cup V(H_2)) - x$ as follows
\[
\phi'(v) = \begin{cases}\phi_{H_1}(v) &\text{if }\phi_{H_2}(v)=x\\\phi_{H_2}(v) &\text{if }\phi_{H_1}(v)=x\end{cases} 
\]
$H_x$ is the contraction of $H$ defined by $\phi'$.  The gluing of
$\cD$ through $x$ is the set $\cD_x = \cD - \{H_1,H_2\} \cup \{H_x\}$.
The operation reflects removing the split induced by $x$ from 
the splits induced by $\cD$, therefore it immediately implies the following
lemma.

\begin{lemma}
  $\cD_x$ is a decomposition of $H$. Moreover, $\cD_x$ can be computed
  from $\cD$ and $H$ in $O(p)$ time.
\end{lemma}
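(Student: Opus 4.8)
The plan is to prove the two assertions of the lemma separately: first that $\cD_x$ is indeed a decomposition of $H$, and second that it can be computed in $O(p)$ time.

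For the first claim, I would argue via the characterization of decompositions by their induced set of non-crossing splits. A decomposition $\cD$ is obtained from $\{H\}$ by a sequence of simple refinements, and each simple refinement introduces one marker vertex and one corresponding split of $H$; the set of all these splits is non-crossing, and conversely any non-crossing collection of splits determines a decomposition uniquely (this is already noted in the excerpt). The marker vertex $x$ shared by $H_1$ and $H_2$ corresponds to one particular split $(V_1, V_2)$ in this collection, where $V_1 = \phi_{H_1}^{-1}(V(H_1) - x)$ and $V_2 = \phi_{H_2}^{-1}(V(H_2) - x)$ are the two sides. Removing this split from the collection still leaves a non-crossing collection, which therefore determines a decomposition $\cD'$ of $H$. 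The bulk of the work is to verify that $\cD' = \cD_x$ as defined by the gluing operation, i.e.\ that gluing $H_1$ and $H_2$ along $x$ via $\phi'$ produces exactly the hypergraph one gets by performing all the remaining simple refinements. This amounts to checking that $\phi'$ is the contraction corresponding to the vertex classes of $H$ that are \emph{not} separated by any remaining split, and that the hyperedges of $H_x$ are the images of those of $H$ under $\phi'$ — a routine but slightly tedious case analysis on which side of $x$ each class sits, using that $H_1 = H/V_2$ and $H_2 = H/V_1$ up to the earlier refinements.

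For the second claim, the computation is direct: we have the vertex-set representations of $H_1$ and $H_2$ (each stored as the partition $\phi_{H_i}^{-1}$ of $V$), and $\phi'$ is obtained in $O(n)$ time by, for each $v \in V$, taking $\phi_{H_1}(v)$ if $\phi_{H_2}(v) = x$ and $\phi_{H_2}(v)$ otherwise. Then $H_x$, the $\phi'$-contraction of $H$, is built by scanning the edges of $H$ once: each edge $e$ maps to $\phi'(e) = \{\phi'(u) : u \in e\}$, discarding it if it becomes a loop, which takes $O(\sum_{e} |e|) = O(p)$ time. Finally $\cD_x = \cD - \{H_1, H_2\} \cup \{H_x\}$ is assembled in constant additional time beyond updating the decomposition tree.

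The main obstacle is the first claim — specifically, confirming that the algebraic gluing via $\phi'$ really coincides with the decomposition obtained by deleting the split from the non-crossing collection. One has to be careful that $\phi'$ is well-defined (every $v$ satisfies exactly one of $\phi_{H_1}(v) = x$ or $\phi_{H_2}(v) = x$, or neither, in which case $\phi_{H_1}(v) = \phi_{H_2}(v)$ since the two agree off the split), and that the resulting hypergraph is genuinely a member of a decomposition of $H$ rather than just an arbitrary contraction. Since the excerpt already grants the equivalence between decompositions and non-crossing split collections, and explicitly states that the gluing ``reflects removing the split induced by $x$,'' I would lean on that and keep the verification short; the time bound is then immediate from the construction above.
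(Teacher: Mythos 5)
Your proposal matches the paper's treatment, which gives no separate proof block: it simply observes, in the sentence preceding the lemma, that gluing through $x$ realizes the removal of the split induced by $x$ from the non-crossing collection of splits defining $\cD$, and lets that observation carry both the decomposition claim and the $O(p)$ bound. You spell out the same reasoning (non-crossing split collections determine decompositions, one scan over the edges of $H$ to apply $\phi'$) in slightly more detail. One small slip worth fixing: the ``or neither'' case for $\phi'$ cannot occur, since $\phi_{H_1}^{-1}(x)$ and $\phi_{H_2}^{-1}(x)$ partition $V$ — every $v \in V$ is contracted to $x$ in exactly one of $H_1$, $H_2$; the fallback clause $\phi_{H_1}(v) = \phi_{H_2}(v)$ you invoke there is both unnecessary and not generally true.
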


\begin{remark}
  In order to compute $D_x$ implicitly, we only have to obtain
  $\phi_{H_1},\phi_{H_2}$ and compute a single contraction. Therefore
  $O(p)$ space is sufficient if we can obtain $\phi_{H_1}$ and
  $\phi_{H_2}$ in $O(p)$ time and space.
\end{remark}

We need the following simple lemma.

\begin{lemma}
\label{lem:decompofsolidpolygon}
  Let $H$ be a solid polygon. Any decomposition of $H$ is a standard
  decomposition. Thus, if $\cD$ is a decomposition of $H$, for any
  marker vertex, gluing it results in a standard decomposition of $H$.
\end{lemma}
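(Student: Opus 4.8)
The plan is to show that every hypergraph appearing in a decomposition of a solid polygon $H$ is again a solid polygon (possibly with as few as $2$ or $3$ vertices, in which case it is trivially prime), and then invoke the definition of standard decomposition. First I would examine what a split of a solid polygon $H$ on vertex set $\{u_1,\dots,u_\ell\}$ with cycle edges $u_iu_{i+1}$ of capacity $a$ and full hyperedge of capacity $b$ looks like. As stated in the excerpt, when $\ell \ge 4$ every split is a contiguous arc $(A,\bar A)$ of the cycle together with the full hyperedge; the cut edges are the two cycle edges straddling the arc plus the hyperedge. I would verify directly that performing the simple refinement along such a split produces two hypergraphs $H_1 = H/\bar A$ and $H_2 = H/A$: contracting the arc $\bar A$ to the marker $x$ merges the two boundary cycle edges incident to $\bar A$ into cycle edges incident to $x$ (and the hyperedge, now spanning $A\cup\{x\}$, remains a single hyperedge on all vertices of $H_1$), so $H_1$ is again a cycle of capacity-$a$ edges plus a capacity-$b$ hyperedge on all its vertices, i.e.\ a solid polygon; symmetrically for $H_2$. (The degenerate cases where $H_1$ or $H_2$ has $2$ or $3$ vertices are solid polygons in the trivial sense, or one may simply note they are prime.)

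Next I would run an induction on the number of simple refinement steps used to build a decomposition $\cD$ of $H$. The base case $\cD = \{H\}$ is immediate since $H$ itself is a solid polygon. For the inductive step, a decomposition $\cD$ is obtained from some $\cD'$ by replacing one member $H_i \in \cD'$ by its simple refinement along a split of $H_i$; by the inductive hypothesis $H_i$ is a solid polygon (or prime, in which case it has no split and cannot be refined), and by the previous paragraph its simple refinement consists of two solid polygons. Hence every member of $\cD$ is a solid polygon, so every member is prime or a solid polygon, i.e.\ $\cD$ is standard. This proves the first sentence of the lemma.

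The second sentence then follows essentially for free: by \autoref{lem:decompofsolidpolygon}'s first part, $\cD$ is a standard decomposition of $H$; gluing along any marker vertex $x$ yields, by the gluing lemma, another decomposition $\cD_x$ of $H$, and applying the first part again to $\cD_x$ shows it too is standard. I expect the only mildly delicate point — the ``hard part,'' such as it is — to be checking that contracting a cyclic arc of a solid polygon genuinely yields a solid polygon rather than some degenerate object: one must make sure the hyperedge, after contracting a set of vertices, still spans exactly the vertex set of the contracted hypergraph and that no two cycle edges collapse into a loop or a multi-edge in a way that breaks the solid-polygon structure. This is a routine case check on the arc length, and I would dispatch the small cases ($|A|$ or $|\bar A|$ equal to $1$ or $2$) explicitly.
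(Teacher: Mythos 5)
Your overall strategy — induction on the number of simple refinement steps, verifying that a simple refinement of a solid polygon again consists of solid polygons (or primes), and then deducing the second sentence from the first plus the gluing lemma — is the same as the paper's. However, there is a genuine gap in your case analysis: you assert that every split of a solid polygon on $\ell \ge 4$ vertices is a \emph{contiguous arc} of the cycle, and your verification that contraction preserves the solid-polygon structure hinges on this. That description of splits is correct only for the \emph{semi-brittle} case ($a > 0$). For a \emph{brittle} solid polygon ($a = 0$, i.e.\ the cycle edges carry no capacity), every non-trivial bipartition $(A,\bar A)$ has cut value $b$ and is therefore a split, so $A$ need not be contiguous at all. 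Your argument that ``contracting the arc $\bar A$ merges the two boundary cycle edges into cycle edges incident to $x$'' simply does not apply to an arbitrary, possibly scattered $A$, and the brittle case is not covered.

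The fix is what the paper does: split into two subcases. In the semi-brittle case your arc argument works. In the brittle case the hypergraph is (up to the irrelevant zero-capacity cycle edges) just a single hyperedge covering all vertices, and any contraction of that is again a single hyperedge covering all vertices, hence a (brittle) solid polygon. Also, the item you flagged as ``the hard part'' — the small-arc degeneracies — is not really where the subtlety lies; since a split has $\ge 2$ vertices on each side, each member of the refinement has $\ge 3$ vertices, and a $3$-vertex solid polygon is prime, so nothing degenerate occurs. The real missing case is the brittle one.
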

\begin{proof}
  We first prove that any decomposition of $H$ is a standard
  decomposition.  This is by induction. If the solid polygon consists
  of a cycle with positive capacity, then exactly two edges in the cycle
  and the edge that contains all vertices crosses a split. One can
  verify that contraction of either side of the split results in a
  solid polygon or a prime hypergraph.  Otherwise, the solid polygon is a single
  hyperedge covering all vertices. Any contraction of this hypergraph
  is a solid polygon.

  The second part of the lemma follows from the first and the fact that
  gluing results in a decomposition.
\qed\end{proof}

The following lemma is easy to see. 
\begin{lemma}
  Given a hypergraph $H$ there is an algorithm to check if 
  $H$ is a solid polygon in $O(p)$ time.
\end{lemma}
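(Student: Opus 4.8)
The plan is to give a direct combinatorial characterization of solid polygons that can be checked in linear time. Recall a solid polygon on $V$ consists of a Hamiltonian cycle with all cycle-edges of a common capacity $a$ together with one ``covering'' hyperedge equal to $V$ with capacity $b$; if $a=0$ the cycle edges vanish and the hypergraph is just the single covering hyperedge (the brittle case). So the check splits into two regimes: either $H$ has exactly one edge and that edge is all of $V$ (brittle case, trivial to verify), or $H$ has $n$ rank-$2$ edges forming a single cycle on $V$, all with equal capacity, plus exactly one rank-$n$ edge equal to $V$. I would also handle the degenerate small cases ($n\le 3$) separately, where a solid polygon need not look like the generic picture.

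The key step is recognizing when the rank-$2$ edges of $H$ form a simple cycle spanning $V$. First I would bucket the edges of $H$ by rank in $O(p)$ time (one pass, since $\sum_e |e| = p$); reject immediately unless the multiset of ranks is exactly $\{2,2,\ldots,2,n\}$ ($n$ copies of $2$ and one copy of $n$) in the semi-brittle case, or $\{n\}$ in the brittle case. Then I would verify the unique rank-$n$ edge equals $V$ (just check $|e| = n$, since edges are subsets of $V$), and check that all rank-$2$ edges have a common capacity $a>0$ (a single scan). Finally, to confirm the rank-$2$ edges form a spanning cycle: build the graph $G'$ on $V$ with just those edges, check every vertex has degree exactly $2$ (a counting pass over the $2n$ endpoints), and check $G'$ is connected via a single BFS/DFS from an arbitrary vertex in $O(n)$ time. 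A $2$-regular connected graph on $n$ vertices is precisely a Hamiltonian cycle, so these conditions together are equivalent to $H$ being a solid polygon. All steps are single passes over the edge list or the vertex set, giving total time $O(p)$.

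The main obstacle is not algorithmic difficulty but getting the boundary cases exactly right: for $n=1,2,3$ the generic ``cycle plus cover'' description degenerates (e.g.\ a ``cycle'' on $2$ vertices would need a doubled edge, and on $3$ vertices the covering hyperedge coincides with triangle structure), and one must make sure the recursive algorithm that invokes this test treats these as prime rather than as non-trivially decomposable solid polygons. I would dispatch on $n$ first: for small $n$ return the appropriate constant-time answer (consistent with the convention, stated earlier, that a solid polygon is not prime only when it has at least $4$ vertices), and for $n\ge 4$ run the generic check above. Correctness then follows directly from unwinding the definition of solid polygon, and the running time bound is immediate since every sub-check is a single linear scan.
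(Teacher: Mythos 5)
The paper states this lemma without a proof, flagging it only as ``easy to see,'' so there is no argument in the paper to compare against. Your elaboration is correct: bucketing edges by rank in a single pass, rejecting unless the rank multiset is $\{n\}$ (brittle) or $n$ copies of $2$ plus one $n$ (semi-brittle), verifying the rank-$n$ edge equals $V$, checking the rank-$2$ edges share a common positive capacity, and then confirming they form a $2$-regular connected graph on $V$ are precisely the right $O(p)$ checks, and a $2$-regular connected loopless graph on $n\geq 3$ vertices is indeed a Hamiltonian cycle. Your reading of the brittle case as a single covering hyperedge also matches the paper's own usage in the proof of Lemma~\ref{lem:decompofsolidpolygon} (``the solid polygon is a single hyperedge covering all vertices''), and your decision to dispatch on small $n$ first is consistent with the fact that solid polygons only matter as non-prime hypergraphs once $n\ge 4$. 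So this is a correct, self-contained fill-in of a step the paper treated as routine.
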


Adding a split corresponds to a simple refinement. 
Therefore a decomposition $\cD'$ is a refinement of $\cD$ then
the set of induced splits of $\cD$ is a subset of induced splits of $\cD'$. 

Consider the following algorithm that starts with a prime
decomposition $\cD$. For each marker $x$, inspect if gluing through
$x$ results in a standard decomposition; one can easily check via the
preceding lemma whether the gluing results in a solid polygon which is
the only thing to verify. If it is, apply the gluing, if not, move on
to the next marker. Every marker will be inspected at most once,
therefore the algorithm stops after $O(n)$ gluing operations and takes
time $O(np)$. Our goal is to show the correctness of this simple
algorithm.

The algorithm starts with a prime decomposition $\cD$ which is a
standard decomposition. If it is minimal then it is canonical and no
gluing can be done by the algorithm (otherwise it would violate
minimality) and we will output a canonical decomposition as
required. If $\cD$ is not minimal then there is a canonical
decomposition $\cD^*$ such that $\cD$ is a strict refinement of
$\cD^*$. Let $\cD^* = \{H_1,H_2,\ldots, H_k\}$ where each $H_i$ is
prime or a solid polygon. Therefore $\cD = \cup_{i=1}^k \cD_i$ where
$\cD_i$ is a refinement of $H_i$. If $H_i$ is prime than $\cD_i =
\{H_i\}$. If $H_i$ is a solid polygon then $\cD_i$ is a standard
decomposition of $H_i$. Our goal is to show that irrespective of the
order in which we process the markers in the algorithm, the output will
be $\cD^*$. Let the marker set for $\cD^*$ be $M^*$ and that for
$\cD$ be $M \supset M^*$. Suppose the first marker considered by
the algorithm is $x$. There are two cases.

The first case is when $x \in M - M^*$. In this case the marker $x$
belongs to two hypergraphs $G_1$ and $G_2$ both belonging to some
$\cD_i$ where $H_i$ is a solid polygon. The algorithm will glue $G_1$
and $G_2$ and from \autoref{lem:decompofsolidpolygon}, this gives a
smaller standard decomposition $\cD'_i$ of $H_i$.

The second case is when the marker $x \in M^*$. Let $x$ belong to two
hypergraphs $G_1$ and $G_2$ where $G_1 \in \cD_i$ and $G_2 \in \cD_j$
where $i \neq j$. In this case we claim that the algorithm will not
glue $G_1$ and $G_2$ since gluing them would not result in a standard
decomposition. To see this, let $\cD'$ be obtained through gluing of
$G_1$ and $G_2$. The split induced by $x$ is in $\cD^*$ but not
$\cD'$. Because the splits induced by $\cD^*$ is not a subset of
splits induced by $\cD'$, $\cD'$ is not a refinement of $\cD^*$.
However, as we noted earlier, every standard decomposition is
a refinement of $\cD^*$. Hence $\cD'$ is not a standard decomposition. 

From the two cases we see that no marker in $M^*$ results in a gluing
and every marker in $M-M^*$ results in a gluing. Thus the algorithm
after processing $\cD$ outputs $\cD^*$. This yields the following
theorem.

\begin{theorem}
\label{thm:candecompfromprimedecomp}
A canonical decomposition can be computed in $O(np)$ time given a
prime decomposition.
\end{theorem}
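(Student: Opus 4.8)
The statement packages the algorithm and analysis developed in the paragraphs just above, so my plan is to assemble that material into a clean proof. First I would fix the algorithm precisely: starting from the given prime decomposition $\cD$ (which is in particular standard), process its markers one at a time in an arbitrary order; for a marker $x$ shared by members $H_1,H_2$ of the current decomposition, form the glued hypergraph $H_x$ and test in $O(p)$ time whether $H_x$ is a solid polygon; if so, replace $\{H_1,H_2\}$ by $\{H_x\}$, otherwise leave the decomposition unchanged, and in either case never reconsider $x$. Since a decomposition of an $n$-vertex hypergraph has $O(n)$ markers (it can be stored in $O(n)$ space), there are $O(n)$ iterations, each costing $O(p)$ --- an $O(p)$ gluing by the earlier lemma, or an $O(p)$ solid-polygon test --- for a total of $O(np)$ time. (If $\cD=\{H\}$ it is already minimal, hence canonical, and there is nothing to do.)

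For correctness I would fix a canonical decomposition $\cD^*$ with marker set $M^*$ and prove, by induction on the number of processed markers, the invariant that the current decomposition is standard, refines $\cD^*$, and has had no marker of $M^*$ glued (so its marker set contains $M^*$). The base case holds since the initial prime $\cD$ is standard, hence a refinement of $\cD^*$. For the step, consider the marker $x$ being processed. If $x\in M^*$: gluing through $x$ deletes a split of $\cD^*$, so the result is not a refinement of $\cD^*$; since every standard decomposition refines $\cD^*$ and all members but $H_x$ are unchanged and standard, $H_x$ is neither prime nor a solid polygon, so the test fails and the algorithm correctly declines to glue, leaving the invariant intact. If $x\notin M^*$: the split of $x$ refines a single solid-polygon member $H_i$ of $\cD^*$, and the part of the current decomposition that refines $H_i$ is a decomposition of $H_i$, hence (by \autoref{lem:decompofsolidpolygon} and the structure of contractions of solid polygons exhibited in its proof) a standard decomposition whose members are all solid polygons; thus the two pieces meeting at $x$ are solid polygons with at least three vertices each --- every member of a decomposition reached by non-trivial refinements has at least three vertices --- so $H_x$ is a solid polygon on at least four vertices, the test succeeds, the gluing is performed, and by \autoref{lem:decompofsolidpolygon} the new decomposition is still standard and still refines $\cD^*$ (only the non-$M^*$ split of $x$ was removed). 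Once all $O(n)$ markers have been handled, every non-$M^*$ marker has been glued away and no $M^*$ marker has, so the surviving marker set is exactly $M^*$; a standard decomposition that refines $\cD^*$ and shares its marker set must equal $\cD^*$ up to relabeling of markers, i.e.\ it is the canonical decomposition.

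The step I expect to require the most care is justifying that the cheap test ``is $H_x$ a solid polygon'' faithfully implements the expensive test ``is the new decomposition standard.'' This rests on the observation above that a gluing which keeps the decomposition standard always produces a solid polygon on at least four vertices, never a prime hypergraph --- a prime hypergraph is never the outcome of a legal gluing, so the solid-polygon test loses nothing. The other delicate point is bookkeeping under an arbitrary processing order: a marker $x\notin M^*$ may already have been absorbed into a larger piece by earlier gluings, but it still lies inside the (now coarser) sub-decomposition of the same solid-polygon member $H_i$ of $\cD^*$, so its case analysis is unaffected; and once $x\in M^*$ has been declined it can be skipped forever, since later gluings only remove non-$M^*$ splits. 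With these points settled, the running time is immediate from the $O(p)$ cost of a single gluing established earlier.
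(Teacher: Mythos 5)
Your proof is correct and follows essentially the same approach as the paper: the same marker-by-marker gluing algorithm with the solid-polygon test, the same appeal to \autoref{lem:decompofsolidpolygon}, and the same case split on whether the marker belongs to $M^*$ or $M-M^*$. You make the induction over processed markers explicit and spell out why the cheap solid-polygon check correctly implements the ``is the result standard'' test (since a glued piece always has the split induced by $x$ and so is never prime), which the paper leaves more implicit, but the underlying argument is the same.
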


Next we describe an $O(np+n^2\log n)$ time algorithm to compute a prime decomposition.

\subsection{Computing a prime decomposition}
\label{sec:prime-decomposition}
We assume there exists an efficient \emph{split oracle}. Given a hypergraph $H$
and the value of the mincut $\lambda$, the split oracle finds a split
in $H$ or returns a pair $\set{s,t}$, such that there is no $s$-$t$ split
in $H$. In the latter case we would like to recurse on the hypergraph
obtained by contracting $\set{s,t}$ into a single vertex. In order
to recover the solution, we define how we can uncontract the contracted 
vertices.

\begin{figure}[htb]
\begin{algo}
\textsc{Prime}$(H,\lambda)$\+
\\   if $|V(H)|\ge 4$\+
\\      $x \gets$ a new marker vertex
\\      query the split oracle with $H$ and $\lambda$
\\      if oracle returns a split $(S,V(H)-S)$\+
\\          $\set{H_1,H_2} \gets \textsc{Refine}(H,(S,V(H)-S),x)$
\\          return $\textsc{Prime}(H_1,\lambda) \cup \textsc{Prime}(H_2,\lambda)$\-
\\      else the oracle returns $\set{s,t}$\+
\\          $\cD'\gets \textsc{Prime}(H/\set{s,t},\lambda)$, $\set{s,t}$ contracts to $v_{\set{s,t}}$
\\          $G' \gets$ the member of $\cD'$ that contains $v_{\set{s,t}}$
\\          $G \gets $ uncontract $v_{\set{s,t}}$ in $G'$ with respect to $H$
\\          if $(\set{s,t},V(G)-\set{s,t})$ is a split in $G$\+
\\              $\set{G_1,G_2}\gets$ refinement of $G$ induced by $\set{s,t}$
\\              $\cD\gets (\cD' - \set{G'})\cup\set{G_1,G_2}$   \-
\\          else\+
\\              $\cD\gets (\cD' - \set{G'})\cup G$\-
\\          return $\cD$ \-\-
\\   else\+
\\      return $\set{H}$
\-
\end{algo}
\caption{The algorithm for computing a prime decomposition.}
\label{alg:prime}
\end{figure}

\begin{definition}
  Consider a hypergraph $H$. Let $H' = H/\set{s,t}$, where $\set{s,t}$
  is contracted to vertex $v_{\set{s,t}}$.  Let $G'$ be a
  $\phi'$-contraction of $H'$ such that
  $\phi'(v_{\set{s,t}})=v_{\set{s,t}}$.  We define uncontracting
  $v_{\set{s,t}}$ in $G'$ with respect to $H$ as a graph $G$ obtained
  from a $\phi$-contraction of $H$, where $\phi$ is defined as
 
\[
\phi(v) = \begin{cases} \phi'(v) & \text{if } v\not\in \set{s,t}\\
                          v       & otherwise\end{cases}
\]
\end{definition}

See \autoref{alg:prime} for a simple recursive 
algorithm that computes a prime decomposition based on the split oracle.
The following lemma justifies the soundness of recursing
on the contracted hypergraph when there is no $s$-$t$ split.

\begin{lemma}
\label{lem:contractst}
Suppose $H$ is a hypergraph with no $s$-$t$ split for some $s,t \in V(H)$.
Let $H' = H/\set{s,t}$, where $\set{s,t}$ is contracted to vertex
$v_{\set{s,t}}$.  Let $\cD'$ be a prime decomposition of $H'$, 
and let $G' \in \cD'$ such that $G'$ contains vertex $v_{\set{s,t}}$. 
And let $G$ be obtained through uncontracting $v_{\set{s,t}}$ in $G'$ 
with respect to $H$.

\begin{enumerate}
\item Suppose $\set{s,t}$ defines a split in $G$ and
let $\set{G_1,G_2}$ be a simple refinement of $G$ based on this split. 
Then $\cD = (\cD' -\set{G'}) \cup \set{G_1,G_2}$ is a prime decomposition of $H$.
\item If  $\set{s,t}$ does not define a split in $G$ then
  $\cD = (\cD' -\set{G'})\cup \set{G}$ is a prime decomposition of $H$.
\end{enumerate}
\end{lemma}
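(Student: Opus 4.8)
\textbf{Proof proposal for \autoref{lem:contractst}.}

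The plan is to verify three things for the claimed decomposition $\cD$: (a) that it is genuinely a decomposition of $H$ (i.e., obtainable from $\set{H}$ by a sequence of simple refinements), (b) that every member is prime, and (c) that nothing is ``lost'' — that the decomposition really captures the structure of $H$ rather than just $H'$. The key fact I would lean on throughout is that no $s$-$t$ split exists in $H$, so every mincut of $H$ keeps $s$ and $t$ on the same side; hence the mincuts of $H$ are in bijection with the mincuts of $H' = H/\set{s,t}$ (contracting a set that is never separated by a mincut preserves exactly the mincut structure, and $\lambda(H') = \lambda(H)$). This means the good splits of $H$ and $H'$ correspond, and more generally the splits of $H$ correspond to those splits of $H'$ together with possibly one extra split inside the ``fiber'' $\set{s,t}$ — namely $(\set{s,t}, V(H)\setminus\set{s,t})$, provided this cut is actually a mincut.

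For part (1), I would argue as follows. Since $\cD'$ is a prime decomposition of $H'$, it arises from a sequence of simple refinements along non-crossing splits of $H'$; lifting each such split through the contraction $\set{s,t}\mapsto v_{\set{s,t}}$ (via the uncontraction operation in the Definition preceding the lemma) gives the corresponding non-crossing splits of $H$, and applying those simple refinements to $H$ produces exactly $(\cD' - \set{G'}) \cup \set{G}$ — i.e., $G$ is the piece of this partial decomposition of $H$ that contains the fiber $\set{s,t}$, and all other members coincide with those of $\cD'$. So $(\cD' - \set{G'})\cup\set{G}$ is already a decomposition of $H$. If $\set{s,t}$ defines a split in $G$, then refining $G$ by that split is one more simple refinement, so $\cD = (\cD'-\set{G'})\cup\set{G_1,G_2}$ is a decomposition of $H$. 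It remains to check primality: all members other than $G_1, G_2$ are members of $\cD'$ hence prime; $G_1$ and $G_2$ are prime because any split of $G_1$ or of $G_2$ would lift to a split of $G$ not crossing the $\set{s,t}$-split, hence to a split of $H'$ (after re-contracting $\set{s,t}$, which is legitimate since that split keeps $s,t$ together), contradicting primality of $G'$ in $\cD'$. Here one uses that $\lambda(G) = \lambda(G_1) = \lambda(G_2) = \lambda$ by the remark that simple refinement preserves mincut value, so ``split'' means the same thing at each level.

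For part (2), when $\set{s,t}$ does \emph{not} define a split in $G$, I would show $G$ itself is prime, so $\cD = (\cD'-\set{G'})\cup\set{G}$ is a prime decomposition of $H$ directly (it is a decomposition of $H$ by the lifting argument above). Suppose for contradiction $G$ had a split $(A,\bar A)$. Since $\set{s,t}$ is not a split of $G$, the cut $(A,\bar A)$ is distinct from that one, so either $A$ contains both $s,t$ or neither, or $(A,\bar A)$ crosses $\set{s,t}$; but a crossing split would, restricted appropriately, contradict the assumption that $H$ (equivalently $G$, since $G$ is a contraction of $H$ onto a superset of what matters) has no $s$-$t$ split — more carefully, any split of $G$ separating $s$ from $t$ pulls back to an $s$-$t$ split of $H$. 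So $A$ keeps $s,t$ together, and then re-contracting $\set{s,t}$ in $G$ yields $G'$ with a corresponding split, contradicting primality of $G'$.

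The main obstacle I anticipate is the bookkeeping in the lifting/uncontraction step: making precise that applying the same sequence of simple refinements to $H$ as the one that produced $\cD'$ from $H'$ yields a decomposition whose non-$G'$ members are literally unchanged and whose $G'$-member becomes the uncontracted $G$. This requires checking that the splits of $H'$ used in building $\cD'$ remain valid (non-crossing) splits when lifted to $H$ — which is where ``no $s$-$t$ split in $H$'' does the work, since it guarantees $\lambda(H)=\lambda(H')$ and that each lifted cut has the same cut value — and that the order of processing non-crossing splits is immaterial (already noted in the text). Once that correspondence is nailed down, parts (1) and (2) reduce to the short primality arguments above.
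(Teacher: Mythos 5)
Your proposal is correct and follows essentially the same approach as the paper: establish that $(\cD'-\set{G'})\cup\set{G}$ is a decomposition of $H$ because splits of $H'$ lift to splits of $H$, then show that any split of $G$ other than $(\set{s,t},V(G)\setminus\set{s,t})$ would contract to a split of $G'$ (or pull back to an $s$-$t$ split of $H$), contradicting primality. The paper is terser — for primality of $G_1,G_2$ it simply observes that $(\set{s,t},V(G)\setminus\set{s,t})$ is the unique split of $G$, from which primality of both refinement pieces is immediate — but the underlying argument is the same as yours.
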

\begin{proof}
  Every split in $H'$ is a split in $H$.  Therefore $(\cD'-
  \set{G'})\cup \set{G}$ is a decomposition of $H$. Other than $G$, all other
  elements in $(\cD'-
  \set{G'})\cup \set{G}$ are prime. 

  If $G$ is not prime, then there is a split. There is no $s$-$t$ split
  in $G$ because $H$ does not have any $s$-$t$ split. Any split in $G$
  must have the form $(A,V(G)-A)$ where $\set{s,t}\subset A$. If
  $A\neq \set{s,t}$, then there exist some other vertex $v\in A$,
  which implies $|A-\set{s,t}\cup \set{v_{\set{s,t}}}|\geq 2$, and
  $(A-\set{s,t}\cup \set{v_{\set{s,t}}},V(G')-A)$ is a split in $G'$,
  a contradiction to the fact that $G'$ is prime. Hence
  $(\set{s,t},V(G) - \set{s,t})$ is the unique split in $G$.
  Therefore the simple refinement of $G$ based on this unique split
  are both prime, and we reach the first case.

  If $G$ is prime, then we are done, as we reach the second case. 
\qed\end{proof}

\begin{theorem}
\textsc{Prime}$(H,\lambda)$ outputs a prime decomposition in $O(n(p+T(n,m,p)))$ time. Where $T(n,m,p)$ is the time to query split oracle with a hypergraph of $n$ vertices, $m$ edges and sum of degree $p$. 
\end{theorem}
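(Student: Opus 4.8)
The plan is to analyze the recursion tree of \textsc{Prime}$(H,\lambda)$ and bound both its size and the work done at each node. Correctness follows directly from \autoref{lem:contractst}: when the split oracle returns a split, the refinement step produces a valid decomposition by definition of simple refinement, and the two recursive calls produce prime decompositions of the two parts by induction; when the oracle returns a pair $\set{s,t}$ with no $s$-$t$ split, \autoref{lem:contractst} tells us that either $G$ is already prime, or its unique non-trivial cut is $(\set{s,t}, V(G)-\set{s,t})$ whose simple refinement yields two prime hypergraphs, and in both subcases $(\cD'-\set{G'})\cup(\text{one or two prime pieces})$ is a prime decomposition of $H$. The base case $|V(H)|\le 3$ is trivially prime. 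So the output is always a prime decomposition.

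For the running time, I would charge the work to the structure of the final decomposition $\cD$. Each hypergraph that ends up in $\cD$ corresponds to a leaf of the recursion where either $|V(H)|\le 3$, or the split oracle returned a split that immediately partitioned a $4$-vertex-or-more hypergraph down toward the leaves, or the "no $s$-$t$ split" branch produced one (or two) prime pieces. A prime decomposition of an $n$-vertex hypergraph has $O(n)$ members (each simple refinement adds one marker vertex and one hypergraph, and markers are shared, so the decomposition tree has $O(n)$ nodes). The recursion tree of the algorithm has the members of $\cD$ (plus $O(1)$ extra nodes per member from the intermediate calls) as its effective node set, so it has $O(n)$ nodes total. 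At each node we do: one split-oracle query costing $T(n,m,p)$; one call to \textsc{Refine}, one contraction $H/\set{s,t}$, one uncontraction, one gluing/refinement update, and a check whether $\set{s,t}$ defines a split — each of these is $O(p)$ by the earlier lemmas and remarks (gluing/uncontraction/contraction are $O(p)$, and $\textsc{Refine}$ is $O(p)$). Summing $O(p+T(n,m,p))$ over $O(n)$ nodes gives $O(n(p+T(n,m,p)))$.

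The main obstacle I anticipate is making the $O(n)$ bound on the number of recursion nodes fully rigorous, since the recursion is not obviously "balanced" — in the worst case the split oracle could repeatedly return a pair $\set{s,t}$ rather than an actual split, producing a long chain of contractions. I would handle this by a potential argument: contracting $\set{s,t}$ reduces the vertex count by one, so the "no $s$-$t$ split" branch can be taken at most $n$ times along any root-to-leaf path, and more carefully, the total number of contraction-branch calls over the whole recursion is $O(n)$ because each strictly reduces $|V|$ of the current subproblem and the subproblems along a branch are nested; meanwhile each split-branch call strictly increases the number of pieces in the eventual decomposition, which is bounded by $O(n)$. Combining, the total number of calls is $O(n)$. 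One should also double-check that the sum-of-degrees $p$ and edge count $m$ of the subproblems do not exceed those of $H$ — contraction and trimming only decrease $p$, and refinement splits the hypergraph into pieces whose total size is $O(p+n)$ — so that $T(n,m,p)$ is a valid upper bound for the cost at every node; I would state this monotonicity explicitly before invoking it.

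Plugging \autoref{thm:split-oracle}, which gives $T(n,m,p)=O(p+n\log n)$ for capacitated hypergraphs, yields a prime decomposition in $O(n(p+n\log n)) = O(np+n^2\log n)$ time, and combined with \autoref{thm:candecompfromprimedecomp} this gives the claimed bound for computing a canonical decomposition.
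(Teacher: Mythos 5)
Your proposal is correct in structure and takes essentially the same approach as the paper's (which is extremely terse: it just asserts that \textsc{Prime} is called at most $2n$ times and that each call costs $O(p+T(n,m,p))$, citing \autoref{lem:contractst} for correctness). Your correctness argument via \autoref{lem:contractst} and induction is exactly right, and your $O(p)$ accounting of the per-call non-oracle work (contraction, uncontraction, checking whether $\set{s,t}$ is a split, updating $\cD$) matches.

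The one place your justification is not quite airtight --- and you flag this yourself --- is the $O(n)$ bound on the total number of recursive calls. Your claim that ``the total number of contraction-branch calls is $O(n)$ because each strictly reduces $|V|$'' does not follow immediately: a split call replaces a subproblem on $v$ vertices by two subproblems on $a$ and $b$ vertices with $a+b=v+2$, so the aggregate vertex count over active subproblems goes \emph{up}, and in principle the descendants of both sides could each do many contractions. The clean way to close this is a single potential function that treats both branch types uniformly: assign potential $|V(H')|-3$ to each active subproblem $H'$. A contraction step sends $v$ to $v-1$, decreasing potential by $1$; a split step sends $v$ to $a+b$ with $a+b=v+2$ and $a,b\geq 3$, sending potential $(v-3)$ to $(a-3)+(b-3)=v-4$, again decreasing by $1$; every leaf has $|V|\leq 3$ and hence nonpositive potential. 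Starting from $n-3$, there are therefore at most $n-3$ internal calls, and since each split increases the leaf count by exactly one, there are at most $n-2$ leaves, for $\leq 2n-5$ calls total --- matching the paper's ``$2n$.'' Everything else in your proposal, including the observation that $n$, $m$, and $p$ are monotone under both contraction and refinement (so $T(n,m,p)$ uniformly bounds each oracle call), is correct.
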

\begin{proof}
Using induction and \autoref{lem:contractst}, the correctness of the algorithm is clear. \textsc{Prime} is called at most $2n$ times, and each call takes $O(p+T(n,m,p))$ time. 
\qed\end{proof}

Using the split oracle from \autoref{thm:split-oracle} we obtain the following corollary.
\begin{corollary}
  \label{cor:primedecomp}
  A prime decomposition of a capacitated hypergraph can be computed in
  $O(np + n^2 \log n)$ time. For uncapacitated hypergraphs it can be
  computed $O(np)$ time.
\end{corollary}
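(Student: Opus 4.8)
The plan is to combine the running-time theorem for \textsc{Prime} proved immediately above with the split oracle of \autoref{thm:split-oracle}; the only ingredient not yet on the table is the mincut value $\lambda(H)$, which \textsc{Prime} takes as an input.

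First I would compute $\lambda = \lambda(H)$. For capacitated hypergraphs this can be done in $O(np + n^2\log n)$ time by any of the ordering-based pendant-pair algorithms (Queyranne's, Klimmek--Wagner, or Mak--Wong): repeatedly compute an ordering, use \autoref{lem:tightorderlasttwo} to extract a pendant pair, contract it, and recurse. For uncapacitated hypergraphs the same scheme runs in $O(np)$ time. This preprocessing will be dominated by the cost of the remaining computation in both regimes, so it is effectively free.

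Next I would invoke the theorem stating that \textsc{Prime}$(H,\lambda)$ outputs a prime decomposition in $O(n(p + T(n,m,p)))$ time, where $T(n,m,p)$ is the time to answer a split-oracle query on a hypergraph with $n$ vertices, $m$ edges, and sum of degrees $p$. Before substituting, I would record one routine but necessary bookkeeping observation: every hypergraph arising in the recursion is obtained from $H$ by a sequence of vertex contractions (together with the introduction of marker vertices in \textsc{Refine}), so it has at most $n$ vertices and sum of degrees at most $p$; furthermore we may discard singleton edges, which lie in no $\delta(S)$, so $m = O(p)$ throughout. Hence the per-call bound $O(p + T(n,m,p))$ is uniform over the whole recursion tree, and the overall $O(n(p + T(n,m,p)))$ bound holds with the original parameters $n$ and $p$.

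Finally I would plug in \autoref{thm:split-oracle}: in the capacitated case $T(n,m,p) = O(p + n\log n)$, so the total is $O\bigl(n(p + p + n\log n)\bigr) = O(np + n^2\log n)$; in the uncapacitated case $T(n,m,p) = O(p)$, so the total is $O(np)$. Adding the preprocessing cost of computing $\lambda$ changes neither bound. I do not expect a genuine obstacle here, since the corollary is a direct consequence of the two cited results; the one point requiring care is the bookkeeping in the previous paragraph — confirming that the recursive subproblems never have parameters exceeding those of the original input, so that the per-call cost and hence the total can be charged against $n$, $p$ as stated.
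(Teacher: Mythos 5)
Your proposal is correct and follows the same route the paper takes: the corollary is a direct substitution of the split-oracle running times from \autoref{thm:split-oracle} into the $O(n(p+T(n,m,p)))$ bound for \textsc{Prime}. Your added bookkeeping (first computing $\lambda$, and verifying that every hypergraph arising in the recursion has at most $n$ vertices and sum of degrees at most $p$ — which holds because a crossing edge loses at least one vertex in each of $H_1,H_2$, compensating for the new marker) is a reasonable and correct filling-in of details the paper leaves implicit.
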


\subsection{Reducing space usage}
\label{sec:space}
Our description of computing the prime and canonical decompositions did not
focus on the space usage. A naive implementation can use $\Omega(np)$ space
if we store each hypergraph in the decomposition explicitly. Here we
briefly describe how one can reduce the space usage to $O(p)$ by
storing a decomposition implicitly via a \emph{decomposition tree}.

Consider a decomposition $\cD=\set{H_1,\ldots,H_k}$ of $H=(V,E)$.  We
associate a decomposition tree $T=(A,F)$ with $\cD$ where $A =
\{a_1,\ldots,a_k\}$, one node per hypergraph in $\cD$; there is an
edge $a_ia_j \in F$ iff $H_i$ and $H_j$ share a marker vertex.  With
each $a_i$ we also store $V(H_i)$ which includes the marker vertices
and some vertices from $V(H)$. This is stored in a map $\psi : A
\rightarrow \cup_i V(H_i)$.  It is easy to see that the total storage
for the tree and storing the vertex sets is $O(n)$; a marker vertex
appears in exactly two of the hypergraphs of a decomposition and
a vertex of $H$ in exactly one of the hypergraphs in the decomposition.

Given the decomposition tree $T$ and $\psi$ and a node $a_i \in A$, we
can recover the hypergraph $H_i$ (essentially the edges of $H_i$ since
we store the vertex sets explicitly) associated with a node $a_i$ in
$O(p)$ time. For each edge $e$ incident to $a_i$ in $T$, let $C_e$ be
the component of $T-e$ that does not contain $a_i$. $V(H) \cap
(\cup_{a_j \in C_e} \psi(a_j))$ are the set of vertices in $H$ which
are contracted to a single marker vertex in $H_i$ corresponding to the
edge $e$.  We collect all this contraction information and then apply
the contraction to the original hypergraph $H$ to recover the edge set of
$H_i$. It is easy to see that this can be done in $O(p)$ time.

\subsection{Hypercactus representation}
For a hypergraph $H$, a hypercactus representation is a hypercactus
$H^*$ and a function $\phi:V(H) \to V(H^*)$ such that for all
$S\subset V(H)$, $S$ is a mincut in $H$ if and only if $\phi(S)$ is a
mincut in $H^*$. This is a generalization of the
cactus representation when $H$ is a graph.

Note the similarity of \autoref{thm:cutmap} and the definition of the
hypercactus representation. It is natural to ask if there is a
hypercactus representation that is essentially a canonical
decomposition. Indeed, given the canonical decomposition of $H$, Cheng
showed that one can construct a ``structure hypergraph'' that captures
all mincuts~\cite{Cheng}, which Fleiner and Jordan later point
out is a hypercactus representation \cite{FleinerJ}. The process to
construct such a hypercactus representation from a canonical
decomposition is simple. We describe the details for the sake of completeness.

Assume without loss of generality that $\lambda(H) = 1$.  We construct
a hypercactus if the hypergraph is prime or a solid polygon.  If $H$
is a solid polygon, then it consists of a cycle and a hyperedge
containing all the vertices. If the cycle has non-zero capacity, let
$H^*$ to be $H$ with the hyperedge containing all the vertices
removed, and assign a capacity of $\frac{1}{2}$ to each edge of the
cycle.  If the cycle has zero capacity, then let $H^*$ to be a single
hyperedge containing all vertices, the hyperedge has capacity $1$. In
both cases $H^*$ together with the the identity function on $V(H)$
forms a hypercactus representation for $H$. If $H$ is prime, let $V'$
be the set of vertices that induce a trivial mincut, i.e. $v\in V'$
iff $\set{v}$ is a mincut in $H$.  Introduce a new vertex $v_H$, and
let $H^* = (\set{v_H} \cup V', \set{ \set{v_H,v'} | v'\in V'})$, with
 capacity $1$ for each edge; in other words we create a star with
center $v_H$ and leaves in $V'$.  Define $\phi:V(H)\to V(H^*)$ as
\[
\phi(u) =\begin{cases}
u & u\in V'\\
v_H & u\not\in V'
\end{cases}
\]
Then $H^*$ and $\phi$ form a hypercactus representation. 

For the more general case, let $\cD^*=\set{H_1,\ldots,H_k}$ be the
canonical decomposition of $H$. For each $i$, construct hypercactus
representation $(H^*_i,\phi_i)$ of $H_i$ as described earlier.  We
observe that that if $x$ is a marker vertex in $H_i$, then it is also
present in $H^*_i$. If $H_i$ is a solid polygon this is true because
$V(H_i)=V(H^*_i)$. If $H_i$ is prime, then every marker vertex induces
a trivial mincut in $H_i$, hence also preserved in $H^*_i$.  Construct
$H^*$ from $H^*_1,\ldots,H^*_k$ by identifying marker vertices.
This also gives us $\phi: V(H) \to V(H^*)$ by gluing together $\phi_1,\ldots,\phi_k$ naturally. $(H^*,\phi)$ is the desired hypercactus representation.

The construction takes $O(np)$ time and $O(p)$ space.

\begin{theorem}
\label{thm:buildcactus}
A hypercactus representation of a capacitated hypergraph can be found in
$O(n(p+n\log n))$ time and $O(p)$ space for capacitated hypergraphs,
and in $O(np)$ time for uncapacitated hypergraphs.
\end{theorem}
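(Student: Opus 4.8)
The plan is to assemble \autoref{thm:buildcactus} from the pieces that have already been established, so the proof is essentially a bookkeeping argument that stitches together the running-time and space bounds from the earlier sections. First I would invoke \autoref{cor:primedecomp}: a prime decomposition of a capacitated hypergraph is computable in $O(np + n^2\log n)$ time, and in $O(np)$ time in the uncapacitated case. Then I would apply \autoref{thm:candecompfromprimedecomp}, which converts a prime decomposition into a canonical decomposition in $O(np)$ time; this does not change the dominant term. Finally, the hypercactus construction described in this subsection (build $(H^*_i, \phi_i)$ for each member $H_i$ of the canonical decomposition, then identify marker vertices and glue the maps $\phi_i$) takes $O(np)$ time and $O(p)$ space, as stated just before the theorem. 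Summing these three stages gives $O(np + n^2\log n) = O(n(p + n\log n))$ for the capacitated case and $O(np)$ for the uncapacitated case.

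For correctness of the hypercactus construction itself, I would argue as follows. By \autoref{thm:cutmap}, every mincut $S$ of $H$ maps to a mincut of exactly one member $H_i$ via $\phi_{H_i}$, and conversely every mincut of each $H_i$ pulls back to a mincut of $H$. So it suffices to check that each local construction $(H^*_i, \phi_i)$ is a correct hypercactus representation of $H_i$, and that gluing along shared marker vertices preserves this property. For a prime $H_i$, its mincuts are exactly the trivial cuts $\{v\}$ for $v$ in the set $V'$ of ``cut vertices,'' and the star $H^*_i$ with center $v_{H_i}$ and leaves $V'$ has precisely those same sets (plus their complements) as mincuts, while the map $\phi_i$ collapses the non-cut vertices onto the center — one checks the two directions directly. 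For a solid polygon, the stated replacement (drop the spanning hyperedge and halve the cycle capacities, or collapse to a single spanning hyperedge when the cycle capacity is zero) reproduces exactly the mincut family of the polygon. The fact that every marker vertex of $H_i$ survives in $H^*_i$ — because a marker induces a trivial mincut when $H_i$ is prime and because $V(H_i) = V(H^*_i)$ when $H_i$ is a solid polygon — is what makes the gluing well-defined, and gluing along markers is precisely the inverse of the simple-refinement operation, so it preserves the cut correspondence.

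The space bound deserves a separate remark, since a naive implementation stores each $H_i$ explicitly and uses $\Omega(np)$ space. Here I would point to Section~\ref{sec:space}: the canonical decomposition is stored implicitly via its decomposition tree $T$ together with the vertex-set map $\psi$, which is $O(n)$ total, and any individual member $H_i$ can be regenerated in $O(p)$ time when needed. Processing the members one at a time — generating $H_i$, building $(H^*_i, \phi_i)$, recording the $O(|V(H^*_i)|)$-sized local structure, and discarding $H_i$ — keeps the working space at $O(p)$, and the accumulated output hypercactus $H^*$ and map $\phi$ are themselves only $O(n)$ in size. So the whole pipeline runs within $O(p)$ space.

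The main obstacle, if any, is not in the running time (which is a straightforward sum) but in being careful that the ``gluing'' of the local hypercactus representations genuinely yields a hypercactus and genuinely preserves all mincuts — in particular, that identifying marker vertices across prime pieces and polygon pieces does not create a spurious mincut or destroy an existing one, and that the resulting object still satisfies the structural definition of a hypercactus (a cactus followed by hyperedge insertions). This is handled by the observation that each marker survives into the local representation together with \autoref{thm:cutmap}, but it is the one place where a reader could reasonably want more detail; everything else is an assembly of already-proved lemmas.
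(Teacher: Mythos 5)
Your proposal is correct and follows the same route the paper takes: combine \autoref{cor:primedecomp} and \autoref{thm:candecompfromprimedecomp}, run the hypercactus construction described just before the theorem, and appeal to Section~\ref{sec:space} for the $O(p)$ space bound. The extra detail you give on the correctness of the local star/polygon constructions and the marker gluing is a sound elaboration of what the paper attributes to \cite{Cheng} rather than a different argument.
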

\begin{proof}
We combine \autoref{thm:candecompfromprimedecomp}
and \autoref{cor:primedecomp}. The space usage can be
made $O(p)$ based on the discussion in Section~\autoref{sec:space}.
\qed\end{proof}

If $H$ is a graph, the hypercactus representation constructed is a
cactus representation. \autoref{thm:buildcactus} matches the best
known algorithm for cactus representation construction of graphs in
both time and space \cite{Nagamochi2003}, and is conceptually simpler.

Via sparsification we obtain a faster algorithm for uncapacitated hypergraphs.

\begin{theorem}
  A hypercactus representation of an uncapacitated hypergraph can be
  found in $O(p+\lambda n^2)$ time and $O(p)$ space.
\end{theorem}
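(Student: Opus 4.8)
The plan is to combine the sparsification data structure of \autoref{thm:sparsification} with the hypercactus construction of \autoref{thm:buildcactus}, exactly in the spirit of the earlier $O(p + \lambda n^2)$ mincut algorithm. The key point is that the canonical decomposition (and hence the hypercactus) depends only on the collection of mincuts of $H$, and by \autoref{thm:sparsification-cuts} the sparsifier $H_k$ preserves every cut of value less than $k$; in particular, for $k > \lambda$ the mincuts of $H_k$ are exactly the mincuts of $H$, and $\lambda(H_k) = \lambda(H)$. So it suffices to build the hypercactus of a sparsifier with $\sumdeg(H_k) = O(\lambda n)$.

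\medskip\noindent First I would invoke \autoref{thm:sparsification} to build, in $O(p)$ time, the data structure that retrieves $H_k$ in $O(kn)$ time. Then, as in the proof of the $O(p+\lambda n^2)$ mincut theorem, I would use exponential search: for $i = 1, 2, \ldots$ compute $H_{2^i}$ in $O(2^i n)$ time, and run the (uncapacitated) hypercactus algorithm of \autoref{thm:buildcactus} on $H_{2^i}$, which takes $O(n \cdot \sumdeg(H_{2^i})) = O(2^i n^2)$ time since $\sumdeg(H_{2^i}) = O(2^i n)$. I would stop at the first $i$ for which $\lambda(H_{2^i}) < 2^i$; this is detectable since the hypercactus (or its underlying mincut value) reveals $\lambda(H_{2^i})$. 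At that point $2^i > \lambda$, so $H_{2^i}$ has exactly the mincuts of $H$ and the computed hypercactus representation of $H_{2^i}$ is a valid hypercactus representation of $H$. The total running time is $O(p + \sum_{i \le 1 + \lceil \log \lambda \rceil} 2^i n^2) = O(p + \lambda n^2)$, and the space is $O(p)$ by the $O(p)$-space guarantee of \autoref{thm:buildcactus} together with the fact that $\sumdeg(H_{2^i}) = O(\lambda n) = O(p)$.

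\medskip\noindent The main obstacle — really the only point needing a careful sentence — is the justification that a hypercactus representation of $H_k$ (for $k > \lambda$) is a hypercactus representation of $H$. This requires that $H$ and $H_k$ have identical mincuts \emph{as vertex bipartitions}: every mincut $S$ of $H$ has $|\delta_{H_k}(S)| \le |\delta_H(S)| = \lambda < k$, so $\autoref{thm:sparsification-cuts}$ forces $|\delta_{H_k}(S)| \ge \min(k, |\delta_H(S)|) = \lambda$, i.e. $|\delta_{H_k}(S)| = \lambda = \lambda(H_k)$; conversely any mincut $S$ of $H_k$ has value $\lambda(H_k) = \lambda(H) < k$, and since $|\delta_H(S)| \ge |\delta_{H_k}(S)|$ one needs $|\delta_H(S)| \le \lambda$ as well, which follows because $\lambda = \lambda(H)$ is already a lower bound on $|\delta_H(S)|$, giving $|\delta_H(S)| = \lambda$. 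Hence the mincut families coincide and the hypercactus of $H_k$ serves for $H$ verbatim (with the same vertex map $\phi$). The remaining bookkeeping — that the exponential search terminates correctly and that $\lambda(H_{2^i})$ strictly drops below $2^i$ exactly when $2^i$ crosses $\lambda$ — is identical to the earlier mincut argument and I would simply refer back to it.

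\begin{proof}
  Using \autoref{thm:sparsification} we build, in $O(p)$ time and $O(p)$ space, a data structure that returns the $k$-sparsifier $H_k$ in $O(kn)$ time, with $\sumdeg(H_k) = O(kn)$. For any $k > \lambda$ we claim $H$ and $H_k$ have the same mincuts and $\lambda(H_k) = \lambda$. Indeed, if $S$ is a mincut of $H$ then $|\delta_{H_k}(S)| \le |\delta_H(S)| = \lambda < k$, so by \autoref{thm:sparsification-cuts} $|\delta_{H_k}(S)| \ge \min(k, \lambda) = \lambda$, hence $|\delta_{H_k}(S)| = \lambda$; thus $\lambda(H_k) \le \lambda$ and every mincut of $H$ is a mincut of $H_k$. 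Conversely, for any $S$, $|\delta_H(S)| \ge \lambda$ and $|\delta_{H_k}(S)| \le |\delta_H(S)|$, so $\lambda(H_k) \ge \lambda$; combined with the above, $\lambda(H_k) = \lambda$ and every mincut of $H_k$ (value $\lambda < k$) satisfies $|\delta_H(S)| = \lambda$, i.e. is a mincut of $H$. So the mincut families are identical as vertex bipartitions, and therefore a hypercactus representation $(H_k^*, \phi)$ of $H_k$ is also a hypercactus representation of $H$.

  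We do not know $\lambda$ in advance, so we use exponential search as in the proof that the mincut can be found in $O(p + \lambda n^2)$ time. For $i = 1, 2, \ldots$ we retrieve $H_{2^i}$ in $O(2^i n)$ time and run the uncapacitated hypercactus algorithm of \autoref{thm:buildcactus} on $H_{2^i}$; since $\sumdeg(H_{2^i}) = O(2^i n)$, this takes $O(2^i n^2)$ time and $O(2^i n) = O(p)$ space. The construction reveals $\lambda(H_{2^i})$, and for $2^i \le \lambda$ we have $\lambda(H_{2^i}) = 2^i$ (the sparsifier caps every cut at $2^i$ and $H$ has mincut value $\lambda \ge 2^i$), while for the first $i$ with $2^i > \lambda$ we get $\lambda(H_{2^i}) = \lambda < 2^i$. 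Hence we stop at exactly that $i$ and output $(H_{2^i}^*, \phi)$, which by the claim is a hypercactus representation of $H$. The total time is $O(p + \sum_{i=1}^{1 + \lceil \log \lambda \rceil} 2^i n^2) = O(p + \lambda n^2)$, and the space is $O(p)$.
\qed\end{proof}
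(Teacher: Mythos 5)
Your proof is correct and follows essentially the same route as the paper: use \autoref{thm:sparsification-cuts} to argue that the $k$-sparsifier for $k > \lambda$ has exactly the same mincuts as $H$, then apply the hypercactus algorithm of \autoref{thm:buildcactus} to a sparsifier with $\sumdeg = O(\lambda n)$. The only organizational difference is that the paper first computes $\lambda$ (via the earlier $O(p+\lambda n^2)$ mincut algorithm) and then builds a single $(\lambda+1)$-sparsifier and one hypercactus, whereas you interleave the hypercactus construction into the exponential search — same asymptotic cost either way; as a small nit, your parenthetical ``the sparsifier caps every cut at $2^i$'' overstates it, since only the trivial cut around the last vertex of the MA-ordering is guaranteed to have value at most $k$ in $H_k$, which is all that is needed for $\lambda(H_k) \le k$.
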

\begin{proof}
  Find the mincut value $\lambda$, and a $(\lambda+1)$-sparsifier $H'$
  of $H$ in $O(p+\lambda n^2)$ time. \autoref{thm:sparsification-cuts}
  shows that every mincut in $H$ is a mincut
  in $H'$, and vice versa. Therefore the hypercactus for $H'$ is a
  hypercactus for $H$. Apply \autoref{thm:buildcactus} to $H'$.
\qed\end{proof}

\section{Near-linear time $(2+\e)$ approximation for mincut}
\label{sec:approx-mincut}
Matula gave an elegant use of MA-ordering to obtain a
$(2+\e)$-approximation for the mincut in an uncapacitated undirected
graph in $O(m/\e)$ time~\cite{Matula1993}. Implicit in his paper is an
algorithm that gives a $(2+\e)$-approximation for capacitated graphs
in $O(\frac{1}{\e}(m \log n +n \log^2 n))$ time; this was explicitly
pointed out by Karger~\cite{Karger1995}. Here we extend Matula's idea
to hypergraphs. We describe an algorithm that outputs a
$(2+\e)$-approximation for hypergraph mincut in $O(\frac{1}{\e}(p \log
n + n \log^2 n))$ time for capacitated case, and in $O(p/\e)$ time for
the uncapacitated case.

We will assume without loss of generality that there is no edge
that contains all the nodes of the given hypergraph.
Let $v_1,\ldots,v_n$ be a MA-ordering of the given capacitated hypergraph
$H$. Given a non-negative number $\alpha$, a set of consecutive
vertices in the ordering $v_a,v_{a+1},\ldots,v_b$ where $a \le b$ is
called $\alpha$-tight if $d(V_i,v_{i+1})\geq \alpha$ for all $a\leq
i<b$. The {\em maximal} $\alpha$-tight sets partition $V$.  We obtain
a new hypergraph by contracting each maximal $\alpha$-tight set into a single
vertex. Edges that become singletons in the contraction are discarded.
We call the contracted hypergraph an $\alpha$-contraction.
Note that the contraction depends both on $\alpha$ and the specific
MA-ordering.

One important aspect of $\alpha$-contraction is that the resulting
hypergraph has sum-degree at most $2\alpha n$ which allows for sparsifying
the hypergraph by appropriate choice of $\alpha$.

\begin{lemma}
\label{lem:contractsize}
Let $H'$ be an $\alpha$-contraction of a given hypergraph $H$.
Then $\sumdeg(H') \le 2 \alpha n$ where $n$ is the number of nodes of $H$.
\end{lemma}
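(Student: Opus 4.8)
The plan is to bound $\sumdeg(H')$ by charging each unit of contribution of an edge in $H'$ to a pair (vertex of $H'$, the ``cut value'' crossed at the boundary of the $\alpha$-tight set it belongs to). More concretely, let $U_1, U_2, \ldots, U_t$ be the maximal $\alpha$-tight sets listed in the order their first elements appear in the MA-ordering $v_1, \ldots, v_n$, and let $w_j$ be the vertex of $H'$ obtained by contracting $U_j$. For an edge $e'$ of $H'$ coming from an edge $e$ of $H$, the contribution of $e'$ to $\sumdeg(H')$ is $c(e) \cdot |\{ j : e \cap U_j \neq \emptyset \}|$. Rather than tracking which sets an edge hits, I would instead bound $\sumdeg(H')$ by summing over the ``transitions'' in the MA-ordering: write $\sumdeg(H') \le \sum_{j=1}^{t} c(\delta(U_1 \cup \cdots \cup U_{j-1}))$-type quantities. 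The cleanest route is: each edge $e'$ in $H'$ incident to $w_j$ but not equal to a singleton means $e$ has a vertex in $U_j$ and a vertex outside $U_1 \cup \cdots \cup U_j$ \emph{or} in some earlier $U_i$; in all cases $e$ contributes to $d(V_{i-1}, v_i; H)$ for the index $i$ that is the first vertex of $U_j$, if we are careful about maximality.

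First I would isolate the key structural fact: if $U_j = \{v_a, \ldots, v_b\}$ is a maximal $\alpha$-tight set, then by maximality of $U_j$ to the left we have $d(V_{a-1}, v_a; H) < \alpha$ when $a > 1$ (and trivially $0$ when $a=1$). The main step is then to show that the total contribution to $\sumdeg(H')$ of all edges of $H'$, restricted to a single contracted vertex $w_j$, is at most $2\alpha$. An edge $e$ of $H$ that survives contraction and touches $U_j$ either has its head inside $U_j$, or has its head in some earlier set. I would partition the surviving edges incident to $w_j$ into (i) those whose head in the MA-ordering lies in $U_j$ — these are ``forward'' edges from $U_j$ and each such $e$ must reach outside $U_j$, so it contributes to $d(V_{i-1}, v_i; H)$ for some $v_i \in U_j$ where $v_i$ is the \emph{last} vertex of $e$ inside $U_j$... — and (ii) backward edges, whose contribution I charge to the set containing the head. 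Summing over all $j$, the total of type-(i) contributions is bounded using the $\alpha$-tightness inequalities (the increments $d(V_{i-1},v_i)$ along the chain) together with the left-maximality bound $< \alpha$; each surviving edge of $H'$ gets counted at most twice (once for each of its two ``boundary'' contracted endpoints in $H'$, since after contraction $e'$ is effectively an edge and its degree contribution is $\sum_j [\text{$e$ hits }U_j]$, which telescopes against boundary crossings), giving the factor of $2$.

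I expect the main obstacle to be making the charging argument on the contracted edges precise: a single edge $e$ of $H$ can straddle many consecutive $\alpha$-tight sets, so its contribution to $\sumdeg(H')$ is not obviously bounded by a single local quantity $d(V_{i-1},v_i)$. The resolution I would pursue is to charge, for each consecutive pair $(U_j, U_{j+1})$ that $e'$ spans, one unit of $c(e)$ to the ``gap'' between $U_j$ and $U_{j+1}$; the total charge to this gap over all edges is exactly $d(V_{\text{end of }U_j}, v_{\text{start of }U_{j+1}}; H)$, which is $< \alpha$ by left-maximality of $U_{j+1}$. Since $e'$ with $k+1$ contracted endpoints spans $k$ gaps and contributes $k+1 \le 2k$ (for $k \ge 1$) to the sum of degrees, this yields $\sumdeg(H') \le 2 \sum_{j=1}^{t-1} d(\cdots) < 2 \alpha (t-1) \le 2\alpha n$. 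I would double-check the edge case of edges whose head is the very first vertex and edges that become singletons (discarded, hence contribute $0$), and confirm that the inequality $k+1 \le 2k$ together with $t - 1 < n$ closes the bound cleanly.
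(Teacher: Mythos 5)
Your overall plan is the right one --- pay for $\sumdeg(H')$ via the factor-of-two inequality $|e'| \le 2(|e'|-1)$ and bound the cross-boundary mass by $\alpha$ per ``unit'' --- but the final charging step as you state it does not work, and the issue is not just bookkeeping.

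You claim that the total charge assigned to the gap between $U_j$ and $U_{j+1}$ is exactly $d(V_{\text{end of }U_j}, v_{\text{start of }U_{j+1}})$, and that this is $<\alpha$ by left-maximality. Left-maximality does give $d(A_j, v_a) < \alpha$ where $A_j = V_{\text{end of }U_j}$ and $v_a$ is the first vertex of $U_{j+1}$. But the edges crossing that gap need not be incident to $v_a$: an edge of $H$ can have an endpoint in $A_j$ and an endpoint at any later vertex of $U_{j+1}$ (or beyond), so the relevant quantity is $d(A_j, X_{j+1})$, not $d(A_j, v_a)$, and maximality alone says nothing about it. Worse, if you instead interpret ``crossing the gap'' as having an endpoint on each side of $A_j$, the total is the cut value $c(\delta(A_j))$, which is not bounded by $\alpha$ at all. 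In either reading the claimed identity and the claimed bound both fail.

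The missing ingredient is the MA-ordering inequality itself, used pointwise inside $U_{j+1}$: since $v_1,\dots,v_n$ is an MA-ordering and $v_a$ is the first vertex of $U_{j+1}$, for every later $x\in U_{j+1}$ we have $d(A_j, x) = d(V_{a-1}, x) \le d(V_{a-1}, v_a) < \alpha$. With this per-vertex bound you should charge per vertex rather than per gap: sum $d(A_j,x)$ over $x\in U_{j+1}$ and over $j$, which is at most $\alpha n$ because there are $n$ vertices in total. Combined with
\[
\sumdeg(H') = \sum_{e'} c(e')|e'| \le 2\sum_{e'} c(e')(|e'|-1) = 2\sum_{j} d(A_j, X_{j+1}) \le 2\sum_j \sum_{x\in X_{j+1}} d(A_j,x) < 2\alpha n,
\]
this closes the bound. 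This is exactly the paper's argument; yours differs in that it tries to get by with left-maximality alone and a per-gap charge, which is the step that breaks.
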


\begin{proof}
  Assume the $\alpha$-tight partition of $V$ is $X_1,\ldots,X_h$ where the
  ordering of the parts is induced by the MA-ordering.  For
  $1 \le i \le h$, let $A_i =\bigcup_{j=1}^i X_j$. Since each $X_i$ is
  a maximal $\alpha$-tight set, $d(A_i,x)<\alpha$ for all $x\in
  X_{i+1}$. Let $E'$ be the set of edges in $H'$. For a given edge $e \in H$
  let $e'$ be the corresponding edge in $H'$. Note that $|e'| \ge 2$ and
  $c(e') = c(e)$ since the capacity is unchanged. We have the following
  set of inequalities:
  \begin{align*}
    \sumdeg(H')  &= \sum_{e' \in E'} c(e') |e'|\\
                 &\le  2\sum_{e' \in E'} c(e') (|e'|-1) \\
                 &= 2\sum_{i=1}^{h-1} d(A_i, X_{i+1}) \\
                 &\le  2\sum_{i=1}^{h-1} \sum_{x \in X_{i+1}} d(A_i, x) \\
                 &< 2 \sum_{i=1}^{h-1} \alpha |X_{i+1}| \le 2 \alpha n. 
  \end{align*}
\qed\end{proof}

The second important property of $\alpha$-contraction 
is captured by the following lemma.
\begin{lemma}
\label{lem:alphatight}
If $v_i$ and $v_j$ are in a $\alpha$-tight set
then $\lambda(v_i,v_j)\geq \alpha$.
\end{lemma}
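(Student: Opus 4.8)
The plan is to exploit the key structural fact about MA-orderings, namely \autoref{lem:connectivityMAorder}, which says $\lambda(v_{i-1},v_i) \geq d(V_{i-1},v_i)$ for all $i$. First I would observe that it suffices to prove the statement for two consecutive vertices $v_i, v_{i+1}$ that lie in a common $\alpha$-tight set, since edge-connectivity $\lambda(\cdot,\cdot)$ satisfies the standard triangle-type inequality $\lambda(u,w) \geq \min\{\lambda(u,v), \lambda(v,w)\}$ for any intermediate vertex $v$; chaining this along the consecutive pairs $v_a, v_{a+1}, \ldots, v_b$ inside the $\alpha$-tight set reduces the general case to the consecutive one.

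For the consecutive case, suppose $v_i$ and $v_{i+1}$ are such that $d(V_i, v_{i+1}) \geq \alpha$ (which is exactly what it means for the pair to be part of an $\alpha$-tight set). Then by \autoref{lem:connectivityMAorder} applied with index $i+1$, we get $\lambda(v_i, v_{i+1}) \geq d(V_i, v_{i+1}) \geq \alpha$, which is precisely the claim. So the proof is essentially a one-line application of the already-established lemma together with the reduction via the transitivity of edge-connectivity.

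The only step requiring a small amount of care is justifying the inequality $\lambda(u,w) \geq \min\{\lambda(u,v), \lambda(v,w)\}$ for hypergraphs. This follows from the submodularity/posimodularity of the cut function: if $S$ is a $u$-$w$ mincut, then $v$ lies on one side of it, say the same side as $u$; then $S$ is also a $v$-$w$ cut, so $|\delta(S)| \geq \lambda(v,w)$. Either way $|\delta(S)| \geq \min\{\lambda(u,v),\lambda(v,w)\}$. I expect this to be the only potential obstacle, and it is quite mild — it is a standard fact that holds for any symmetric submodular cut function, so I would either cite it or dispatch it in a sentence. I would write the proof as: reduce to consecutive pairs via transitivity of $\lambda$, then invoke \autoref{lem:connectivityMAorder} and the defining inequality $d(V_i,v_{i+1}) \geq \alpha$ of an $\alpha$-tight set to conclude.
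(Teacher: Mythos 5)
Your proposal is correct and matches the paper's own proof almost exactly: both reduce to consecutive pairs using the inequality $\lambda(a,c)\geq\min(\lambda(a,b),\lambda(b,c))$ and then invoke \autoref{lem:connectivityMAorder} together with the definition of an $\alpha$-tight set. The paper simply cites the triangle-type inequality as a known fact without rederiving it, whereas you sketch a justification, but the argument is the same.
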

\begin{proof}
  Assume without loss of generality that $i<j$.  Consider any $k$ such
  that $i \le k < j$.  We have $d(V_k,v_{k+1}) \ge \alpha$ because $i$
  and $j$ are in the same $\alpha$-tight set. By
  \autoref{lem:connectivityMAorder}, $\lambda(v_k,v_{k+1})\geq
  d(V_k,v_{k+1}) \ge \alpha$. By induction and using the fact that for
  any $a,b, c \in V$, $\lambda(a,c)\geq
  \min(\lambda(a,b),\lambda(b,c))$, we have $\lambda(v_i,v_j) \geq
  \alpha$.
\qed\end{proof}

\autoref{fig:approx-mincut} describes a simple recursive algorithm for
finding an approximate mincut.

\begin{figure}
\begin{algo}
\textsc{Approximate-MinCut}$(H)$\+
\\   if ($|V(H)|\geq 2$)\+
\\       $\delta = \min_{v\in V} \delta(v)$
\\       if $(\delta = 0)$ \+
\\          return $0$ \-
\\       $\alpha \gets \frac{1}{2+ \e}\delta$
\\       Compute MA-ordering of $H$ 
\\       $H'\gets $ $\alpha$-contraction of $H$ 
\\       $\lambda' \gets$ \textsc{Approximate-MinCut}($H'$)
\\       return  $\min(\delta,\lambda')$ \-
\\   else\+
\\      return $\infty$
\-
\end{algo}
  \caption{Description of $(2+\e)$-approximation algorithm. It is easy
to remove the recursion.}
  \label{fig:approx-mincut}
\end{figure}

\begin{theorem}
$\textsc{Approximate-MinCut}$ outputs a $(2+\e)$-approximation to
an input hypergraph $H$ and can be implemented in
$O(\e^{-1} (p + n \log n) \log \frac{ n \delta(H)}{\lambda (H)}))$ time 
for capacitated hypergraphs and in $O(\e^{-1}p)$ time for uncapacitated 
hypergraphs. 
\end{theorem}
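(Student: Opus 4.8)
The plan is to analyze \textsc{Approximate-MinCut} in two parts: correctness of the approximation ratio, and the running time bound. For correctness I would argue by induction on $|V(H)|$. The key structural fact is that the $\alpha$-contraction with $\alpha = \frac{1}{2+\e}\delta$ does not destroy any cut of value less than $\alpha$: by \autoref{lem:alphatight}, any two vertices merged into the same $\alpha$-tight set have local connectivity at least $\alpha$, so any cut separating them has value at least $\alpha$; hence every cut of $H$ of value $< \alpha$ survives as a cut of $H'$ with the same value. Therefore $\lambda(H') \ge \min(\alpha, \lambda(H))$, and in fact if $\lambda(H) < \alpha$ then $\lambda(H') = \lambda(H)$. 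Now observe the algorithm returns $\min(\delta, \lambda')$ where $\lambda'$ is (inductively) a $(2+\e)$-approximation to $\lambda(H')$. Two cases: if $\lambda(H) \ge \alpha = \frac{1}{2+\e}\delta(H) \ge \frac{1}{2+\e}\lambda(H)$ — wait, that's automatic — the point is $\delta(H) \le (2+\e)\lambda(H)$ is \emph{not} generally true, so instead I case on whether the global mincut is "captured by a singleton at this level." If some vertex $v$ has $\delta(v) = \lambda(H)$, then $\delta = \lambda(H)$ and we are exact. Otherwise every singleton cut exceeds $\lambda(H)$, so $\delta(H) > \lambda(H)$; since there is no all-containing edge, one can show $\lambda(H) \le \delta(H)$ trivially and that the true mincut $S$ (non-trivial) has $\lambda(H) = c(S) \ge$ something — the crucial inequality is that a non-trivial mincut value is at least $\alpha$ exactly when $\lambda(H) \ge \frac{1}{2+\e}\delta(H)$, equivalently $\delta(H) \le (2+\e)\lambda(H)$, and \emph{this} is the Matula inequality I must establish: in an MA-ordering, or just combinatorially, the minimum degree $\delta(H)$ is at most $(2+\e)\lambda(H)$ whenever the mincut is non-trivial — actually the clean statement is $\delta(H) \le 2\lambda(H)$ when no singleton is a mincut is \emph{false} in general, so the real argument must be: either $\delta \le (2+\e)\lambda(H)$, in which case $\min(\delta,\lambda') \le \delta \le (2+\e)\lambda(H)$ and also $\min(\delta,\lambda') \ge \lambda(H)$ since both $\delta$ and $\lambda(H') \ge \lambda(H)$ dominate $\lambda(H)$ (using $\lambda' \ge \lambda(H')\cdot\frac{1}{2+\e}$... no, $\lambda'\ge \lambda(H)$ needs care) — or $\delta > (2+\e)\lambda(H)$, in which case the mincut $S$ of $H$ is non-trivial and, by the $\alpha$-tight survival argument above with $\alpha = \frac{\delta}{2+\e} > \lambda(H)$, we get $\lambda(H') = \lambda(H)$, so the recursion preserves the exact mincut and induction finishes.

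The cleaner way to organize correctness: let $\lambda = \lambda(H)$. By \autoref{lem:alphatight} and the $\alpha$-contraction definition, $\lambda(H') \ge \min(\alpha,\lambda)$, and moreover if $\lambda < \alpha$ then $\lambda(H')=\lambda$. Set $\alpha=\frac{\delta}{2+\e}$. If $\delta > (2+\e)\lambda$ then $\alpha>\lambda$, so $\lambda(H')=\lambda$; by induction $\lambda' \in [\lambda,(2+\e)\lambda]$, hence $\min(\delta,\lambda')=\lambda'\in[\lambda,(2+\e)\lambda]$. If $\delta \le (2+\e)\lambda$, then $\delta \ge \lambda$ always (a singleton is a cut), so $\delta\in[\lambda,(2+\e)\lambda]$; and $\lambda' \ge \lambda(H') \ge \min(\alpha,\lambda)$, but we only need $\lambda'\ge\lambda$, which holds because every cut of $H'$ lifts to a cut of $H$ of the same value hence $\lambda(H')\ge\lambda$, and $\lambda'\ge\lambda(H')/(2+\e)$... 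I actually need $\lambda'\ge\lambda$, so I should instead note $\lambda'$ is returned as $\min$ of a chain of $\delta$-values and an $\infty$, each of which is a genuine cut value in the corresponding contracted graph, each lifting to a cut value in $H$, so $\lambda'\ge\lambda$ directly. Thus $\min(\delta,\lambda')\in[\lambda,(2+\e)\lambda]$. The base cases $|V(H)|<2$ (return $\infty$) and $\delta=0$ (return $0$, the hypergraph is disconnected) are immediate.

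For the running time, each recursive call performs: computing the minimum degree and the MA-ordering, which costs $O(p + n\log n)$ for capacitated hypergraphs (Section on vertex orderings) and $O(p)$ uncapacitated; then the $\alpha$-contraction, also $O(p)$. By \autoref{lem:contractsize}, the new hypergraph $H'$ has $\sumdeg(H')\le 2\alpha n = \frac{2n\delta}{2+\e} \le \frac{2n\,\delta(H)}{2+\e}$, and since we measure size by sum-degree and each recursion shrinks $\alpha$ geometrically relative to $\delta(H')\le\sumdeg(H')$... more precisely, after one step $\delta(H')\le \sumdeg(H')\le 2\alpha n$, so $\alpha$ at the next level is at most $\frac{2n}{2+\e}\cdot\frac{\alpha_{\text{prev}}}{?}$ — the recursion depth is $O(\log \frac{n\delta(H)}{\lambda(H)})$ because after the first contraction $\sumdeg$ is $O(n\delta(H))$ and thereafter each level multiplies the current $\delta$ by a factor $\le \frac{2n}{2+\e}\cdot\frac{1}{n}$? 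Not quite — the honest accounting is that the recursion terminates once $\delta$ drops to $\lambda$ or the graph collapses, and since $\sumdeg$ (hence an upper bound on the number of edges times capacities) decreases by a constant factor $\frac{2}{2+\e}<1$ times... The main obstacle, and the step I'd spend the most care on, is pinning down exactly why the recursion depth is $O(\e^{-1}\log\frac{n\delta(H)}{\lambda(H)})$: the factor $\frac{2}{2+\e} = 1 - \frac{\e}{2+\e} \approx 1-\e/2$ means a geometric decrease with ratio $1-\Theta(\e)$, so reaching from $n\delta(H)$ down to $\lambda(H)$ takes $O(\e^{-1}\log\frac{n\delta(H)}{\lambda(H)})$ levels — but one must verify the recursion actually \emph{does} stop by then, i.e. that the contracted graph is non-trivial at each stage (at least $2$ vertices) as long as the mincut hasn't been found, which follows because contracting all vertices into one would require the whole vertex set to be $\alpha$-tight, forcing $\lambda(H)\ge\alpha>\lambda(H)/(2+\e)\cdot(2+\e)$... i.e. $\lambda\ge\alpha$, contradicting $\delta>(2+\e)\lambda$ in the interesting case, and in the other case we've already captured $\lambda$. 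Summing $O(p+n\log n)$ over $O(\e^{-1}\log\frac{n\delta(H)}{\lambda(H)})$ levels gives the claimed $O(\e^{-1}(p+n\log n)\log\frac{n\delta(H)}{\lambda(H)})$ for the capacitated case; for the uncapacitated case, since all capacities are $1$ and $\sumdeg$ is integral, $\delta(H)\le n$ so the depth is $O(\log n)$, but additionally the total work telescopes as a geometric sum $\sum_i p_i$ with $p_{i+1}\le \frac{2}{2+\e}p_i$... actually $p_{i+1}\le 2\alpha_i n$ and $\alpha_i\le \delta_i/2$ so $p_{i+1}\le \delta_i n\le p_i$, not obviously geometric, so for the uncapacitated $O(p/\e)$ bound one uses that after the very first contraction $\sumdeg = O(n\delta(H))\le O(n^2)$ and then... the honest route is that $\sumdeg$ decreases by factor $\frac{2}{2+\e}$ \emph{of $\delta\cdot n$} each round giving $\sum_i \sumdeg(H_i) = O(\sumdeg(H_0)/(1-\frac{2}{2+\e})) = O(p/\e)$ after noting $\sumdeg(H_0)=p$ and $\sumdeg(H_{i+1})\le \frac{2}{2+\e}\cdot(\text{something}\le\sumdeg(H_i))$ — establishing this telescoping cleanly is the crux of the uncapacitated bound and is where I'd focus the remaining effort.
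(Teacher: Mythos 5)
Your correctness argument is sound and matches the paper's proof in substance. You split on whether $\delta \le (2+\e)\lambda(H)$ while the paper splits on whether $\lambda(H') = \lambda(H)$, but these are the same argument, and your observation that $\lambda' \ge \lambda(H)$ follows because each returned value is the minimum of genuine cut values that lift to $H$ is exactly the right way to close the lower-bound side.

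The running-time analysis is where you leave a real gap, and you flag it yourself. The clean chain you are circling but never quite write down is:
\[
\sumdeg(H') \;\le\; 2\alpha n \;=\; \frac{2}{2+\e}\,\delta(H)\,n \;\le\; \frac{2}{2+\e}\,\sumdeg(H),
\]
where the last step uses $n\,\delta(H)\le \sum_{v} \delta(v) = \sumdeg(H)$. In your own accounting you replaced $\alpha_i=\delta_i/(2+\e)$ by the weaker bound $\alpha_i\le\delta_i/2$ and concluded $p_{i+1}\le \delta_i n\le p_i$, ``not obviously geometric''---but you threw away the factor yourself: keeping $\alpha_i$ exact gives $p_{i+1}\le \frac{2}{2+\e}p_i$ directly. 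With that, the uncapacitated bound is just the geometric series $\sum_i \bigl(\tfrac{2}{2+\e}\bigr)^i p = O(p/\e)$, and no separate argument about $\delta(H)\le n$ is needed. For the capacitated depth bound, the other missing ingredient is a lower bound that stops the decay: for any non-trivial $H''$ in the recursion, contraction does not decrease the mincut, so $\delta(H'')\ge\lambda(H'')\ge\lambda(H)$ and hence $\sumdeg(H'')\ge |V(H'')|\,\lambda(H)\ge 2\lambda(H)$. Combining the geometric decrease (ratio $\tfrac{2}{2+\e}=1-\Theta(\e)$) starting from $\sumdeg\le \tfrac{2}{2+\e}n\delta(H)$ after the first contraction with this floor gives recursion depth $O(\e^{-1}\log\tfrac{n\delta(H)}{\lambda(H)})$, and multiplying by the $O(p+n\log n)$ per-level cost finishes the capacitated bound. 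These are the two lines you said you would ``spend more care on''; they are exactly what the paper supplies, and without them the time bounds are asserted rather than proved.
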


\begin{proof}
  We first argue about the termination and run-time. From
  \autoref{lem:contractsize}, $\sumdeg(H') \le \frac{2}{2+\e} n
  \delta(H)$.  Since $\sumdeg(H) \ge n \delta(H)$, we see that each
  recursive call reduces the sum degree by a factor of
  $\frac{2}{2+\e}$.  This ensures termination. 

  If the hypergraph is uncapacitated, the running time of each
  iteration is bounded by the sum degree. The sum degree reduces by a
  factor of $\frac{2}{2+\e}$ in each iteration. Hence $O(\sum_i^\infty
  (2/(2+\e))^i p)=O(\e^{-1}p)$ upper bounds the running time for
  uncapacitated hypergraphs.

  We now consider the case when the hypergraph is capacitated.  Let
  $H''$ be any non-trivial hypergraph (that has at least two vertices)
  that arises in the recursion.  The mincut value does not
  reduce by contraction and hence $\lambda(H'') \ge \lambda(H)$ which
  in particular implies that $\delta(H'') \ge \lambda(H)$, and hence
  $\sumdeg(H'') \ge 2 |V(H'')| \lambda(H)$.  After the first recursive
  call the sum degree is at most $\frac{2}{2+\e} n \delta(H)$. Thus
  the total number of recursive calls is $O(\e^{-1}\log(\frac{n
    \delta(H)}{\lambda(H)}))$. The work in each call is dominated by
  the time to compute an MA-ordering which can be done in $O(p + n
  \log n)$. This time gives the desired upper bound on the run-time of
  the algorithm for capacitated hypergraphs.

  We now argue about the correctness of the algorithm which is by
  induction on $n$. It is easy to see that the algorithm correctly
  outputs the mincut value if $n = 1$ or if $\delta = 0$. Assume
  $n \ge 2$ and $\delta(H) > 0$. The number of vertices in $H'$ is
  strictly less than $n$ if $\delta(H) > 0$ since the sum degree
  strictly decreases. Since contraction does not reduce the minimum
  cut value, $\lambda(H') \ge \lambda(H)$. By induction, $\lambda(H')
  \le \lambda' \le (2+\e) \lambda(H')$.  If $\lambda(H') =
  \lambda(H)$ then the algorithm outputs a $(2+\e)$-approximation
  since $\delta \ge \lambda(H)$.  The more interesting case is if
  $\lambda(H') > \lambda(H)$. This implies that there are two distinct
  nodes $x$ and $y$ in $H$ such that $\lambda(x,y,H) = \lambda(H)$ and
  $x$ and $y$ are contracted together in the $\alpha$-contraction. By
  \autoref{lem:alphatight}, $\lambda(x,y,H) \ge \alpha =
  \frac{1}{2+\e}\delta$ which implies that $\delta \le (2+\e)
  \lambda(H)$. Since the algorithm returns $\min(\delta, \lambda')$ we
  have that the output is no more than $(2+\e)\lambda(H)$. 

\qed\end{proof}

Since $\delta$ can be much larger than $\lambda$ in a capacitated hypergraph, 
we can preprocess the hypergraph to reduce $\delta$ to at most $n\lambda$
to obtain a strongly polynomial run time.

\begin{lemma}
\label{thm:reducecapacity}
Let $\beta = \min_{i > 1} d(V_{i-1},v_i)$ for a given
MA-ordering $v_1,\ldots,v_n$ of a capacitated hypergraph $H$. 
Then $\beta \le \lambda(H) \le n \beta$.
\end{lemma}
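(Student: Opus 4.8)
The plan is to prove the two inequalities $\beta \le \lambda(H)$ and $\lambda(H) \le n\beta$ separately, using nothing beyond the defining inequality of an MA-ordering and the obvious fact that every edge counted in a quantity $d(V_{i-1},v_i)$ automatically crosses any cut $S$ with $V_{i-1}$ on one side and $v_i$ on the other. No trimming or sparsification machinery is needed; the point is simply that an MA-ordering pins $\lambda(H)$ between its ``smallest adjacency increment'' $\beta$ and $n$ times that quantity. We assume $n \ge 2$, so that $\beta$ is well defined.

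For the lower bound I would take an arbitrary cut $S$ (proper, nonempty) and produce an index $i>1$ with $d(V_{i-1},v_i) \le c(S)$. Let $i$ be the least index such that $V_i$ is not entirely contained in one side of the bipartition $(S, V\setminus S)$. Since $V_1 = \{v_1\}$ lies in one side we have $i \ge 2$, and $v_1,\ldots,v_{i-1}$ all lie on one side (say, $S$, without loss of generality) while $v_i$ lies on the other. Every edge $e$ with $e \cap V_{i-1} \neq \emptyset$ and $v_i \in e$ meets both $S$ and $V\setminus S$, hence $e \in \delta(S)$; summing capacities gives $d(V_{i-1},v_i) \le c(S)$, so $\beta \le d(V_{i-1},v_i) \le c(S)$. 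Taking the minimum over all cuts $S$ yields $\beta \le \lambda(H)$.

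For the upper bound I would fix an index $k$ with $2 \le k \le n$ attaining the minimum, i.e.\ $d(V_{k-1},v_k) = \beta$, and analyze the prefix cut $\delta(V_{k-1})$; this is a genuine cut since $v_1 \in V_{k-1}$ and $v_n \notin V_{k-1}$. Applying the MA-ordering inequality at position $k$ gives $d(V_{k-1},v_j) \le d(V_{k-1},v_k) = \beta$ for every $j \ge k$. Now every edge $e \in \delta(V_{k-1})$ contains at least one vertex $v_j$ with $j \ge k$, and for each such $v_j$ the edge contributes $c(e)$ to $d(V_{k-1},v_j)$; therefore $c(\delta(V_{k-1})) \le \sum_{j=k}^{n} d(V_{k-1},v_j) \le (n-k+1)\,\beta \le n\beta$, whence $\lambda(H) \le c(\delta(V_{k-1})) \le n\beta$. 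The only step that needs a moment's care is this last counting inequality, where an edge of $\delta(V_{k-1})$ having several vertices beyond position $k-1$ is overcounted in the sum $\sum_{j\ge k} d(V_{k-1},v_j)$ --- but that overcounting only strengthens the inequality in the direction we want, so it causes no difficulty; this is the main (and rather mild) obstacle in the argument.
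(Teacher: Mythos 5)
Your proof is correct. The upper bound argument is essentially identical to the paper's: both pick $k$ (the paper's $i^*$) attaining the minimum, use the MA-ordering inequality $d(V_{k-1},v_j)\le d(V_{k-1},v_k)=\beta$ for $j\ge k$, and bound the prefix cut $\delta(V_{k-1})$ by $\sum_{j\ge k} d(V_{k-1},v_j)\le (n-k+1)\beta\le n\beta$, with the same remark that overcounting in the sum only helps.

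Your lower bound argument, however, takes a genuinely different and more elementary route. The paper derives $\beta\le\lambda(H)$ by invoking \autoref{lem:alphatight} (which in turn rests on \autoref{lem:connectivityMAorder} and ultimately on the pendant-pair property of \autoref{lem:tightorderlasttwo}): since $V$ is a $\beta$-tight set, $\lambda(u,v)\ge\beta$ for every pair $u,v$, hence the global mincut is at least $\beta$. You instead argue directly on an arbitrary cut $S$: take the first index $i$ at which the ordering ``crosses'' $S$, so $V_{i-1}$ lies entirely on one side and $v_i$ on the other (WLOG $V_{i-1}\subseteq S$, using $\delta(S)=\delta(V\setminus S)$), observe that every edge counted in $d(V_{i-1},v_i)$ lies in $\delta(S)$, and conclude $\beta\le d(V_{i-1},v_i)\le c(S)$. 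Your argument is self-contained, avoids the whole $\alpha$-tight/connectivity chain, and in fact establishes $\beta\le\lambda(H)$ for \emph{any} vertex ordering (not only MA-orderings); the MA-ordering structure is used only in the upper bound. The paper's route proves the stronger pairwise-connectivity claim $\lambda(u,v)\ge\beta$, which is genuinely an MA-ordering property, and simply reuses a lemma it needs elsewhere. Both are correct; yours buys elementarity at the cost of proving a weaker intermediate statement, the paper's buys brevity by leaning on existing machinery.
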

\begin{proof}
  From \autoref{lem:alphatight}, $\lambda(u,v)\geq \beta$ for all
  $u,v\in V$ because $V$ is a $\beta$-tight set. Therefore 
  $\lambda(H) \ge \beta$.  Let $i^* = \argmin_{i > 1} d(V_{i-1},v_i)$. Then,
\[
d(V_{i^*-1},V\setminus V_{i^*-1}) \leq \sum_{j=i^*}^n d(V_{i^*-1},v_j) \leq (n+1-i^*) \beta \leq n\beta.
\]
Thus, the cut $(V_{i^*-1}, V\setminus V_{i^*-1})$ has capacity 
at most $n\beta$, and hence $\lambda(H) \le n \beta$.
\qed\end{proof}

Let $\beta$ be the value in \autoref{thm:reducecapacity}, then a
$2n\beta$-contraction of $H$ yields a non-trivial hypergraph $H'$
where $\sumdeg(H') = O(n^2\beta)$. This also implies that
$\delta(H') = O(n^2 \beta)$.  Applying the
$(2+\e)$ approximation algorithm to $H'$ gives us the
following corollary. 

\begin{corollary}
\label{cor:2eapproximation}
A $(2+\e)$ approximation for hypergraph mincut 
can be computed in $O(\e^{-1} (p + n\log n)\log n)$ time for capacitated
hypergraphs, and in $O(\e^{-1}p)$ time for uncapacitated hypergraphs.
\end{corollary}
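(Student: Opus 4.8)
The plan is to combine the $(2+\e)$-approximation algorithm from the preceding theorem with the capacity-reduction preprocessing step from Lemma~\ref{thm:reducecapacity}. The key quantity in the running time of \textsc{Approximate-MinCut} is the ratio $\delta(H)/\lambda(H)$, which can be as large as we like in a capacitated hypergraph; the whole point of the corollary is to replace $H$ by an equivalent instance where this ratio is polynomially bounded in $n$, so that the $\log\frac{n\delta(H)}{\lambda(H)}$ factor becomes $O(\log n)$.

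First I would compute an MA-ordering of $H$ in $O(p + n\log n)$ time and set $\beta = \min_{i>1} d(V_{i-1},v_i)$. By Lemma~\ref{thm:reducecapacity} we have $\beta \le \lambda(H) \le n\beta$. Next I would form $H'$, the $2n\beta$-contraction of $H$ with respect to this MA-ordering. Since $2n\beta > n\beta \ge \lambda(H)$, by Lemma~\ref{lem:alphatight} any two vertices that get contracted together in $H'$ have connectivity at least $2n\beta > \lambda(H)$, so no mincut of $H$ is destroyed; thus $\lambda(H') = \lambda(H)$ and $H'$ is non-trivial (it has at least two vertices). By Lemma~\ref{lem:contractsize}, $\sumdeg(H') \le 2(2n\beta)n = O(n^2\beta)$, and in particular $\delta(H') \le \sumdeg(H') = O(n^2\beta)$. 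The contraction itself, given the ordering, takes $O(p)$ time.

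Then I would run \textsc{Approximate-MinCut} on $H'$. By the preceding theorem this outputs a $(2+\e)$-approximation to $\lambda(H') = \lambda(H)$, and its running time is $O\!\big(\e^{-1}(p + n\log n)\log\frac{n\delta(H')}{\lambda(H')}\big)$. Now $\frac{n\delta(H')}{\lambda(H')} = O\!\big(\frac{n \cdot n^2\beta}{\beta}\big) = O(n^3)$, so the logarithmic factor is $O(\log n)$, giving total time $O(\e^{-1}(p + n\log n)\log n)$ for capacitated hypergraphs; the preprocessing cost $O(p + n\log n)$ is absorbed. For uncapacitated hypergraphs there is nothing to do beyond invoking the theorem directly, which already runs in $O(\e^{-1}p)$ time. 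There is no real obstacle here: the only point requiring a little care is verifying that the $2n\beta$-contraction preserves $\lambda$ exactly (so the approximation guarantee transfers) while still shrinking $\delta$ enough — but both facts follow immediately from Lemmas~\ref{lem:alphatight} and~\ref{lem:contractsize} together with the bound $\lambda(H)\le n\beta$.
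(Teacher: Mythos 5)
Your proposal is correct and matches the paper's approach exactly: both compute $\beta$ via an MA-ordering, apply a $2n\beta$-contraction to obtain $H'$ with $\delta(H') = O(n^2\beta)$ while preserving $\lambda$, and then invoke the theorem on $H'$ so that the $\log\frac{n\delta(H')}{\lambda(H')}$ factor becomes $O(\log n)$. You have in fact spelled out more of the details (non-triviality of $H'$ and $\lambda(H')=\lambda(H)$) than the paper's terse one-line justification does.
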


\begin{remark}
  Suppose we use the Queyranne ordering instead of MA-ordering, and
  define $v_a,\ldots,v_b$ to be $\alpha$-tight if
  $\frac{1}{2}(d(V_i,v_{i+1})+d'(V_i,v_{i+1}))\geq \alpha$ for $a\leq
  i<b$. The algorithm in \autoref{fig:approx-mincut} produces a
  $(2+\e)$-approximation with this modification.
\end{remark}

\section{Concluding Remarks}
We close with some open problems. The main one is to find an algorithm
for hypergraph mincut that is faster than the current one that runs in
$O(np + n^2 \log n)$ time. We do not know a better deterministic
run-time even when specialized to graphs. However we have a randomized
near-linear time algorithm for graphs \cite{Karger00}.  Can Karger's
algorithm be extended to hypergraphs with fixed rank $r$? Recently
there have been several fast $s$-$t$ max-flow algorithms for
undirected and directed graphs. The algorithms for directed graphs
\cite{LeeS14, Madry13} have straight forward implications for hypergraphs
$s$-$t$ cut computation via the equivalent digraph.  However,
hypergraphs have additional structure and it may be feasible to find
faster (approximate) algorithms.

We described a linear-time algorithm to find a maximum-flow between
the last two vertices of a tight-ordering of a hypergraph (the flow is
in the equivalent digraph of the hypergraph). We believe that such a
linear-time algorithm is also feasible for the last two vertices of an
MA-ordering of a hypergraph. Some of the research in this paper was
inspired by work on element connectivity and we refer the reader to
\cite{Chekuri-ec-survey15} for related open problems.

\paragraph{Acknowledgments:} We thank Yosef Pogrow for pointing out a
flaw in the proof of \autoref{thm:sparsification-cuts} in a
previous version of the paper. We also like to thank Tao Du for pointing
out a issue in sentences leading up to \autoref{cor:2eapproximation}.

\bibliographystyle{plain} 
\bibliography{hypergraph}
\end{document}